\let\iref\eqref
\DeclareMathOperator{\card}{card}
\DeclareMathOperator{\lhs}{lhs}
\DeclareMathOperator{\rhs}{rhs}
\DeclareMathOperator{\abs}{abs}
\newcommand{\strlen}[1]{\left\lvert#1\right\rvert}
\newcommand{\Ra}{\Rightarrow}
\newcommand{\lRa}[1]{\mathbin{\prescript{}{\mathit{left}_{#1}\csname prescriptcorrection\detokenize{#1}\endcsname}{\Ra}}}
\newcommand{\rRa}[1]{\mathbin{\prescript{}{\mathit{right}_{#1}\csname prescriptcorrection\detokenize{#1}\endcsname}{\Ra}}}
\makeatletter\@namedef{prescriptcorrection\detokenize{n}}{\mkern-10mu\relax}\makeatother
\newcommand{\singlfam}{sing}
\newcommand{\finlfam}{fin}
\newcommand{\reglfam}{reg}
\newcommand{\evenlfam}{even}
\newcommand{\oddlfam}{odd}
\newcommand{\fa}{FA}
\newcommand{\gfa}{GFA}
\newcommand{\twfa}{TW\fa}
\newcommand{\ietwsfa}{IETWS\fa}
\newcommand{\ietwgfa}{IETW\gfa}
\newcommand{\ling}{LG}
\newcommand{\eling}{E\ling}
\newcommand{\lfam}[2]{\prescript{\mathit{#1}}{\mathit{#2}}{\Phi}}
\let\originalleft\left
\let\originalright\right
\renewcommand{\left}{\mathopen{}\mathclose\bgroup\originalleft}
\renewcommand{\right}{\aftergroup\egroup\originalright}
\newcommand{\state}[1]{\left\langle#1\right\rangle}
\newcommand{\alt}{\mathit{alt}}
\newcommand{\even}{\mathit{even}}
\newcommand{\ieven}{{\mathit{init\mhyphen even}}}
\renewcommand{\iff}{\text{ iff }}
\renewcommand{\implies}{\text{ implies }}
\let\epsilon\varepsilon
\let\emptyset\varnothing
\newcommand{\lsh}{{\Lsh}}
\newcommand{\rsh}{{\Rsh}}
\newcommand{\stl}{\lsh}
\newcommand{\str}{\rsh}
\numberwithin{equation}{definition}
\mathchardef\mhyphen="2D
\title{Input-Erasing Two-Way Finite Automata}
\author{Alexander Meduna
, Dominik Nejedlý, Zbyněk Křivka\corresponding\\
Department of Information Systems\\
Faculty of Information Technology\\
Brno University of Technology\\
Bo\v{z}et\v{e}chova 1/2, 612 00 Brno, Czech Republic\\
meduna@fit.vut.cz, xnejed09@stud.fit.vut.cz, krivka@fit.vut.cz
}
\date{\today}
\begin{document}

\maketitle

\begin{abstract}
    The present paper introduces and studies an alternative concept of two-way finite automata called input-erasing two-way finite automata. Like the original model, these new automata can also move the reading head freely left or right on the input tape. However, each time they read a~symbol, they also erase it from the tape. The paper demonstrates that these automata define precisely the family of linear languages and are thus strictly stronger than the original ones. Furthermore, it introduces a variety of restrictions placed upon these automata and the way they work and investigates the effect of these restrictions on their acceptance power. In particular, it explores the mutual relations of language families resulting from some of these restrictions and shows that some of them reduce the power of these automata to that of even linear grammars or even ordinary finite automata.
\end{abstract}

\begin{keywords}
    input-erasing two-way finite automata, linear languages, even computation
\end{keywords}

\section{Introduction}\label{sec:introduction}

Finite automata, introduced more than eight decades ago in \cite{A_LOGICAL_CALCULUS_MCCULLOCH_1943}, have always fulfilled a crucially important role in computer science both in theory and in practice. It thus comes as no surprise that the theory of computation has defined a great variety of these automata in order to provide every computer science area with the version that fits its needs as optimally as possible. Two-way finite automata, independently introduced in \cite{FINITE_AUTOMATA_DECISION_PROBLEMS_RABIN_1959} and \cite{TWO_WAY_FINITE_AUTOMATA_SHEPHERDSON_1959}, represent significant versions of this kind, which have been constantly and intensively investigated since their introduction from various angles. First of all, in terms of these automata, the theory of computation has studied most of its classical topics, such as nondeterminism, time and space complexity, or purely mathematical properties (see \cite{REMOVING_BIDRECTIONALITY_KAPOUTSIS_2005, NONDETERMINISM_AND_THE_SIZE_OF_TWO_WAY_SAKODA_1978, TWO_WAY_AUTOMATA_LOG_SPACE_KAPOUTSIS_2014, TWO_WAY_OLD_AND_RECENT_PIGHIZZINI_2012, FINITE_AUTOMATA_AND_UNARY_LANGS_CHROBAK_1986, TWO_WAY_DETERMINISCTIC_AUTOMATA_WITH_TWO_REVERSALS_BALCERAK_2010, POSITIONAL_SIMULATION_OF_TWO_WAY_AUTOMATA_BIRGET_1992, NONDET_DET_FOR_TWO_WAY_FINITE_AUTOMATA_HROMKOVIC_2003, TWO_WAY_AUTOMATA_AND_LENGTH_PRESERVING_BIRGET_1996, TWO_WAY_AUTOMATA_SIMULATIONS_AND_UNARY_LANGS_MEREGHETTI_2000, TWO_WAY_TO_UNAMBIGOUS_AUTOMATA_PETROV_2023, BASIC_TECHNIQUES_FOR_TWO_WAY_BIRGET_1987}). Furthermore, this theory has introduced various formal models closely related to two-way finite automata in many respects, such as their power or the way they work (see \cite{BIAUTOMATA_JIRASKOVA_KLIMA_2022, LINEAR_CONTEXT_FREE_LOUKA_2007}). The theory of computation has also defined and studied several new versions of these automata based upon concepts used in its latest investigation trends, such as the formalization of quantum or jumping computation (see \cite{TWO_TAPE_ZHENG_2011, TWO_WAY_FINITE_AUTOMATA_WITH_QUANTUM_AMBAINIS_2002, SOME_LANGS_RECOGNIZED_BY_TWO_WAY_ZHENG_2011, SUCCINCTNESS_TWO_WAY_PROB_QUANTUM_ABUZER_2010, SOME_OBSERVATIONS_ON_TWO_WAY_DAOWEN_2008, TWO_WAY_JUMPING_FAZEKAS_2021}). In addition, apart from the models mentioned above, many other versions of two-way finite automata have been introduced to formalize various features of computation in such terms as probability, alternation, and others (see \cite{PROB_TWO_WAY_FREIVALDS_1981, PROB_TWO_WAY_VARIANTS_RAVIKUMAR_2007, TWO_WAY_NON_UNIFORM_FREI_2021, TWO_WAY_ALTERNATION_KAPOUTSIS_2021, OBLIVIOUS_TWO_WAY_KUTRIB_2014}).

The present paper continues with this long-time vivid investigation trend by introducing other versions of two-way finite automata, which are, however, stronger than their originals. Indeed, these newly introduced versions characterize the linear language family, which properly contains the regular language family defined by two-way finite automata. Considering this increase in power, it is surprising that the fundamental idea underlying these new versions actually comes from the very original concept of one-way finite automata, which can read every input symbol only once. That is to say, once an input symbol is read, it is also erased, so it cannot be re-read again. To give a more detailed insight into these new versions as well as the way they work, we first informally recall the basic notion of a (one-way) finite automaton as well as that of its two-way variant while pointing out the features that have inspired the introduction of the new versions.

Conceptually, the notion of a (one-way) finite automaton $M_1$ consists of a finite set of states, an input tape, a reading head, and a finite state control. The input tape is divided into squares, each of which contains one symbol. On the tape, in a left-to-right way, $M_1$ works by making a sequence of moves directed by its finite control. During each of these moves, $M_1$ changes its current state and reads the current input symbol $a$, which occurs under the head, and shifts the head one square to the right. Observe that since $M_1$ always shifts its head to the right, reading this particular occurrence of $a$ can be also considered as its erasure. Indeed, once this occurrence of $a$ is read, it is, in effect, gone as well because $M_1$ can never re-read this occurrence during the rest of the move sequence. $M_1$ has one state defined as the start state $s$ and some states designated $M_1$ as final states. With an input string $w$ on its tape, $M_1$ starts working on $w$ from $s$ with the leftmost symbol of $w$ being the current input symbol. If $M_1$ can erase $w$ by a sequence of moves sketched above and, in addition, enter a final state, $M_1$ accepts $w$; otherwise, $M_1$ rejects $w$.

The notion of a two-way finite automaton $M_2$ resembles $M_1$ very much. However, as opposed to the strictly left-to-right behavior of $M_1$, $M_2$ can freely move its head either left or right or remain stationary on the tape. Consequently, the same symbol can be re-read over and over again, so $M_2$ never erases any input symbol on the tape. Just like $M_1$, $M_2$ starts working on $w$ from the start state with the leftmost symbol under its head. If it can make a sequence of moves such that it shifts its head off the right end of the tape and enters a final state, $M_2$ accepts $w$; otherwise, $M_2$ rejects $w$.

Based upon a combination of $M_1$ and $M_2$, we now sketch the new notion of a finite automaton, $M_3$, referred to as an \emph{input-erasing two-way finite automaton}. In essence, $M_3$ works like $M_2$ except that it erases the input symbols just like $M_1$ does. Indeed, once an occurrence of an input symbol is read on the tape, it is erased from the tape (mathematically speaking, this occurrence of the input symbol is changed to the empty word), so $M_3$ can never re-read it again later during its computation. $M_3$ starts working on an input word $w$ from the start state with its input head placed over any symbol occurring in $w$. If $M_3$ can completely erase $w$ by a sequence of left, right, or stationary moves and, in addition, enter a final state, $M_3$ accepts $w$; otherwise, $M_3$ rejects $w$.

As its fundamental result, this paper demonstrates that input-erasing two-way finite automata are stronger than one-way or two-way finite automata, which both characterize the regular language family. Indeed, the input-erasing two-way versions define the linear language family, which properly contains the regular language family. In addition, the paper discusses two kinds of restrictions placed upon input-erasing two-way finite automata and the way they work. The first kind concerns their computation. More precisely, it restricts the performance of left and right moves in a variety of evenly alternating ways and investigates how these restrictions affect their computational power. The other kind explores input-related restrictions. That is, it studies the power of these automata working under the assumption that their input strings or their parts belong to a language family, such as the regular language family.

The present paper is organized as follows. Section \ref{sec:preliminaries} recalls all the terminology needed in this paper. Section \ref{sec:definitions} defines the new type of two-way finite automata. Section \ref{sec:main_result} presents the fundamental results achieved in this paper. Section \ref{sec:computational_restrictions} investigates a variety of evenly alternating restrictions placed upon the way these automata work. Section \ref{sec:input_related_restrictions} explores the various input-related restrictions of these automata. Section \ref{sec:conclusion} closes all the present study by pointing out important open problem areas.
\section{Preliminaries}\label{sec:preliminaries}

For any finite set of nonnegative integers $X$, $\max(X)$ denotes its \emph{maximum}. For any integer $n$, $\abs(n)$ denotes its \emph{absolute value}. For a set $X$, $\card(X)$ denotes the \emph{cardinality} of $X$. Let $X$ and $Y$ be sets; we call $X$ and $Y$ to be \emph{incomparable} if $X \nsubseteq Y$, $Y \nsubseteq X$, and $X \cap Y \neq \emptyset$.

This paper assumes that the reader is familiar with the theory of automata and formal languages (see \cite{AUTOMATA_AND_LANGUAGES_MEDUNA_2000, THEORY_OF_COMPUTATION_WOOD_1987}). For an alphabet $V$, $V^*$ represents the free monoid generated by $V$ under the operation of concatenation; the unit of $V^*$ is denoted by $\epsilon$. Members of $V^*$ are \emph{strings}, and any $L \subseteq V^*$ is a~\emph{formal language}. If $\card(L) = 1$, $L$ is \emph{singular}. Set $V^+ = V^* \setminus \{\epsilon\}$; algebraically, $V^+$ is thus the free semigroup generated by $V$ under the operation of concatenation. For $x \in V^*$, $\strlen{x}$ denotes the length of $x$.

A \emph{finite automaton} (\emph{\fa{}} for short) is a quintuple $M = (Q, \Sigma, R, s, F)$, where $Q$ and $\Sigma$ are two nonempty finite disjoint sets, $R \subseteq Q(\Sigma \cup \{\epsilon\}) \times Q$, $s \in Q$, $F \subseteq Q$. Over $Q\Sigma^*$, we define a binary relation $\Ra$ as follows: for all $(\alpha, q) \in R$ and $u \in \Sigma^*$, $\alpha u \Ra qu$. Extend $\Ra$ to $\Ra^i$ ($i \ge 0$), $\Ra^+$, and $\Ra^*$ in the usual way. Let $L(M) = \{x \mid x \in \Sigma^*, sx \Ra^* f, f \in F\}$. $M$ is \emph{$\epsilon$-free} if $(py, q) \in R$ implies $\strlen{y} = 1$, where $p, q \in Q$, $y \in \Sigma \cup \{\epsilon\}$. In $M$, $Q$, $\Sigma$, $R$, $s$, and $F$ are referred to as the \emph{state} set, the \emph{input alphabet}, the set of \emph{rules}, the \emph{start state}, and the set of \emph{final states}, respectively.

A \emph{two-way finite automaton} (\emph{\twfa{}}) is a quintuple $M = (Q, \Sigma, R, s, F)$, where $Q$, $\Sigma$, $s$, and $F$ has the same meaning as in an \fa{}, and $R \subseteq Q\Sigma \times Q\{\lsh, \rsh\}$, where $\lsh$ and $\rsh$ are two special symbols, $\{\lsh, \rsh\} \cap (Q \cup \Sigma) = \emptyset$. Over $\Sigma^*Q\Sigma^*$, we define a binary relation $\Ra$ as follows:
\begin{enumerate*}[label=(\roman*)]
    \item for all $(pa, q\rsh) \in R$, $upav \Ra uaqv$ and
    \item for all $(pa, q\lsh) \in R$, $ubpav \Ra uqbav$,
\end{enumerate*}
where $a, b \in \Sigma$, $u, v \in \Sigma^*$, $p, q \in Q$. Extend $\Ra$ to $\Ra^i$ ($i \ge 0$), $\Ra^+$, and $\Ra^*$ in the usual way. Let $L(M) = \{x \mid x \in \Sigma^*, sx \Ra^* xf, f \in F\}$.

A \emph{linear grammar} (\emph{\ling{}}) is a quadruple $G = (N, T, P, S)$, where $N$ and $\Sigma$ are two disjoint alphabets, $P \in N \times \Sigma^*(N \cup \{\epsilon\})\Sigma^*$, $S \in N$. Over $T^*(N \cup \{\epsilon\})T^*$, we define a binary relation $\Ra$ as follows: for all $(A, x) \in P$ and $u, v \in T^*$, $uAv \Ra uxv$. Extend $\Ra$ to $\Ra^i$ ($i \ge 0$), $\Ra^+$, and $\Ra^*$ in the usual way. Let $L(M) = \{x \mid x \in T^*, S \Ra^* x\}$. $G$ is an \emph{even linear grammar} (\emph{\eling{}}) if $(A, xBy) \in P$ implies $\strlen{x} = \strlen{y}$, where $x, y \in T^*$, $A, B \in N$. In $G$, $N$ and $T$, are referred to as the alphabets of \emph{nonterminals} and \emph{terminals}, respectively; $P$ is the set of \emph{rules}, and $S$ is the \emph{start nonterminal} of $G$.

In any of the automata or grammars defined above, any $(u, v)$ from $R$ or $P$ is written as $u \to v$ in what follows.

For all $X \in \{\text{\fa{}}, \text{\twfa{}}, \text{\ling{}}, \text{\eling{}}\}$, set $\lfam{}{X} = \{L(Y) \mid Y \text{ is an X}\}$. Let $\lfam{}{\singlfam}$, $\lfam{}{\finlfam}$, $\lfam{}{\reglfam}$ denote the families of singular, finite, and regular languages, respectively. Recall that $\lfam{}{\singlfam} \subset \lfam{}{\finlfam} \subset \lfam{}{\reglfam} = \lfam{}{\fa{}} = \lfam{}{\twfa{}} \subset \lfam{}{\eling{}} \subset \lfam{}{\ling{}}$, where $\lfam{}{\reglfam} \subset \lfam{}{\eling{}}$ is established in \cite{ON_A_FAMILY_OF_LIN_LANGS_AMAR_1964}. Let $\lfam{}{\evenlfam{}}$ and $\lfam{}{\oddlfam{}}$ denote the families of languages consisting of even-length and odd-length strings, respectively.

\section{Definitions}\label{sec:definitions}

In this section, we formally define the general and simple versions of input-erasing two-way finite automata. During every move, the former can read a string, which may consist of several symbols, while the latter always reads no more than one input symbol, just like the classical finite automata do. We also introduce their $\epsilon$-free versions, which cannot perform moves without reading any symbol.

\begin{definition}\label{definition:ietwgfa}
    An \emph{input-erasing two-way general finite automaton} (\emph{\ietwgfa{}}) is a quintuple $$M = (Q, \Sigma, R, s, F),$$ where $Q$ and $\Sigma$ are two nonempty finite disjoint sets, $R \subseteq (Q\Sigma^* \cup \Sigma^*Q) \times Q$, $s \in Q$, $F \subseteq Q$. Let $K = \Sigma^*Q\Sigma^*$. Over $K$, we define a binary relation $\Ra$ as follows: for all $(\alpha, q) \in R$ and $u, v \in \Sigma^*$, $u\alpha v \Ra uqv$. Extend $\Ra$ to $\Ra^i$ ($i \ge 0$), $\Ra^+$, and $\Ra^*$ in the usual way. Let $L(M) = \left\{xy \mid x, y \in \Sigma^*, xsy \Ra^* f, f \in F\right\}$.
\end{definition}

In $M$, $Q$, $\Sigma$, $R$, $s$, and  $F$ are referred to in the same way as in an \fa{}. $K$ is the set of all configurations, and $\Ra$ is the \emph{move} relation. $L(M)$ is the language of $M$.

In what follows, instead of $(\alpha, q) \in R$, we write $\alpha \to q \in R$.
    
Let $r \in R$ be a rule of the form $\alpha \to q$, then $\lhs(r)$ and $\rhs(r)$ denote $\alpha$, called the \emph{left-hand side} of $r$, and $q$, called the \emph{right-hand side} of $r$, respectively.

\begin{definition}
    Let $M = (Q, \Sigma, R, s, F)$  be an \ietwgfa{}, and let $r \in R$. If $\strlen{\lhs(r)} \le 2$, $r$ is \emph{simple}. If $R$ contains only simple rules, $M$ is said to be an \emph{input-erasing two-way simple finite automaton} (\emph{\ietwsfa{}} for short). If $\strlen{\lhs(r)} = 1$, $r$ is an \emph{$\epsilon$-rule} (therefore, every $\epsilon$-rule is simple). If $R$ contains no $\epsilon$-rule, $M$ is \emph{$\epsilon$-free}.
\end{definition}

For both $X \in \{\text{\ietwgfa{}}, \text{\ietwsfa{}}\}$, set $\lfam{\epsilon}{X} = \{L(M) \mid M \text{ is an $X$}\}$ and $\lfam{}{X} = \{L(M) \mid M \text{ is an $\epsilon$-free $X$}\}$.

\section{Main Result}\label{sec:main_result}

The present section demonstrates that \ietwgfa{}s and \ling{}s are equally powerful because they both define the linear language family. Thus, \ietwgfa{}s are stronger than \fa{}s, which characterize the regular language family, properly included in the linear language family.

\begin{lemma}\label{lemma:ietwgfa_in_lg}
    For every \ietwgfa{} $M$, there is an \ling{} $G$ such that $L(G) = L(M)$.
\end{lemma}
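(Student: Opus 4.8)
The plan is to run $M$'s computation \emph{backwards}, turning its step-by-step erasure of the tape into a step-by-step generation of the input by a linear grammar. The crucial observation is that a configuration $uqv \in K$ of $M$ — state $q$, with surviving tape content $u$ to the left of the head and $v$ to the right — corresponds exactly to a linear sentential form $uqv$ in which $q$ is treated as the single nonterminal and $u, v \in \Sigma^*$ are the terminal strings flanking it. Reading a rule of $M$ in reverse then \emph{adds} the erased symbols back onto one side of the nonterminal, which is precisely the kind of step a \ling{} performs, and the eventual full input $xy$ emerges once the nonterminal is erased.

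Concretely, I would set $G = (Q \cup \{S\}, \Sigma, P, S)$ with $S$ a fresh start symbol ($S \notin Q$) and $P$ consisting of: (a) a rule $S \to f$ for every $f \in F$, to launch the reversed computation from a final configuration; (b) for every rule $p\beta \to q \in R$ with $p\beta \in Q\Sigma^*$ the reversed rule $q \to p\beta$, and for every rule $\beta p \to q \in R$ with $\beta p \in \Sigma^*Q$ the reversed rule $q \to \beta p$; and (c) a single terminating rule $s \to \epsilon$, applied when the reversal reaches the start state. Each rule in (a)--(c) carries at most one nonterminal on its right-hand side with terminals on either side, so $G$ is a legitimate \ling{}; note that an arbitrary $\beta \in \Sigma^*$ is permitted, which is exactly why a \emph{general} automaton maps to a (general) linear grammar.

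The technical heart is a single-step reversal correspondence: for $\alpha, \beta \in K$, one move $\alpha \Ra_M \beta$ of $M$ holds $\iff$ the reversed step $\beta \Ra_G \alpha$ holds via a rule from (b). This is a direct rule-by-rule check for each of the two rule shapes, with the $\epsilon$-rule case $p \to q$ subsumed by taking $\beta = \epsilon$. Because (b) contains exactly the reverses of the rules of $R$ and nothing more, the correspondence is a bijection on single steps, and a routine induction on the number of steps lifts it to $\alpha \Ra_M^* \beta \iff \beta \Ra_G^* \alpha$ over $K$.

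It then remains to attach the endpoints. For $L(M) \subseteq L(G)$: if $w = xy$ with $xsy \Ra_M^* f$ and $f \in F$, the correspondence gives $f \Ra_G^* xsy$, so $S \Ra_G f \Ra_G^* xsy \Ra_G xy = w$. For the reverse inclusion I would argue that every terminal derivation of $G$ is forced into this shape: the only rules leaving $S$ are those of (a); the rules of (b) preserve exactly one nonterminal; and the only rule removing the last nonterminal is $s \to \epsilon$ of (c), which must therefore be the final step. Hence any $w \in L(G)$ arises as $S \Ra_G f \Ra_G^* usv \Ra_G uv = w$ with $f \in F$, whereupon the correspondence yields $usv \Ra_M^* f$, i.e.\ $w = uv \in L(M)$. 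I expect the main obstacle to be this converse direction: not the reversal identity itself, which is mechanical, but the careful normal-form argument showing that the auxiliary rules (a) and (c) can fire only at the very first and very last steps, so that $G$ generates neither too much nor too little.
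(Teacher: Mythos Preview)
Your proposal is correct and essentially identical to the paper's own proof: the same grammar $G=(Q\cup\{S\},\Sigma,P,S)$ with $S\to f$ for $f\in F$, the reversed rules $q\to p\beta$ and $q\to\beta p$, and the terminating rule $s\to\epsilon$, together with the same reversal correspondence $upv\Ra_M^* q \iff q\Ra_G^* upv$ established by induction. Your only cosmetic deviation is that you factor the induction through a single-step bijection and make the normal-form argument for the endpoints (rules (a) first, rule (c) last) a bit more explicit than the paper does.
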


\begin{proof}
    Let $M = (Q, \Sigma, R, s, F)$ be an \ietwgfa{} $M$. From $M$, we next construct a \ling{} $G = (N, T, P, S)$ such that $L(G) = L(M)$. Introduce a new symbol $S$---the start nonterminal symbol of $G$. Assume that $S \notin Q$. Set $N = Q \cup \{S\}$ and $T = \Sigma$. Initially, set $P = \{s \to \epsilon\}$. Next, extend $P$ in the following manner:
    \begin{enumerate}
        \item for all $f \in F$, add $S \to f$ to $P$;\label{lemma:ietwgfa_in_lg_step_1}
        \item if $xq \to p \in R$, where $p, q \in Q$ and $x \in \Sigma^*$, add $p \to xq$ to $P$;\label{lemma:ietwgfa_in_lg_step_2}
        \item if $qx \to p \in R$, where $p, q \in Q$ and $x \in \Sigma^*$, add $p \to qx$ to $P$;\label{lemma:ietwgfa_in_lg_step_3}
    \end{enumerate}

    \emph{Basic idea.} $G$ simulates the computation of $M$ in reverse. It starts from the generation of a final state (see step \iref{lemma:ietwgfa_in_lg_step_1}). After this initial derivation step, $G$ simulates every left move made by $M$ according to a rule of the form $xq \to p$, where $p, q \in Q$ and $x \in \Sigma^*$, by using a rule of the form $p \to xq$ (see step \iref{lemma:ietwgfa_in_lg_step_2}). The right moves are simulated analogously (see step \iref{lemma:ietwgfa_in_lg_step_3}). This simulation process is completed by using $s \to \epsilon$, thus erasing the start state $s$ in order to get a string of terminal symbols in~$G$.  

    To establish $L(G) = L(M)$ formally, we next establish the following equivalence.
    
    For all $u, v \in \Sigma^*$ and $p, q \in Q$,
    \begin{equation}\label{claim:ietwgfa_in_lg}
        q \Ra^* upv \text{ in } G \iff upv \Ra^* q \text{ in } M.
    \end{equation}
    
    First, we establish the \emph{only-if} part of this equivalence. That is, by induction on the number of derivation steps $i \ge 0$, we prove that $q \Ra^i upv$ in $G \implies upv \Ra^* q$ in $M$. Let $i = 0$, so $q \Ra^0 upv$ in $G$. Then, $q = p$ and $uv = \epsilon$. Since $q \Ra^0 q$ in $M$, the basis holds true. Assume that the implication holds for all derivations consisting of no more than $j$ steps, for some $j \in \mathbb{N}_0$. Consider any derivation of the form $q \Ra^{j + 1} upv$ in $G$. Let this derivation start with the application of a rule of the form $$q \to xo$$ from $P$, where $o \in Q$ and $x \in \Sigma^*$. Recall that $Q = N \setminus \{S\}$, and observe that $S$ cannot occur on the right-hand side of any rule. Thus, we can express $q \Ra^{j + 1} upv$ as $$q \Ra xo \Ra^j xu'pv$$ in $G$, where $xu' = u$. Then, by the induction hypothesis, $u'pv \Ra^* o$ in $M$. Step \eqref{lemma:ietwgfa_in_lg_step_2} described above constructs $q \to xo \in P$ from $xo \to q \in R$, so $$xu'pv \Ra^* xo \Ra q$$ in $M$. Because $xu' = u$, $upv \Ra^* q$ in $M$. In the case that the derivation $q \Ra^{j + 1} upv$ in $G$ starts with the application of a rule of the form $q \to ox$ from $P$, where $o \in Q$ and $x \in \Sigma^*$, proceed by analogy. Thus, the induction step is completed.

    Next, we establish the \emph{if} part of equivalence \eqref{claim:ietwgfa_in_lg}, so we show that $upv \Ra^i q$ in $M \implies q \Ra^* upv$ in $G$ by induction on the number of moves $i \ge 0$. For $i = 0$, $upv \Ra^0 q$ occurs in $M$ only for $p = q$ and $uv = \epsilon$. Then, since $q \Ra^0 q$ in $G$, the basis holds true. Assume that the implication holds for all computations consisting of no more than $j$ moves, for some $j \in \mathbb{N}_0$. Let $upv \Ra^{j + 1} q$ in $M$, and let this computation end with the application of a rule of the form $$xo \to q$$ from $R$, where $o \in Q$ and $x \in \Sigma^*$. Now, we express $upv \Ra^{j + 1} q$ as $$xu'pv \Ra^j xo \Ra q$$ in $M$, where $xu' = u$. By the induction hypothesis, $o \Ra^* u'pv$ in $G$. From $xo \to q \in R$, step \eqref{lemma:ietwgfa_in_lg_step_2} above constructs $q \to xo \in P$. Thus, $G$ makes $$q \Ra xo \Ra^* xu'pv$$ with $u = xu'$. If the computation $upv \Ra^{j + 1} q$ in $M$ ends with the application of a rule of the form $ox \to q$ from $R$, where $o \in Q$ and $x \in \Sigma^*$, proceed analogously. Thus, the induction step is completed, and equivalence \eqref{claim:ietwgfa_in_lg} holds.

    Considering equivalence \eqref{claim:ietwgfa_in_lg} for $p = s$, for all $u, v \in \Sigma^*$ and $q \in Q$, $q \Ra^* usv \text{ in } G \iff usv \Ra^* q \text{ in } M$. As follows from the presented construction technique, $s \to \epsilon \in P$, and $G$ starts every derivation by applying a rule of the form $S \to f$, where $f \in F$. Consequently, $S \Ra f \Ra^* usv \Ra uv \text{ in } G \iff usv \Ra^* f \text{ in } M$, so $L(G) = L(M)$. Thus, Lemma \ref{lemma:ietwgfa_in_lg} holds.
\end{proof}

\begin{lemma}\label{lemma:lg_in_ietwgfa}
    For every \ling{} $G$, there is an \ietwgfa{} $M$ such that $L(M) = L(G)$.
\end{lemma}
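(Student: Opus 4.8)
The plan is to invert the construction of Lemma~\ref{lemma:ietwgfa_in_lg}: the automaton $M$ will simulate the derivations of $G$ in reverse, reading and erasing exactly the terminals that each applied rule of $G$ introduces. Let $G = (N, T, P, S)$. I would set $\Sigma = T$, take a fresh start state $s \notin N$, designate $F = \{S\}$, and let $Q$ consist of $N$, the new state $s$, and one rule-specific intermediate state $\state{A \to xBy}$ for every rule $A \to xBy \in P$ with $B \in N$. The set $R$ is then built as follows. For every terminal rule $A \to z \in P$ with $z \in T^*$ (the case where the nonterminal component is $\epsilon$), add $sz \to A$; this single rule erases $z$ and, through the nondeterministic choice of where $s$ is placed on the input, marks the position at which the last derivation step of $G$ occurred. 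For every rule $A \to xBy \in P$ with $B \in N$, add the pair $xB \to \state{A \to xBy}$ and $\state{A \to xBy}\,y \to A$, so that $M$ first erases $x$ on the left, passes through the intermediate state, and then erases $y$ on the right, reaching $A$---exactly reversing one application of $A \to xBy$.

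Having fixed the construction, the core of the argument is the equivalence that for all $A, B \in N$ and $u, v \in \Sigma^*$,
\[ A \Ra^* uBv \text{ in } G \iff uBv \Ra^* A \text{ in } M . \]
I would prove the left-to-right implication by induction on the number of derivation steps: a first step $A \Ra xCy$ is reversed by the matched pair of rules $xC \to \state{A \to xCy}$ and $\state{A \to xCy}\,y \to A$, while the inner derivation $C \Ra^* u'Bv'$ (with $u = xu'$ and $v = v'y$) is handled by the induction hypothesis, giving $u'Bv' \Ra^* C$ in $M$ and hence $uBv = x(u'Bv')y \Ra^* xCy \Ra^* A$. Here I rely on the fact that $\Ra$ in $M$ is compatible with the surrounding terminal context, so an inner reduction may be carried out inside the fixed context $x \cdots y$.

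For the right-to-left implication I would induct on the number of moves of $M$ and inspect the first move out of the configuration $uBv$. Since $B \in N$, the only applicable rules are those of the form $xB \to \state{A' \to xBy}$; once such a rule fires, the resulting configuration carries the intermediate state $\state{A' \to xBy}$, whose sole outgoing rule is $\state{A' \to xBy}\,y \to A'$. Thus the second move is forced, the two moves together consume $x$ on the left and $y$ on the right, yield a configuration $u''A'v''$ with $u = u''x$ and $v = yv''$, and correspond to the single derivation step $A' \Ra xBy$; the induction hypothesis applied to $u''A'v'' \Ra^* A$ then closes the argument. I expect this direction to be the main obstacle, precisely because it must rule out ``mismatched'' computations in which $M$ reads terminals that do not pair up into one reversed rule application; the rule-specific intermediate states, each with a unique successor rule, are exactly what make every pairing forced and thereby guarantee soundness.

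Finally, I would assemble $L(M) = L(G)$ from the equivalence. Every terminating derivation of $G$ has the shape $S \Ra^* uAv \Ra uzv$ for some terminal rule $A \to z$, and the equivalence (taken with $S$ in the role of the left nonterminal) together with the rule $sz \to A$ gives $u s z v \Ra uAv \Ra^* S$, so $uzv$ is accepted via the split into $x = u$ and $y = zv$. Conversely, any accepting computation of $M$ must begin with a rule $sz \to A$ and then reduce $xAy'' \Ra^* S$, which the equivalence converts back into a derivation $S \Ra^* xAy'' \Ra xzy'' = w$ in $G$. Hence $L(M) = L(G)$, which establishes the lemma.
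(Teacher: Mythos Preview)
Your proposal is correct and follows essentially the same approach as the paper: the construction (fresh start state $s$, $F=\{S\}$, terminal rules $A\to z$ turned into $sz\to A$, and each $A\to xBy$ split into the pair $xB\to\state{A\to xBy}$, $\state{A\to xBy}y\to A$) is identical, and the central equivalence $A\Ra^* uBv$ in $G$ iff $uBv\Ra^* A$ in $M$ is proved by the same two inductions. The only cosmetic difference is that for the $G$-to-$M$ direction you peel off the first derivation step while the paper peels off the last, and for the $M$-to-$G$ direction you argue directly that the two moves out of a state in $N$ are forced, whereas the paper phrases this as a separate $i=1$ base case together with an induction on $j+2$.
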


\begin{proof}
    Let $G = (N, T, P, S)$ be a \ling{}. From $G$, we next construct an \ietwgfa{} $M = (Q, \Sigma, R, s , F)$ such that $L(M) = L(G)$. Introduce a new symbol $s$---the start state of $M$. Set $Q' = \{\state{A \to xBy} \mid A \to xBy \in P, A, B \in N, x, y \in T^*\}$. Assume that $Q' \cap N \cap \{s\} = \emptyset$. Set $Q = Q' \cup N \cup \{s\}$, $\Sigma = T$, and $F = \{S\}$. $R$ is constructed as follows:
    \begin{enumerate}
        \item if $A \to x \in P$, where $A \in N$ and $x \in T^*$, add $sx \to A$ to $R$;\label{lemma:lg_in_ietwgfa_step_1}
        \item if $A \to xBy \in P$, where $A, B \in N$ and $x, y \in T^*$, add $xB \to \state{A \to xBy}$, $\state{A \to xBy}y \to A$ to $R$.\label{lemma:lg_in_ietwgfa_step_2}
    \end{enumerate}

    \emph{Basic idea.} $M$ simulates the derivation of $G$ in reverse. It starts by reading a final terminal sequence generated by $G$ (see step \iref{lemma:lg_in_ietwgfa_step_1}). After this initial derivation step, $M$ simulates every derivation made by $G$ according to a rule of the form $A \to xBy$, where $A, B \in N$ and $x, y \in T^*$, by using two consecutive rules of the form $xB \to \state{A \to xBy}$ and $\state{A \to xBy}y \to A$, where $\state{A \to xBy}$ is a newly introduced state with the rule record to which it relates (see step \iref{lemma:lg_in_ietwgfa_step_2}). The entire simulation process is completed by reaching the state $S$ representing the initial nonterminal symbol of $G$ when the input tape of $M$ should be empty.

    In order to demonstrate $L(G) = L(M)$ rigorously, we establish the following equivalence.
    
    For all $u, v \in T^*$ and $A, B \in N$,
    \begin{equation}\label{claim:lg_in_ietwgfa}
        uBv \Ra^* A \text{ in } M \iff A \Ra^* uBv \text{ in } G.
    \end{equation}

    By induction on the number of moves $i \ge 0$, we first prove that $uBv \Ra^i A \text{ in } M \implies A \Ra^* uBv \text{ in } G$. Let $i = 0$, so $uBv \Ra^0 A \text{ in } M$. Then, $A = B$ and $uv = \epsilon$. Clearly, $A \Ra^0 A$ in $G$. For $i = 1$, $uBv \Ra A$ never occurs in $M$ for any $u, v \in T^*$ since, by the construction technique described above, $M$ does not have any rule of the form $xB \to A$ or $By \to A$ for any $x, y \in T^*$. Recall that $A, B \in (Q' \setminus Q) \setminus \{s\} = N$, so no rules added by step \eqref{lemma:lg_in_ietwgfa_step_1} can be applied here. Hence, the basis holds true. Assume that for all $i \le j$, it holds that $uBv \Ra^i A \text{ in } M \implies A \Ra^* uBv \text{ in } G$, for some $j \in \mathbb{N}_0$. Consider any computation of the form $uBv \Ra^{j + 2} A$ in $M$. Let this computation start with the application of two consecutive rules of the form $$xB \to \state{C \to xBy} \text{ and } \state{C \to xBy}y \to C$$ from $R$, where $C \in N$, $x, y \in T^*$, and $C \to xBy \in P$. Thus, we can express $uBv \Ra^{j + 2} A$ as $$u'xByv' \Ra u'\state{C \to xBy}yv' \Ra u'Cv' \Ra^j A$$ in $M$, where $u'x = u$ and $yv' = v$. Clearly, by the induction hypothesis, $A \Ra^* u'Cv'$ in $G$. Step \eqref{lemma:lg_in_ietwgfa_step_2} constructs $xB \to \state{C \to xBy}, \state{C \to xBy}y \to C \in R$ from $C \to xBy \in P$, so $$A \Ra^* u'Cv' \Ra u'xByv'$$ in $G$. Because $u'x = u$ and $yv' = v$, $A \Ra^* uBv$ in $G$, and the induction step is completed.

    Next, by induction on the number of derivation steps $i \ge 0$, we prove that $A \Ra^i uBv$ in $G \implies uBv \Ra^* A$ in $M$. For $i = 0$, $A \Ra^0 uBv$ occurs in $G$ only for $A = B$ and $uv = \epsilon$. Then, the basis holds true since $A \Ra^0 A$ in $M$. Assume that for all $i \le j$, it holds that $A \Ra^i uBv$ in $G \implies uBv \Ra^* A$ in $M$, for some $j \in \mathbb{N}_0$. Let $A \Ra^{j + 1} uBv$ in $G$, and let this derivation end with the application of a rule of the form $$C \to xBy$$ from $P$, where $C \in N$ and $x, y \in T^*$. Now, we express $A \Ra^{j + 1} uBv$ as $$A \Ra^j u'Cv' \Ra u'xByv'$$ in $G$, where $u'x = u$ and $yv' = v$. Hence, by the induction hypothesis, $u'Cv' \Ra^* A$ in $M$. From $C \to xBy \in P$, step \eqref{lemma:lg_in_ietwgfa_step_2} constructs $xB \to \state{C \to xBy}, \state{C \to xBy}y \to C \in R$. Thus, $M$ makes $$u'xByv' \Ra u'\state{C \to xBy}yv' \Ra u'Cv' \Ra^* A$$ with $u'x = u$ and $yv' = v$, and the induction step is completed. Hence, equivalence \eqref{claim:lg_in_ietwgfa} holds.

    Considering equivalence \eqref{claim:lg_in_ietwgfa} for $A = S$, for all $u, v \in T^*$ and $B \in N$, $uBv \Ra^* S \text{ in } M \iff S \Ra^* uBv \text{ in } G$. As follows from the above construction technique, $S \in F$, and $M$ starts each computation by applying a rule of the form $sz \to C$, where $C \in N$, $z \in T^*$, constructed from $C \to z \in P$. Consequently, $uszv \Ra uCv \Ra^* S \text{ in } M \iff S \Ra^* uCv \Ra uzv \text{ in } G$, so $L(M) = L(G)$. Hence, Lemma \ref{lemma:lg_in_ietwgfa} holds.
\end{proof}

\begin{theorem}
    $\lfam{\epsilon}{\ietwgfa{}} = \lfam{\epsilon}{\ling{}}$.
\end{theorem}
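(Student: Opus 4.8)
The plan is to obtain the theorem as an immediate corollary of the two preceding lemmas, which between them already establish both inclusions; no new construction is needed. All the substantive work---the reverse-simulation constructions together with the inductive equivalences \eqref{claim:ietwgfa_in_lg} and \eqref{claim:lg_in_ietwgfa}---has been carried out in Lemmas \ref{lemma:ietwgfa_in_lg} and \ref{lemma:lg_in_ietwgfa}, so at this stage I would only assemble them into a pair of set inclusions.

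First I would apply Lemma \ref{lemma:ietwgfa_in_lg}. Since it converts an arbitrary \ietwgfa{} $M$ (which, by Definition \ref{definition:ietwgfa}, may freely contain $\epsilon$-rules) into an \ling{} $G$ with $L(G) = L(M)$, letting $M$ range over all \ietwgfa{}s yields $\lfam{\epsilon}{\ietwgfa{}} \subseteq \lfam{}{\ling{}}$. Next I would apply Lemma \ref{lemma:lg_in_ietwgfa} for the opposite direction: it turns every \ling{} $G$ into an \ietwgfa{} $M$ with $L(M) = L(G)$, giving $\lfam{}{\ling{}} \subseteq \lfam{\epsilon}{\ietwgfa{}}$. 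Combining the two inclusions yields the asserted equality between $\lfam{\epsilon}{\ietwgfa{}}$ and the linear language family $\lfam{}{\ling{}}$.

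I do not expect any genuine obstacle at this level; the only points worth a remark are bookkeeping ones. On one side, Lemma \ref{lemma:ietwgfa_in_lg} is stated for a general \ietwgfa{} with no $\epsilon$-freeness hypothesis, so it indeed covers the whole family $\lfam{\epsilon}{\ietwgfa{}}$. On the other side, the automaton built in Lemma \ref{lemma:lg_in_ietwgfa} must be recognized as a bona fide (general) \ietwgfa{}---its rules have left-hand sides in $Q\Sigma^* \cup \Sigma^*Q$ as required by Definition \ref{definition:ietwgfa}, the auxiliary states $\state{A \to xBy}$ are admissible members of $Q$, and $\epsilon$-rules are permitted---so its language is legitimately counted in $\lfam{\epsilon}{\ietwgfa{}}$. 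Once these are noted, the equality follows directly, the real difficulty having already been discharged inside the two lemmas by their inductive equivalences.
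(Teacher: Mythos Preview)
Your proposal is correct and mirrors the paper's proof exactly: the paper also derives both inclusions directly from Lemmas \ref{lemma:ietwgfa_in_lg} and \ref{lemma:lg_in_ietwgfa} and concludes the equality in a single sentence. Your additional bookkeeping remarks (about $\epsilon$-rules and the form of the constructed automaton) are sound but go slightly beyond what the paper itself spells out.
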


\begin{proof}
    As $\lfam{\epsilon}{\ietwgfa} \subseteq \lfam{\epsilon}{\ling}$ follows from Lemma \ref{lemma:ietwgfa_in_lg} and $\lfam{\epsilon}{\ling} \subseteq \lfam{\epsilon}{\ietwgfa}$ from Lemma \ref{lemma:lg_in_ietwgfa}, clearly, the identity $\lfam{\epsilon}{\ietwgfa} = \lfam{\epsilon}{\ling}$ holds.
\end{proof}

\begin{theorem}\label{theorem:ietwgfa_=_ietwsfa_=_efree-ietwsfa}
    $\lfam{\epsilon}{\ietwgfa{}} = \lfam{\epsilon}{\ietwsfa{}} = \lfam{}{\ietwsfa{}}$.
\end{theorem}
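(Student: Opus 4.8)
The plan is to establish all three equalities at once by closing a cycle of inclusions. Two inclusions are immediate from the definitions: an $\epsilon$-free \ietwsfa{} is a special \ietwsfa{}, and an \ietwsfa{} is a special \ietwgfa{} (its rules merely happen to be simple), so $\lfam{}{\ietwsfa{}} \subseteq \lfam{\epsilon}{\ietwsfa{}} \subseteq \lfam{\epsilon}{\ietwgfa{}}$. It therefore remains to prove the single reverse inclusion $\lfam{\epsilon}{\ietwgfa{}} \subseteq \lfam{}{\ietwsfa{}}$, which I would obtain by two successive constructions applied to an arbitrary \ietwgfa{} $M$: first making all rules simple, then eliminating $\epsilon$-rules.

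\emph{Step 1: making rules simple.} Each rule of $M$ reads a block of tape symbols on one side of the state. Whenever its left-hand side has length at least $3$, say a rightward rule $q a_1 \cdots a_k \to p$ with $k \ge 2$, I would replace it by the chain $q a_1 \to t_1,\ t_1 a_2 \to t_2,\ \ldots,\ t_{k-1} a_k \to p$, where $t_1, \ldots, t_{k-1}$ are fresh states used by no other rule; leftward rules are treated symmetrically by peeling symbols off next to the state. Every new rule has left-hand side of length $2$, and rules that were already simple (including $\epsilon$-rules) are kept unchanged. Since each fresh state is the right-hand side of exactly one rule, the left-hand side of exactly one rule, and never final, such a chain can only be run in its intended order and reproduces exactly the single original move. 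A routine induction on the number of moves then gives $L(M') = L(M)$ for the resulting \ietwsfa{} $M'$, whence $\lfam{\epsilon}{\ietwgfa{}} \subseteq \lfam{\epsilon}{\ietwsfa{}}$.

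\emph{Step 2: eliminating $\epsilon$-rules.} Starting now from a simple automaton $M = (Q, \Sigma, R, s, F)$, I would form the $\epsilon$-closure relation $\rightsquigarrow$ on $Q$, where $q' \rightsquigarrow q$ iff $q'$ reaches $q$ using $\epsilon$-rules only. As an $\epsilon$-rule is just an in-place change of state, the $\epsilon$-moves of any computation occur in blocks before the first read, between consecutive reads, and after the last read. I would fold each block into the reading move that follows it and handle the final block through the accepting states: put $R' = \{q'a \to p \mid qa \to p \in R,\ q' \rightsquigarrow q\} \cup \{aq' \to p \mid aq \to p \in R,\ q' \rightsquigarrow q\}$ and $F' = \{q \mid q \rightsquigarrow f \text{ for some } f \in F\}$, keeping the same states and start state. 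Then $M' = (Q, \Sigma, R', s, F')$ is an $\epsilon$-free \ietwsfa{}. Soundness is direct, since each rule of $R'$ unfolds into an $\epsilon$-chain followed by one reading move of $M$ and each state of $F'$ reaches $F$ by an $\epsilon$-chain. For completeness I would take an accepting computation $xsy \Ra^* f$ of $M$, split it as $E_0 \rho_1 E_1 \cdots \rho_n E_n$ with the $\rho_i$ the $n = \strlen{xy}$ reading moves and the $E_i$ the intervening $\epsilon$-blocks, note that each $E_{i-1}\rho_i$ is realized by a single rule of $R'$ (using $s \rightsquigarrow q_1$ for $E_0$, and $p_{i-1} \rightsquigarrow q_i$ for $i \ge 2$, where $q_i$ is the state from which $\rho_i$ reads and $p_{i-1}$ is the state entered by $\rho_{i-1}$), and observe that the tail $E_n$ is absorbed because the state reached after $\rho_n$ lies in $F'$; the empty-input case $n = 0$ is covered since $s \in F'$ precisely when $s \rightsquigarrow f$ for some $f \in F$. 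This yields $\lfam{\epsilon}{\ietwsfa{}} \subseteq \lfam{}{\ietwsfa{}}$ and closes the cycle.

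I expect \emph{Step 2} to be the main obstacle, precisely because these automata differ from ordinary finite automata in two ways that interact badly with naive $\epsilon$-removal: the head may start anywhere inside the input, and acceptance demands that the entire tape be erased. These force the three kinds of $\epsilon$-blocks---leading (folded into the following read), trailing on the already-empty tape (folded into $F'$), and the degenerate empty-input block---to be treated separately, and they require checking that the composed rules create no accepting computation absent from $M$. By contrast, Step 1 is routine once one verifies that the private intermediate states cannot be entered or left except along their intended chains.
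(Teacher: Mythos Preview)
Your proposal is correct and follows essentially the same approach as the paper: the trivial inclusions, then splitting multi-symbol rules into chains of single-symbol rules via intermediate states, then removing $\epsilon$-rules by the standard $\epsilon$-closure technique. The paper phrases the first construction as ``recording $a_1\dots a_n$ into the current state'' rather than allocating fresh states $t_1,\dots,t_{k-1}$ per rule, and for the second step it simply cites the standard technique without spelling out the closure, but the underlying arguments are the same; your concern that Step~2 might be the main obstacle is unwarranted, since the in-place nature of $\epsilon$-moves here makes the usual $\epsilon$-closure construction go through without modification.
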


\begin{proof}
    As every \ietwsfa{} is a special case of an \ietwgfa{}, we have $\lfam{\epsilon}{\ietwsfa{}} \subseteq \lfam{\epsilon}{\ietwgfa{}}$. To prove $\lfam{\epsilon}{\ietwgfa{}} \subseteq \lfam{\epsilon}{\ietwsfa{}}$, consider any \ietwgfa{} $M$. From $M$, we construct an equivalent \ietwsfa{} $M'$ as follows based upon the following idea. Let $M$ read an $n$-symbol string, $a_1\dots a_n$, to the right during a single move. $M'$ simulates this move as follows:
    \begin{enumerate}
        \item $M'$ records $a_1\dots a_n$ into its current state,
        \item $M'$ makes $n$ subsequent right moves during which it reads $a_1\dots a_n$ symbol by symbol, proceeding from $a_1$ towards $a_n$.
    \end{enumerate}
    The left moves in $M$ are simulated by $M'$ analogously. The details are left to the reader. Thus, $\lfam{\epsilon}{\ietwgfa{}} \subseteq \lfam{\epsilon}{\ietwsfa{}}$ holds, so $\lfam{\epsilon}{\ietwgfa{}} = \lfam{\epsilon}{\ietwsfa{}}$.

    From the definitions, $\lfam{}{\ietwsfa{}} \subseteq \lfam{\epsilon}{\ietwsfa{}}$. The opposite inclusion can be established straightforwardly using the standard technique (see, for instance, Section 3.2.1 in \cite{FORMAL_LANGUAGES_AND_COMPUTATION_MEDUNA_2014}). Thus, $\lfam{\epsilon}{\ietwsfa{}} = \lfam{}{\ietwsfa{}}$. Consequently, Theorem \ref{theorem:ietwgfa_=_ietwsfa_=_efree-ietwsfa} holds.
\end{proof}

\section{Computational Restrictions}\label{sec:computational_restrictions}

This section introduces a variety of restrictions that require the performance of left and right moves in an alternating way. It investigates how these restrictions affect the computational power of \ietwgfa{}s and \ietwsfa{}s. First, however, some additional terminology is needed.

\begin{definition}\label{def:computational_restrictions}
    Let $M = (Q, \Sigma, R, s, F)$ be an \ietwgfa{}, and let $r \in R$. If $\lhs(r) = xq$, $x \in \Sigma^*$, $q \in Q$, $r$ is \emph{left}. Analogously, if $\lhs(r) = qx$, $x \in \Sigma^*$, $q \in Q$, $r$ is \emph{right}.
    
    Let $K$ be the set of all configurations over $M$. Let $\alpha \Ra^* \beta$ in $M$, $\alpha, \beta \in K$. If, in $\alpha \Ra^* \beta$, every sequence of two consecutive moves satisfies the condition that the first of these two moves reads symbols in one direction while the second move reads symbols in the other direction; more precisely, if for every two consecutive moves, $i$ and $j$, in $\alpha \Ra^* \beta$, $i$ is left if and only if $j$ is right, then $\alpha \Ra^* \beta$ is \emph{alternating}, symbolically written as $\alpha \Ra^*_\alt{} \beta$.

    Let $\alpha \Ra^*_\alt \beta$ in $M$ consist of $n$ moves, for some even $n \ge 0$, where $\alpha, \beta \in K$.
    \begin{enumerate}
        \item If, in $\alpha \Ra^*_\alt \beta$, for an odd $i$, $0 \le i \le n$, both the $i$th and the $(i + 1)$th moves read the same number of input symbols, then $\alpha \Ra^*_\alt \beta$ is an \emph{even computation}, symbolically written as $\alpha \Ra^*_\even{} \beta$.
        \item If $\gamma \in K$, $\gamma \Ra \alpha$ in $M$, and $\alpha \Ra^*_\even{} \beta$ in $M$, then $\gamma \Ra \alpha \Ra^*_\even{} \beta$ is an \emph{initialized even computation}, symbolically written as $\gamma \Ra^*_\ieven \beta$.
    \end{enumerate}
    Let $L(M)_\alt{} = \{usv \mid u, v \in \Sigma^*, usv \Ra^*_\alt{} f, f \in F\}$, $L(M)_\even{} = \{usv \mid u, v \in \Sigma^*, usv \Ra^*_\even{} f, f \in F\}$, and $L(M)_\ieven{} = \{usv \mid u, v \in \Sigma^*, usv \Ra^*_\ieven{} f, f \in F\}$.
\end{definition}

For all $X \in \{\text{\ietwgfa{}}, \text{\ietwsfa{}}\}$ and $y \in \{\alt{}, \even{}, \ieven{}\}$, set $\lfam{\epsilon}{X}_y = \{L(M)_y \mid M \text{ is an } X\}$. Analogously, define $\lfam{}{X}_y$ in terms of $\epsilon$-free versions of the corresponding automata.

\begin{theorem}\label{theorem:e-free_alt_ietwsfa_in_alt_ietwsfa}
    $\lfam{}{\ietwsfa{}}_\alt{} \subset \lfam{\epsilon}{\ietwsfa{}}_\alt{}$.
\end{theorem}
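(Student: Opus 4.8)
The inclusion $\lfam{}{\ietwsfa{}}_\alt{} \subseteq \lfam{\epsilon}{\ietwsfa{}}_\alt{}$ requires no argument: an $\epsilon$-free \ietwsfa{} is by definition a special \ietwsfa{}, and the alternating restriction on computations is imposed identically, so every $L(M)_\alt$ with $M$ $\epsilon$-free already belongs to the larger family. The entire force of the statement lies in properness, which I would prove by producing one witness language $L$ such that, first, some \ietwsfa{} $M_0$ with $\epsilon$-rules satisfies $L(M_0)_\alt{} = L$, while, second, no $\epsilon$-free \ietwsfa{} $M$ satisfies $L(M)_\alt{} = L$.

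The backbone of the negative part is a rigidity observation. If $M$ is $\epsilon$-free and simple, then every rule $r$ has $\strlen{\lhs(r)} = 2$ (since $\lhs(r)$ carries exactly one state and the $\epsilon$-free condition rules out $\strlen{\lhs(r)} = 1$), so each move erases exactly one symbol. Hence an accepting alternating computation $usv \Ra^*_\alt{} f$ on a word $w = uv$ uses exactly $\strlen{w}$ moves whose directions strictly alternate left/right. I would turn this into a normal-form lemma: $w \in L(M)_\alt{}$ holds iff $w$ admits a near-central split $w = uv$ with $\bigl|\strlen{u} - \strlen{v}\bigr| \le 1$ together with a state sequence reading $u$ and $v$ in lock-step pairs fanning out from the split point. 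In other words, the $\epsilon$-free alternating device behaves essentially like a one-symbol-per-side even-linear reader, with no freedom to pause, to change state, or to adjust parity at a turning point without consuming a symbol.

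With this normal form in hand the negative part reduces to a pumping/counting argument: I would choose $L$ so that accepting its long words forces the reader to re-enter some state during the lock-step scan, and then exhibit the shortcut word obtained by excising the repeated segment, which the normal form forces $M$ to accept although it is not in $L$. The point of the $\epsilon$-rule in $M_0$ is precisely to supply the capability the normal form forbids—to change state, or to reach the final state, at a turning point without erasing a symbol—so that $M_0$ can keep apart two modes of computation that the rigid machine is compelled to merge. The construction of $M_0$ in the positive part is then a routine explicit build. I expect the negative direction to be the main obstacle, and within it the two delicate steps are the clean formulation and proof of the normal-form lemma and the verification that the pumped shortcut word genuinely falls outside $L$; the positive direction and the trivial inclusion are straightforward.
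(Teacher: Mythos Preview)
Your proposal is sound and rests on the same rigidity observation as the paper: in an $\epsilon$-free \ietwsfa{} working under alternation, every move erases exactly one symbol and the directions strictly alternate, so any accepting computation forces a near-central split $w = uv$ with $\bigl|\strlen{u} - \strlen{v}\bigr| \le 1$.

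Where you diverge from the paper is in how you exploit this. The paper names a concrete witness, $L = \{a^n b^n c^m \mid n,m \ge 0\}$, and argues directly: matching the $a$'s against the $b$'s would require the start position to sit between them, but then the remaining block $c^j$ lies entirely on one side, which strict alternation cannot consume. You instead package the rigidity as a normal-form lemma and then reach for a pumping argument on an unnamed $L$. Both routes work; the paper's is shorter once the witness is on the table, while yours is more systematic and would yield a reusable characterization of $\lfam{}{\ietwsfa{}}_\alt{}$ (essentially an even-linear-type family). Note that the paper's witness slots perfectly into your framework: for $a^i b^i c^j$ with $j$ large, the forced central split lands inside the $c$'s, and pumping a repeated state during the left-side $b$-phase produces $a^i b^{i+p} c^{j+p} \notin L$. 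The one place you should be careful is that your pumping excision affects both sides of the split symmetrically, so you must track exactly which symbols get inserted or deleted on each side when you build the shortcut word.
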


\begin{proof}
    \emph{Basic Idea}. Clearly, $\lfam{}{\ietwsfa{}}_\alt{} \subseteq \lfam{\epsilon}{\ietwsfa{}}_\alt{}$. To demonstrate $\lfam{}{\ietwsfa{}}_\alt{} \subset \lfam{\epsilon}{\ietwsfa{}}_\alt{}$, consider $L = \left\{a^nb^nc^m \mid n, m \ge 0\right\}$. Clearly, $L \in \lfam{\epsilon}{\ietwsfa{}}_\alt{}$. Next, we sketch how to prove $L \notin \lfam{}{\ietwsfa{}}_\alt{}$ by the contradiction. Assume that there exists an $\epsilon$-free \ietwsfa{} $M$ such that $L(M)_\alt{} = L$. Take any $a^ib^ic^j$ for some $i, j \ge 0$. $M$ has to start its successful computation in between $a$s and $b$s in order to verify the same number of occurrences of those symbols. After this verification, $M$ has to erase the remaining $j$ $c$s to the right. However, this erasure cannot be performed by $M$, working under alternating computation. Thus, $L \in \lfam{\epsilon}{\ietwsfa{}}_\alt{} \setminus \lfam{}{\ietwsfa{}}_\alt{}$, so Theorem \ref{theorem:e-free_alt_ietwsfa_in_alt_ietwsfa} holds.
\end{proof}

\begin{theorem}\label{theorem:ie2gfa_even_in_e-free_ie2sfa_even}
    For every \ietwgfa{} $M$, there exists an $\epsilon$-free \ietwsfa{} $M'$ such that $L(M') = L(M')_\even{} = L(M)_\even{}$.
\end{theorem}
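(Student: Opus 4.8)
The plan is to build $M'$ so that each of its moves erases exactly one symbol. Since an $\epsilon$-free \ietwsfa{} has only rules of length two, every move of $M'$ reads a single symbol, and therefore any two consecutive moves of $M'$ automatically read the \emph{same} number of symbols. Consequently, for $M'$ an alternating computation is automatically even, and the whole task reduces to (a) forcing $M'$ to compute in a strictly alternating, even-length fashion and (b) making its computations mirror precisely the even computations of $M$. The guiding intuition is that an even computation of $M$, read in reverse, behaves like an \eling{} derivation: a pair of consecutive moves erases some $x$ on one side of the head and some $y$ on the other with $\strlen{x} = \strlen{y}$, which is exactly the shape of an even linear rule $p'' \to x p y$.

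First I would decompose every even computation of $M$ into consecutive move-pairs; by Definition \ref{def:computational_restrictions} each pair reads the same number $k$ of symbols in opposite directions, so it erases some $x$ with $\strlen{x} = k$ on the left and some $y$ with $\strlen{y} = k$ on the right (or vice versa), changing the state of $M$ from $p$ to $p''$ by two rules of $M$. For each such compatible pair of rules I would install in $M'$ a \emph{gadget}: a chain of $2k$ fresh states through which $M'$ erases the symbols of $x$ and $y$ one at a time, interleaving a left move with a right move, so that the gadget itself is strictly alternating, has even length $2k$, and produces the same net erasure and the same endpoint state $p''$. To preserve alternation across gadget boundaries, I would \emph{mode-lock} the macro states by keeping two copies $\state{p}_{LR}$ and $\state{p}_{RL}$ of every $p \in Q$, so that a gadget ending with a right move (hence whose successor must start with a left move) can only be followed by another gadget starting left, and symmetrically; the fresh start state is free to choose either mode for its first gadget. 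Declaring accepting only those macro states that correspond to final states of $M$ (and the start state when $s \in F$, for the empty-word case) guarantees that every accepting computation of $M'$ ends on a gadget boundary, i.e.\ after an even number of strictly alternating single-symbol moves. This gives $L(M') = L(M')_\even$ at once, since every accepting computation of $M'$ is even.

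To obtain $L(M') = L(M)_\even$ I would prove, by induction on the number of move-pairs, that a word is erased by $M'$ precisely when it is erased by an even computation of $M$: each gadget corresponds to exactly one move-pair of $M$ with identical net effect and matching endpoints, so the two computations can be read off from one another. The main obstacle is the treatment of the \emph{$\epsilon$-rules} of $M$. A pair both of whose moves read zero symbols erases nothing yet changes the state, and $M'$ may contain no such move; moreover, being simultaneously left and right, these $\epsilon$-moves interact delicately with the alternation condition. I would resolve this by precomputing an $\epsilon$-closure on the states of $M$---recording which state is reachable from which by an even, alternating block of zero-reading moves---and folding that closure into the macro transitions and the choice of accepting states of $M'$, so that no $\epsilon$-rule survives. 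Verifying that this absorption preserves both the alternation bookkeeping and the exact even-computation language is the technically most demanding step; once it is in place, the gadget simulation and the mode-locking reduce to routine inductions.
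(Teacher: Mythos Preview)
Your proposal is correct and follows essentially the same approach as the paper: pair-by-pair gadget simulation of even computations via chains of single-symbol alternating moves, direction-tagged macro states (your $\state{p}_{LR}$/$\state{p}_{RL}$ are the paper's $\state{p\stl}$/$\state{p\str}$) to keep alternation across gadget boundaries, and an $\epsilon$-closure folded into both the transitions and the final-state set to eliminate zero-reading move pairs. The paper carries out exactly these steps, with the intermediate gadget states encoded as $\state{xqy\stl}$ recording the still-unread portions of the pair, and proves correctness by the same kind of induction on move pairs that you outline.
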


\begin{proof}
     Let $M = (Q, \Sigma, R, s, F)$ be an \ietwgfa{}. From $M$, we next construct an $\epsilon$-free \ietwsfa{} $M' = (Q', \Sigma, R', s', F')$ such that $L(M') = L(M')_\even{} = L(M)_\even{}$. Introduce a new symbol $s'$---the start state of $M'$. Let $k = \max\{\strlen{\lhs(r)} - 1 \mid r \in R\}$. Set $\hat{Q} = \{\state{xqy\stl{}} \mid q \in Q, x, y \in \Sigma^*, \strlen{x} + \strlen{y} \le 2k - 1, 0 \le \strlen{y} - \strlen{x} \le 1\} \cup \{\state{xqy\str{}} \mid q \in Q, x, y \in \Sigma^*, \strlen{x} + \strlen{y} \le 2k - 1, 0 \le \strlen{x} - \strlen{y} \le 1\}$. Assume that $s' \notin \hat{Q}$. Set $Q' = \hat{Q} \cup \{s'\}$. Initially, set $R' = \emptyset$ and $F' = \{\state{f\stl{}}, \state{f\str{}} \mid f \in F\}$. If $s \in F$, add $s'$ to $F'$. Extend $R'$ by performing \iref{theoremproof:even_ietwgfa_to_efree_ietwsfa_step_first} through \iref{theoremproof:even_ietwgfa_to_efree_ietwsfa_step_4}, given next, until no more rules can be added to $R'$.
    \begin{enumerate}
        \item If $a_1\dots a_np \to q, qa_{n + 1}\dots a_{2n} \to o \in R$, $a_i \in \Sigma$, $1 \le i \le 2n$, $p, q, o \in Q$, for some $n \ge 1$, add
        \begin{align*}
            a_n\state{p\stl{}} &\to \state{a_1\dots a_{n - 1}oa_{n + 1}\dots a_{2n}\stl{}}, \\
            \state{a_1\dots a_{n - 1}oa_{n + 1}\dots a_{2n}\stl{}}a_{n + 1} &\to \state{a_1\dots a_{n - 1}oa_{n + 2}\dots a_{2n}\stl{}}, \\
            a_{n - 1}\state{a_1\dots a_{n - 1}oa_{n + 2}\dots a_{2n}\stl{}} &\to \state{a_1\dots a_{n - 2}oa_{n + 2}\dots a_{2n}\stl{}}, \\
            &\vdotswithin{\to} \\
            \state{oa_{2n}\stl{}}a_{2n} &\to \state{o\stl{}}
        \end{align*}
        to $R'$; in addition, if $p = s$, include $a_ns' \to \state{a_1\dots a_{n - 1}oa_{n + 1}\dots a_{2n}\stl{}}$ into $R'$.\label{theoremproof:even_ietwgfa_to_efree_ietwsfa_step_first}
        \item If $pa_n\dots a_1 \to q, a_{2n}\dots a_{n + 1}q \to o \in R$, $a_i \in \Sigma$, $1 \le i \le 2n$, $p, q, o \in Q$, for some $n \ge 1$, add
        \begin{align*}
            \state{p\str{}}a_n &\to \state{a_{2n}\dots a_{n + 1}oa_{n - 1}\dots a_1\str{}}, \\
            a_{n + 1}\state{a_{2n}\dots a_{n + 1}oa_{n - 1}\dots a_1\str{}} &\to \state{a_{2n}\dots a_{n + 2}oa_{n - 1}\dots a_1\str{}}, \\
            \state{a_{2n}\dots a_{n + 2}oa_{n - 1}\dots a_1\str{}}a_{n - 1} &\to \state{a_{2n}\dots a_{n + 2}oa_{n - 2}\dots a_1\str{}}, \\
            &\vdotswithin{\to} \\
            a_{2n}\state{a_{2n}o\str{}} &\to \state{o\str{}}
        \end{align*}
        to $R'$; in addition, if $p = s$, add $s'a_n \to \state{a_{2n}\dots a_{n + 1}oa_{n - 1}\dots a_1\str{}}$ to $R'$, too.\label{theoremproof:even_ietwgfa_to_efree_ietwsfa_step_2}
        \item For all $p \to q, q \to o \in R$, $o, p, q \in Q$, and $r' \in R'$ with $\lhs(r') = a\state{o\stl{}}$, $a \in \Sigma$, add $a\state{p\stl{}} \to \rhs(r')$ to $R'$; in addition, if $p = s$, add $as' \to \rhs(r')$ to $R'$, too.\label{theoremproof:even_ietwgfa_to_efree_ietwsfa_step_3}
        \item For all $p \to q, q \to o \in R$, $o, p, q \in Q$, and $r' \in R'$ with $\lhs(r') = \state{o\str{}}a$, $a \in \Sigma$, add $\state{p\str{}}a \to \rhs(r')$ to $R'$; moreover, if $p = s$, also add $s'a \to \rhs(r')$ to $R'$.\label{theoremproof:even_ietwgfa_to_efree_ietwsfa_step_4}
    \end{enumerate}
    Repeat the following extension of $F'$ until no more states can be included in $F'$.   
    \begin{enumerate}[resume]
       \item For all $p \to q, q \to o \in R$, where $o, p, q \in Q$ and $\state{o\stl{}}, \state{o\str{}} \in F'$, add $\state{p\stl{}}$ and $\state{p\str{}}$ to $F'$; in addition, if $p = s$, add $s'$ to $F'$.\label{theoremproof:even_ietwgfa_to_efree_ietwsfa_step_last}
    \end{enumerate}

    \emph{Basic idea.} As is obvious, $M'$ represents an $\epsilon$-free \ietwsfa{}. $M'$ simulates any even computation in $M$ by making sequences of moves, each of which erases a single symbol. To explain step \iref{theoremproof:even_ietwgfa_to_efree_ietwsfa_step_first}, assume that $M$ performs a two-move even computation by rules $a_1\dots a_np \to q, qa_{n + 1}\dots a_{2n} \to o \in R$. Consider the sequence of rules introduced into $R'$ in step \iref{theoremproof:even_ietwgfa_to_efree_ietwsfa_step_first}. Observe that once  $M'$ applies its first rule, $M'$ has to apply all the remaining rules of this sequence in an uninterrupted one-by-one way, and thereby, it simulates the two-step computation in $M$. Notice that the first rule, $a_n\state{p\stl{}} \to \state{a_1\dots a_{n - 1}oa_{n + 1}\dots a_{2n}\stl{}}$, is a left rule. Step \iref{theoremproof:even_ietwgfa_to_efree_ietwsfa_step_2} is analogous to step \iref{theoremproof:even_ietwgfa_to_efree_ietwsfa_step_first}, except that the first rule of the introduced sequence is a right rule. To explain step \iref{theoremproof:even_ietwgfa_to_efree_ietwsfa_step_3}, assume that 
    \begin{enumerate*}[label=(\roman*)]
        \item $M$ performs an even computation according to two $\epsilon$-rules $p \to q, q \to o \in R$ and that
        \item $R'$ contains $r'$ with $\lhs(r') = a\state{o\stl{}}$ ($r'$ is introduced into $R'$ in step \iref{theoremproof:even_ietwgfa_to_efree_ietwsfa_step_first} or \iref{theoremproof:even_ietwgfa_to_efree_ietwsfa_step_3}).
    \end{enumerate*}
    Then, this step introduces $a\state{p\stl{}} \to \rhs(r')$ into $R'$. By using this newly introduced rule, $a\state{p\stl{}} \to \rhs(r')$, $M'$ actually skips over the two-move even computation according to $p \to q$ and $q \to o$ in $M$, after which it enters the state $\rhs(r')$, which occurs as the right-hand side of the first rule of a rule sequence introduced in step \iref{theoremproof:even_ietwgfa_to_efree_ietwsfa_step_first}. Step \iref{theoremproof:even_ietwgfa_to_efree_ietwsfa_step_4} parallels step \iref{theoremproof:even_ietwgfa_to_efree_ietwsfa_step_3}, except that $r'$ is a right rule in step \iref{theoremproof:even_ietwgfa_to_efree_ietwsfa_step_4}, while it is a left rule in step \iref{theoremproof:even_ietwgfa_to_efree_ietwsfa_step_3}.

    Consider $F'$. Assume that an even accepting computation in $M$ ends with an even sequence of moves according to $\epsilon$-rules (including the empty sequence). Observe that at this point, by the extension of $F'$ from step \iref{theoremproof:even_ietwgfa_to_efree_ietwsfa_step_last}, $M'$ accepts, too.

    To establish $L(M')_\even{} = L(M)_\even{}$ formally, we first prove the following two equivalence.
    
    When $M$ is $\epsilon$-free, for all $u, v \in \Sigma^*$ and $p, q \in Q$,
    \begin{equation}\label{claim:ie2gfa_even_in_efree_ie2sfa_even_A}
        u\state{p\stl{}}v \Ra_{\even{}}^* \state{q\stl{}} \text{ in } M' \iff upv \Ra^*_\even{} q \text{ in } M,
    \end{equation}
    where $upv \Ra^*_\even{} q$ starts with a left move (unless it consists of no moves).

    We begin by proving the \emph{only-if} part of equivalence \ref{claim:ie2gfa_even_in_efree_ie2sfa_even_A}. That is, by induction on the number of moves $i \ge 0$, we show that for $\epsilon$-free $M$, $u\state{p\stl{}}v \Ra_{\even{}}^i \state{q\stl{}} \text{ in } M'$ implies $upv \Ra^*_\even{} q \text{ in } M$, where $upv \Ra^*_\even{} q$ starts with a left move (or consists of no moves). Let $i = 0$, so $u\state{p\stl{}}v \Ra_{\even{}}^0 \state{q\stl{}} \text{ in } M'$. Then, $p = q$ and $uv = \epsilon$. Clearly, $q \Ra^0_\even{} q$ in $M$. Let $i = 1$. Then, $u\state{p\stl{}}v \Ra_{\even{}}^1 \state{q\stl{}}$ never occurs in $M'$ because, by Definition \ref{def:computational_restrictions}, each even computation is supposed to have an even number of moves; however, $u\state{p\stl{}}v \Ra_{\even{}}^1 \state{q\stl{}}$ has only one move. Thus, the basis holds true. Assume that the implication holds for all computations consisting of no more than $j$ moves in $M'$, for some $j \in \mathbb{N}_0$. Consider any computation of the form $u\state{p\stl{}}v \Ra_{\even{}}^{j + 2n} \state{q\stl{}}$ in $M'$, for some $n \ge 1$. Let this computation start with the application of $2n$ consecutive rules of the form
    \begin{align*}
        a_n\state{p\stl{}} &\to \state{a_1\dots a_{n - 1}oa_{n + 1}\dots a_{2n}\stl{}}, \\
        \state{a_1\dots a_{n - 1}oa_{n + 1}\dots a_{2n}\stl{}}a_{n + 1} &\to \state{a_1\dots a_{n - 1}oa_{n + 2}\dots a_{2n}\stl{}}, \\
        a_{n - 1}\state{a_1\dots a_{n - 1}oa_{n + 2}\dots a_{2n}\stl{}} &\to \state{a_1\dots a_{n - 2}oa_{n + 2}\dots a_{2n}\stl{}}, \\
        &\vdotswithin{\to} \\
        \state{oa_{2n}\stl{}}a_{2n} &\to \state{o\stl{}}
    \end{align*}
    from $R'$, where $o \in Q$ and $a_k \in \Sigma$ for all $1 \le k \le 2n$. Thus, we can express $u\state{p\stl{}}v \Ra_{\even{}}^{j + 2n} \state{q\stl{}}$ as
    \begin{align*}
        u'a_1\ldots a_n\state{p\stl{}}a_{n + 1}\ldots a_{2n}v' &\Ra u'a_1\ldots a_{n - 1}\state{a_1\dots a_{n - 1}oa_{n + 1}\dots a_{2n}\stl{}}a_{n + 1}\ldots a_{2n}v' \\
        &\Ra u'a_1\ldots a_{n - 1}\state{a_1\dots a_{n - 1}oa_{n + 2}\dots a_{2n}\stl{}}a_{n + 2}\ldots a_{2n}v' \\
        &\Ra u'a_1\ldots a_{n - 2}\state{a_1\dots a_{n - 2}oa_{n + 2}\dots a_{2n}\stl{}}a_{n + 2}\ldots a_{2n}v' \\
        &\vdotswithin{\Ra} \\
        &\Ra u'\state{oa_{2n}\stl{}}a_{2n}v' \Ra u'\state{o\stl{}}v' \Ra^j_\even{} \state{q\stl{}}
    \end{align*}
    in $M'$, where $u'a_1\ldots a_n = u$ and $a_{n + 1}\ldots a_{2n}v' = v$. According to the induction hypothesis, $u'ov' \Ra^*_\even{} q$ in $M$, and this computation starts with a left move (or consists of no moves at all). Step \iref{theoremproof:even_ietwgfa_to_efree_ietwsfa_step_first} above constructs $a_n\state{p\stl{}} \to \state{a_1\dots a_{n - 1}oa_{n + 1}\dots a_{2n}\stl{}},\allowbreak \ldots, \state{oa_{2n}\stl{}}a_{2n} \to \state{o\stl{}} \in R'$ from $a_1\dots a_np \to t, ta_{n + 1}\dots a_{2n} \to o \in R$, for some $t \in Q$, so $M$ makes $$u'a_1\dots a_npa_{n + 1}\dots a_{2n}v' \Ra u'ta_{n + 1}\dots a_{2n}v' \Ra u'ov' \Ra^*_\even{} q.$$ Taking into account the properties of $u'ov' \Ra^*_\even{} q$, since $u'a_1\ldots a_n = u$, $a_{n + 1}\ldots a_{2n}v' = v$, it follows that $upv \Ra^*_\even{} q$ in $M$. As we can see, $upv \Ra^*_\even{} q$ starts with a~left move, which completes the induction step.

    Next, we prove the \emph{if} part of equivalence \eqref{claim:ie2gfa_even_in_efree_ie2sfa_even_A}. By induction on the number of moves $i \ge 0$, we show that for $\epsilon$-free $M$, $upv \Ra^i_\even{} q$, which starts with a left move (or consists of no moves), in $M \implies u\state{p\stl{}}v \Ra_{\even{}}^* \state{q\stl{}} \text{ in } M'$. Let $i = 0$, so $upv \Ra^0_\even{} q \text{ in } M$. Then, $p = q$ and $uv = \epsilon$. Clearly, $\state{q\stl{}} \Ra_{\even{}}^0 \state{q\stl{}}$ in $M'$. For $i = 1$, $upv \Ra^1_\even{} q$ never occurs in $M$ since, by the definition of even computation (see Definition \ref{def:computational_restrictions}), every $upv \Ra^*_\even{} q$ consists of an even number of moves; however, $upv \Ra^1_\even{} q$ consists of a single move only. Thus, the basis holds true. Assume that the implication holds for all computations consisting of no more than $j$ moves in $M$, for some $j \in \mathbb{N}_0$. Consider any computation of the form $upv \Ra^{j + 2}_\even{} q$ in $M$. Let this computation start with the application of two consecutive rules of the form $$a_1\dots a_np \to t \text{ and } ta_{n + 1}\dots a_{2n} \to o$$ from $R$, where $o, t \in Q$ and $a_k \in \Sigma$ for all $1 \le k \le 2n$, for some $n \ge 1$. Hence, we can express $upv \Ra^{j + 2}_\even{} q$ as $$u'a_1\dots a_npa_{n + 1}\dots a_{2n}v' \Ra u'ta_{n + 1}\dots a_{2n}v' \Ra u'ov' \Ra^j_\even{} q$$ in $M$, where $u'a_1\dots a_n = u$ and $a_{n + 1}\dots a_{2n}v' = v$. According to the induction hypothesis $u'\state{o\stl{}}v' \Ra_{\even{}}^* \state{q\stl{}} \text{ in } M'$. From $a_1\dots a_np \to t, ta_{n + 1}\dots a_{2n} \to o \in R$, step \iref{theoremproof:even_ietwgfa_to_efree_ietwsfa_step_first} constructs
    \begin{align*}
        a_n\state{p\stl{}} &\to \state{a_1\dots a_{n - 1}oa_{n + 1}\dots a_{2n}\stl{}}, \\
        \state{a_1\dots a_{n - 1}oa_{n + 1}\dots a_{2n}\stl{}}a_{n + 1} &\to \state{a_1\dots a_{n - 1}oa_{n + 2}\dots a_{2n}\stl{}}, \\
        a_{n - 1}\state{a_1\dots a_{n - 1}oa_{n + 2}\dots a_{2n}\stl{}} &\to \state{a_1\dots a_{n - 2}oa_{n + 2}\dots a_{2n}\stl{}}, \\
        &\vdotswithin{\to} \\
        \state{oa_{2n}\stl{}}a_{2n} &\to \state{o\stl{}} \in R',
    \end{align*}
    so $M'$ makes
    \begin{align*}
        u'a_1\ldots a_n\state{p\stl{}}a_{n + 1}\ldots a_{2n}v' &\Ra u'a_1\ldots a_{n - 1}\state{a_1\dots a_{n - 1}oa_{n + 1}\dots a_{2n}\stl{}}a_{n + 1}\ldots a_{2n}v' \\
        &\Ra u'a_1\ldots a_{n - 1}\state{a_1\dots a_{n - 1}oa_{n + 2}\dots a_{2n}\stl{}}a_{n + 2}\ldots a_{2n}v' \\
        &\Ra u'a_1\ldots a_{n - 2}\state{a_1\dots a_{n - 2}oa_{n + 2}\dots a_{2n}\stl{}}a_{n + 2}\ldots a_{2n}v' \\
        &\vdotswithin{\Ra} \\
        &\Ra u'\state{oa_{2n}\stl{}}a_{2n}v' \Ra u'\state{o\stl{}}v' \Ra^*_\even{} \state{q\stl{}}.
    \end{align*}
    Notice that by the construction technique of $M'$, $u'\state{o\stl{}}v' \Ra^*_\even{} \state{q\stl{}}$ can never starts with a~right move. This, along with the fact that $u'a_1\ldots a_n = u$ and $a_{n + 1}\ldots a_{2n}v' = v$, implies that $u\state{p\stl{}}v \Ra_{\even{}}^* \state{q\stl{}}$ in $M'$. Thus, the induction step is completed, and equivalence \eqref{claim:ie2gfa_even_in_efree_ie2sfa_even_A} holds.

    When $M$ is $\epsilon$-free, for all $u, v \in \Sigma^*$ and $p, q \in Q$,
    \begin{equation}\label{claim:ie2gfa_even_in_efree_ie2sfa_even_B}
        u\state{p\str{}}v \Ra_{\even{}}^* \state{q\str{}} \text{ in } M' \iff upv \Ra^*_\even{} q \text{ in } M,
    \end{equation}
    where $upv \Ra^*_\even{} q$ starts with a right move (unless it consists of no moves).

    Prove equivalence \eqref{claim:ie2gfa_even_in_efree_ie2sfa_even_B} by analogy with the proof of equivalence \eqref{claim:ie2gfa_even_in_efree_ie2sfa_even_A}.

    Equivalences \eqref{claim:ie2gfa_even_in_efree_ie2sfa_even_A} and \ref{claim:ie2gfa_even_in_efree_ie2sfa_even_B} demonstrate the correctness of steps \iref{theoremproof:even_ietwgfa_to_efree_ietwsfa_step_first} and \iref{theoremproof:even_ietwgfa_to_efree_ietwsfa_step_2} from the above construction technique. However, it does not address the elimination of $\epsilon$-rules of $M$ in steps \iref{theoremproof:even_ietwgfa_to_efree_ietwsfa_step_3}, \iref{theoremproof:even_ietwgfa_to_efree_ietwsfa_step_4}, and \iref{theoremproof:even_ietwgfa_to_efree_ietwsfa_step_last}. For this reason, we next establish equivalences \eqref{claim:ie2gfa_even_in_efree_ie2sfa_even_C}, \eqref{claim:ie2gfa_even_in_efree_ie2sfa_even_D}, and \eqref{claim:ie2gfa_even_in_efree_ie2sfa_even_E}.

    For all $x \in \Sigma^*, y \in \Sigma^+, a \in \Sigma$, and $p, t \in Q$ such that $\strlen{x} + 1 = \strlen{y}$,
    \begin{equation}\label{claim:ie2gfa_even_in_efree_ie2sfa_even_C}
        a\state{p\stl{}} \Ra \state{xty\stl{}} \text{ in } M' \iff \text{there are $o, q \in Q$ such that } xapy \Ra^*_\even{} xaqy \Ra oy \Ra t \text{ in } M.
    \end{equation}

    First, we establish the \emph{only-if} part of equivalence \eqref{claim:ie2gfa_even_in_efree_ie2sfa_even_C}. By induction on $i \ge 0$, which represents the number of iterations of step \iref{theoremproof:even_ietwgfa_to_efree_ietwsfa_step_3}, we show that $a\state{p\stl{}} \Ra \state{xty\stl{}} \text{ in } M'$ implies that there are $o, q \in Q$ such that $xapy \Ra^*_\even{} xaqy \Ra oy \Ra t \text{ in } M$. For $i = 0$, $a\state{p\stl{}} \Ra \state{xty\stl{}} \text{ in } M'$ can only be performed using a rule of the form $a\state{p\stl{}} \to \state{xty\stl{}}$ added to $R'$ in step \iref{theoremproof:even_ietwgfa_to_efree_ietwsfa_step_first}. Then, since step \iref{theoremproof:even_ietwgfa_to_efree_ietwsfa_step_first} constructs $a\state{p\stl{}} \to \state{xty\stl{}} \in R'$ from $xap \to g, gy \to t \in R$, for some $g \in Q$, it follows that $xapy \Ra gy \Ra t$ in $M$. Thus, the basis holds true. Assume that the implication holds for no more than $j$ iterations of step \iref{theoremproof:even_ietwgfa_to_efree_ietwsfa_step_3}, for some $j \in \mathbb{N}_0$. Consider any $a\state{p\stl{}} \Ra \state{xty\stl{}} \text{ in } M'$ performed using a~rule of the form $a\state{p\stl{}} \to \state{xty\stl{}}$ that belongs to $R'$ from the $(j + 1)$th iteration of step \iref{theoremproof:even_ietwgfa_to_efree_ietwsfa_step_3}. From this, it follows that there exist $p \to g, g \to h \in R$, for some $g, h \in Q$, and $a\state{h\stl{}} \to \state{xty\stl{}} \in R'$ that was added to $R'$ during the $j$th iteration of step \iref{theoremproof:even_ietwgfa_to_efree_ietwsfa_step_3}. By the induction hypothesis, there are $o, q \in Q$ such that $xahy \Ra^*_\even{} xaqy \Ra oy \Ra t$ in $M$, so $M$ can make $$xapy \Ra xagy \Ra xahy \Ra^*_\even{} xaqy \Ra oy \Ra t.$$ By the definition of even computation, $xapy \Ra^*_\even{} xaqy \Ra oy \Ra t$ in $M$. Hence, the induction step is completed.

    Next, we establish the \emph{if} part of equivalence \eqref{claim:ie2gfa_even_in_efree_ie2sfa_even_C}. By induction on the number of moves $i \ge 0$, we show that $xapy \Ra^i_\even{} xaqy \Ra oy \Ra t \text{ in } M \implies a\state{p\stl{}} \to \state{xty\stl{}} \in R'$. Let $i = 0$, so $xapy \Ra^0_\even{} xaqy \Ra oy \Ra t \text{ in } M$. Then, $p = q$. Clearly, according to step \iref{theoremproof:even_ietwgfa_to_efree_ietwsfa_step_first}, $a\state{q\stl{}} \to \state{xty\stl{}} \in R'$, so $a\state{q\stl{}} \Ra \state{xty\stl{}} \text{ in } M'$. Let $i = 1$, so $xapy \Ra^1_\even{} xaqy \Ra oy \Ra t \text{ in } M$. This never happens because, by the definition of even computation, every computation of the form $p \Ra^*_\even{} q$ consists of an even number of moves; however, $p \Ra^1_\even{} q$ consists of only one, which is an odd number. Thus, the basis holds true. Assume that the implication holds for all computations of the form $xapy \Ra^{k}_\even{} xaqy \Ra oy \Ra t \text{ in } M$ with $0 \le k \le j$, for some $j \in \mathbb{N}_0$. Consider any computation of the form $xapy \Ra^{j + 2}_\even{} xaqy \Ra oy \Ra t \text{ in } M$, and let it start with the application of two consecutive rules of the form $$p \to g \text{ and } g \to h$$ from $R$, where $g, h \in Q$. Then, we can express $xapy \Ra^{j + 2}_\even{} xaqy \Ra oy \Ra t$ as $$xapy \Ra xagy \Ra xahy \Ra^j_\even{} xaqy \Ra oy \Ra t$$ in $M$. Clearly, by the induction hypothesis, $a\state{h\stl{}} \Ra \state{xty\stl{}}$ in $M'$. Hence, $a\state{h\stl{}} \to \state{xty\stl{}} \in R'$. From $a\state{h\stl{}} \to \state{xty\stl{}} \in R'$ and $p \to g, g \to h \in R$, step \iref{theoremproof:even_ietwgfa_to_efree_ietwsfa_step_3} constructs $a\state{p\stl{}} \to \state{xty\stl{}} \in R'$, so $a\state{p\stl{}} \Ra \state{xty\stl{}}$ in $M'$. Thus, the induction step is completed, and equivalence \eqref{claim:ie2gfa_even_in_efree_ie2sfa_even_C} holds.

    For all $x \in \Sigma^+, y \in \Sigma^*, a \in \Sigma$, and $p, t \in Q$ such that $\strlen{x} = \strlen{y} + 1$,
    \begin{equation}\label{claim:ie2gfa_even_in_efree_ie2sfa_even_D}
        \state{p\str{}}a \to \state{xty\str{}} \in R' \iff \text{there are $o, q \in Q$ such that } xpay \Ra^*_\even{} xqay \Ra xo \Ra t \text{ in } M.
    \end{equation}

    Prove equivalence \ref{claim:ie2gfa_even_in_efree_ie2sfa_even_D} by analogy with the proof of equivalence \eqref{claim:ie2gfa_even_in_efree_ie2sfa_even_C}.

    For all $p \in Q$,
    \begin{equation}\label{claim:ie2gfa_even_in_efree_ie2sfa_even_E}
        \state{p\stl{}}, \state{p\str{}} \in F' \iff \text{there is $f \in F$ such that } p \Ra^*_\even{} f \text{ in } M.
    \end{equation}

    First, we establish the \emph{only-if} part of equivalence \eqref{claim:ie2gfa_even_in_efree_ie2sfa_even_E}. By induction on $i \ge 0$, which represents the number of iterations of step \iref{theoremproof:even_ietwgfa_to_efree_ietwsfa_step_last}, we prove that $\state{q\stl{}}, \state{q\str{}} \in F'$ implies that there is $f \in F$ such that $q \Ra^*_\even{} f \text{ in } M$. For $i = 0$, only $\state{f\stl{}}, \state{f\str{}} \in F'$ for all $f \in F$. Clearly, $f \Ra^0 f$ in $M$, so the basis holds true. Assume that the implication holds for no more than $j$ iterations of step \iref{theoremproof:even_ietwgfa_to_efree_ietwsfa_step_last}, for some $j \in \mathbb{N}_0$. Consider any $\state{p\stl{}}, \state{p\str{}} \in Q'$ that belong to $F'$ since the $(j + 1)$th iteration of step \iref{theoremproof:even_ietwgfa_to_efree_ietwsfa_step_last}. Then, there exist $p \to q, p \to o \in R$, for some $o, q \in Q$, and $\state{o\stl{}}, \state{o\str{}} \in Q'$ that were added to $F'$ during the $j$th iteration of step \iref{theoremproof:even_ietwgfa_to_efree_ietwsfa_step_last}. By the induction hypothesis, there is $f \in F$ such that $o \Ra^*_\even{} f$ in $M$, so $M$ can make $$p \Ra q \Ra o \Ra^*_\even{} f.$$ Hence, by the definition of even computation, $p \Ra^*_\even{} f$, and the induction step is completed.

    Now, we establish the \emph{if} part of equivalence \eqref{claim:ie2gfa_even_in_efree_ie2sfa_even_E}. By induction on the number of moves $i \ge 0$, we show that $p \Ra^i_\even{} f$, where $f \in F$, in $M$ implies $\state{p\stl{}}, \state{p\str{}} \in F'$. Let $i = 0$, so $p \Ra^0_\even{} f$ in $M$. Then, $p = f$. Clearly, $\state{f\stl{}}, \state{f\str{}} \in F'$ as $\state{q\stl{}}, \state{q\str{}} \in F'$ for all $q \in F$. For $i = 1$, $p \Ra^1_\even{} f$ never occurs in $M$ because, by Definition \ref{def:computational_restrictions}, every even computation is supposed to have an even number of moves. However, $p \Ra^1_\even{} f$ has only one move. Therefore, the basis holds true. Assume that the implication holds for all computations consisting of no more than $j$ moves in $M$, for some $j \in \mathbb{N}_0$. Consider any computation of the form $p \Ra^{j + 2}_\even{} f$ in $M$ with $f \in F$. Let this computation start with the application of two consecutive rules of the form $$p \to q \text{ and } q \to o$$ from $R$, where $o, q \in Q$. Thus, we can express $p \Ra^{j + 2}_\even{} f$ as $$p \Ra q \Ra o \Ra^j_\even{} f$$ in $M$. By the induction hypothesis, $\state{o\stl{}}, \state{o\str{}} \in F'$. Since $\state{o\stl{}}, \state{o\str{}} \in F'$ and $p \to q, q \to o \in R$, step \iref{theoremproof:even_ietwgfa_to_efree_ietwsfa_step_last} adds $\state{p\stl{}}$ and $\state{p\str{}}$ to $F'$. Thus, the induction step is completed, and equivalence \eqref{claim:ie2gfa_even_in_efree_ie2sfa_even_E} holds.

    Based on equivalences \eqref{claim:ie2gfa_even_in_efree_ie2sfa_even_A}, \eqref{claim:ie2gfa_even_in_efree_ie2sfa_even_B}, \eqref{claim:ie2gfa_even_in_efree_ie2sfa_even_C}, \eqref{claim:ie2gfa_even_in_efree_ie2sfa_even_D}, and \eqref{claim:ie2gfa_even_in_efree_ie2sfa_even_E} above, we can conclude that for all $u, v \in \Sigma^*$ and $p, q \in Q$, $u\state{p\stl{}}v \Ra^*_\even{} \state{q\stl{}}$ or $u\state{p\str{}}v \Ra^*_\even{} \state{q\str{}}$ in $M'$, where $\state{q\stl{}}, \state{q\str{}} \in F'$, iff there is $f \in F$ such that $upv \Ra^*_\even{} q \Ra^*_\even{} f$ in $M$. Considering this equivalence for $p = s$, $u\state{s\stl{}}v \Ra^*_\even{} \state{q\stl{}}$ or $u\state{s\str{}}v \Ra^*_\even{} \state{q\str{}}$ in $M'$, where $\state{q\stl{}}, \state{q\str{}} \in F'$, iff there is $f \in F$ such that $usv \Ra^*_\even{} q \Ra^*_\even{} f$ in $M$. As follows from the construction technique, $M'$ starts every computation from its initial state $s'$, from which the same moves can be made as from the states $\state{s\stl{}}$ and $\state{s\str{}}$. In other words, $M'$ starts each computation using either a rule of the form $as' \to t'$, for which $a\state{s\stl{}} \to t' \in R'$, or a rule of the form $s'a \to t'$, for which $\state{s\str{}}a \to t' \in R'$, where $a \in \Sigma$ and $t' \in Q'$. Consequently, $us'v \Ra^*_\even{} \state{q\stl{}}$ or $us'v \Ra^*_\even{} \state{q\str{}}$ in $M'$, where $\state{q\stl{}}, \state{q\str{}} \in F'$, iff there is $f \in F$ such that $usv \Ra^*_\even{} q \Ra^*_\even{} f$ in $M$. Hence, $L(M')_\even{} = L(M)_\even{}$.

    Obviously, $L(M')_\even{} \subseteq L(M')$ follows directly from the definition of even computation. The opposite inclusion, $L(M') \subseteq L(M')_\even{}$, follows from the construction technique above. Indeed, for each state of $M'$ except $s'$, according to the rules in $R'$, all incoming moves read symbols in one direction, while all outgoing moves read symbols in the opposite direction. From $s'$, both left and right moves can be made, as no move ever leads to this state. Therefore, Theorem \ref{theorem:ie2gfa_even_in_e-free_ie2sfa_even} holds.
\end{proof}

\begin{theorem}\label{theorem:ie2gfa_even_in_even}
    $\lfam{\epsilon}{\ietwgfa{}}_\even{} \subset \lfam{}{\evenlfam}$.
\end{theorem}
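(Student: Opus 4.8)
The plan is to split the statement into the inclusion $\lfam{\epsilon}{\ietwgfa{}}_\even{} \subseteq \lfam{}{\evenlfam}$ and its properness, and to prove each separately.

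For the inclusion, I would show that every string accepted by an even computation has even length. Take any \ietwgfa{} $M = (Q, \Sigma, R, s, F)$ and any accepting even computation $usv \Ra^*_\even{} f$ with $f \in F$ and $u, v \in \Sigma^*$. By Definition \ref{def:computational_restrictions}, this computation has an even number $n$ of moves, and for every odd $i$ the $i$th and $(i+1)$th moves read the same number of input symbols; hence the moves split into consecutive pairs, the pair indexed by odd $i$ reading $2m_i$ symbols for some $m_i \ge 0$. Since in $M$ every read symbol is erased, each occurrence of an input symbol is read exactly once during the computation, and the final configuration $f$ contains no input symbols, the total number of symbols read equals $\strlen{uv}$. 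Therefore $\strlen{uv}$ is even (the empty input being harmless, as $0$ is even), so $L(M)_\even{}$ consists solely of even-length strings, that is, $L(M)_\even{} \in \lfam{}{\evenlfam}$.

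For properness, I would first note that $\lfam{\epsilon}{\ietwgfa{}}_\even{}$ lies inside the family of linear languages. Indeed, by Theorem \ref{theorem:ie2gfa_even_in_e-free_ie2sfa_even}, for every \ietwgfa{} $M$ there is an $\epsilon$-free \ietwsfa{} $M'$ with $L(M)_\even{} = L(M')$; and $L(M') \in \lfam{}{\ietwsfa{}}$, which by Theorem \ref{theorem:ietwgfa_=_ietwsfa_=_efree-ietwsfa} together with the identity $\lfam{\epsilon}{\ietwgfa{}} = \lfam{\epsilon}{\ling{}}$ coincides with the linear language family. Thus every even-computation language is linear. It then suffices to exhibit a language all of whose strings have even length but that is not linear; I would take $L = \{ww \mid w \in \{a, b\}^*\}$. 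Every member $ww$ has length $2\strlen{w}$, so $L \in \lfam{}{\evenlfam}$, whereas $L$ is not context-free and therefore not linear, so $L \notin \lfam{\epsilon}{\ietwgfa{}}_\even{}$. Combined with the inclusion above, this witness establishes the proper inclusion.

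The length-counting argument is routine; the only point needing care is the passage to the linear family, where Theorem \ref{theorem:ie2gfa_even_in_e-free_ie2sfa_even} must be invoked to replace an even-computation language by an ordinary $\epsilon$-free \ietwsfa{} language before identifying it as linear. I expect the main obstacle to be the choice of witness: one must select an even-length language lying outside the linear family and justify its non-linearity. The classical non-context-freeness of $\{ww \mid w \in \{a, b\}^*\}$ settles this, and if a self-contained argument is preferred it can be derived from the pumping lemma for linear languages.
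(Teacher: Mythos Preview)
Your argument is correct. The inclusion part matches the paper's reasoning, though you spell out explicitly what the paper compresses into a single sentence: the paper just says ``each even computation consists of an even number of moves, [so] each language in $\lfam{\epsilon}{\ietwgfa{}}_\even{}$ clearly contains even-length strings only,'' whereas you make precise that the pairing condition forces each consecutive pair of moves to consume $2m_i$ symbols. Your formulation is the cleaner one, since for a general \ietwgfa{} the mere evenness of the move count does not by itself force even input length; the pairing condition is what does the work.

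For properness the two approaches diverge. The paper gives no explicit witness at all and treats the strictness as self-evident---presumably because $\lfam{}{\evenlfam}$ is the family of \emph{all} languages of even-length strings, which is uncountable, while $\lfam{\epsilon}{\ietwgfa{}}_\even{}$ is countable. Your route instead bounds $\lfam{\epsilon}{\ietwgfa{}}_\even{}$ inside the linear languages via Theorem~\ref{theorem:ie2gfa_even_in_e-free_ie2sfa_even} and Theorem~\ref{theorem:ietwgfa_=_ietwsfa_=_efree-ietwsfa}, and then exhibits the even-length non-context-free language $\{ww \mid w \in \{a,b\}^*\}$. This is perfectly valid and has the virtue of giving a concrete separating language, at the cost of invoking heavier machinery than the one-line cardinality observation.
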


\begin{proof}
    As each even computation consists of an even number of moves, each language in $\lfam{\epsilon}{\ietwgfa{}}_\even{}$ clearly contains even-length strings only. Thus, $\lfam{\epsilon}{\ietwgfa{}}_\even{} \subset \lfam{}{\evenlfam}$.
\end{proof}

\begin{theorem}\label{theorem:even_ietwgfa_incomparable_with_sing_fin_reg}
    $\lfam{\epsilon}{\ietwgfa{}}_\even{}$ is incomparable with any of these language families---$\lfam{}{\singlfam}$, $\lfam{}{\finlfam}$, and $\lfam{}{\reglfam}$.
\end{theorem}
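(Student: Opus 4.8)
The plan is to establish all three incomparabilities at once. Writing $\mathcal{E} = \lfam{\epsilon}{\ietwgfa{}}_\even{}$, recall that incomparability of $\mathcal{E}$ with a family $\mathcal{F}$ amounts to three separate claims: $\mathcal{E} \nsubseteq \mathcal{F}$, $\mathcal{F} \nsubseteq \mathcal{E}$, and $\mathcal{E} \cap \mathcal{F} \neq \emptyset$. Since $\lfam{}{\singlfam} \subset \lfam{}{\finlfam} \subset \lfam{}{\reglfam}$, I would choose witnesses that work uniformly for all three target families, so that a single language settles each claim in all three cases simultaneously.

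For the direction $\mathcal{F} \nsubseteq \mathcal{E}$, I would invoke Theorem \ref{theorem:ie2gfa_even_in_even}, which gives $\mathcal{E} \subseteq \lfam{}{\evenlfam}$; hence every language in $\mathcal{E}$ contains only even-length strings. The singleton $\{a\}$ is singular, and therefore lies in all of $\lfam{}{\singlfam}$, $\lfam{}{\finlfam}$, and $\lfam{}{\reglfam}$, yet it cannot lie in $\mathcal{E}$ because $a$ has odd length. This single witness rules out $\mathcal{F} \subseteq \mathcal{E}$ for each of the three families. For the intersection $\mathcal{E} \cap \mathcal{F} \neq \emptyset$, I would exhibit one language that is simultaneously singular and in $\mathcal{E}$; the cleanest candidate is $\{\epsilon\}$, accepted by the trivial \ietwgfa{} whose only state is a final start state and whose rule set is empty (the empty-move computation counts as even, and no nonempty string can be erased). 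As $\{\epsilon\}$ is singular, it belongs to all three families, so all three intersections are nonempty.

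The main work lies in the remaining direction $\mathcal{E} \nsubseteq \mathcal{F}$, where I would produce a non-regular language in $\mathcal{E}$; since the three families sit inside $\lfam{}{\reglfam}$, any such language witnesses $\mathcal{E} \nsubseteq \mathcal{F}$ for all three at once. My candidate is $L = \{a^nb^n \mid n \ge 0\}$, which is linear and non-regular. I would take the \ietwsfa{} with states $s$ (start) and $t$, final set $F = \{s\}$, and the two rules $as \to t$ and $tb \to s$: starting from a configuration $a^nsb^n$, the automaton alternately peels one $a$ off to the left and one $b$ off to the right, each such pair of moves reading exactly one symbol in each direction, so the whole computation is alternating and even and ends in $s \in F$; the case $n = 0$ is covered by the empty-move computation. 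The delicate point—the one I expect to be the principal obstacle—is verifying that $L(M)_\even{}$ equals $L$ exactly, that is, that no spurious strings are accepted: I would argue that from $s$ only a left $a$-move is possible and from $t$ only a right $b$-move, so any accepting even computation must consume the input strictly as $a^n \cdots b^n$ and can terminate (back in $s$ with empty tape) only when the two counts match, forcing the input into the form $a^nb^n$. A standard pumping-lemma argument then shows $L \notin \lfam{}{\reglfam}$, and combining this with the two preceding directions yields incomparability of $\mathcal{E}$ with each of $\lfam{}{\singlfam}$, $\lfam{}{\finlfam}$, and $\lfam{}{\reglfam}$.
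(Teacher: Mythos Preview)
Your proposal is correct and follows essentially the same approach as the paper: both use Theorem~\ref{theorem:ie2gfa_even_in_even} together with an odd-length singleton (the paper also picks $\{a\}$) to show $\mathcal{F}\nsubseteq\mathcal{E}$, and both use $\{a^nb^n\mid n\ge 0\}$ as the witness in $\mathcal{E}\setminus\mathcal{F}$. The only noteworthy difference is the intersection witness---the paper uses $\{aa\}$ whereas you use $\{\epsilon\}$---and you spell out the accepting \ietwsfa{} for $\{a^nb^n\}$ in more detail than the paper, which simply asserts membership; both choices are fine.
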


\begin{proof}
    Let $L \in \lfam{\epsilon}{\ietwgfa{}}_\even{}$. By Theorem \ref{theorem:ie2gfa_even_in_even}, $x \in L$ implies that $\strlen{x}$ is even. Thus, any $\{y\} \in \lfam{}{\singlfam{}}$ with $\strlen{y}$ being odd, such as $\{a\}$, is outside of $\lfam{\epsilon}{\ietwgfa{}}_\even{}$. Clearly, $\{aa\} \in \lfam{\epsilon}{\ietwgfa{}}_\even{} \cap \lfam{}{\singlfam{}}$. Notice that $\{a^nb^n \mid n \ge 0\} \in \lfam{\epsilon}{\ietwgfa{}}_\even{} \setminus \lfam{}{\singlfam{}}$. The rest of this proof is left to the reader as it follows the same reasoning.
\end{proof}

\begin{theorem}\label{theorem:ieven_ietwgfa_to_ietwsfa}
    For every \ietwgfa{} $M$, there is an \ietwsfa{} $M' = (Q', \Sigma', R', s', F')$ such that
    \begin{enumerate}[label=(\roman*)]
        \item $r \in R'$ implies $\rhs(r) \neq s'$, and $\strlen{\lhs(r)} = 1$ implies $\lhs(r) = s'$;
        \item $L(M') = L(M')_\ieven{} = L(M)_\ieven{}$.
    \end{enumerate}
\end{theorem}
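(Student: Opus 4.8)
The plan is to obtain $M'$ by prefixing the even-computation simulator of Theorem~\ref{theorem:ie2gfa_even_in_e-free_ie2sfa_even} with a single opening move. By Definition~\ref{def:computational_restrictions}, an initialized even computation is precisely one move followed by an even computation, and the construction behind Theorem~\ref{theorem:ie2gfa_even_in_e-free_ie2sfa_even} already simulates every even computation of $M$ \emph{from an arbitrary state} $q$ by means of its buffer states $\state{xqy\stl{}}$ and $\state{xqy\str{}}$. It therefore suffices to install a fresh start state $s'$ that performs $M$'s opening move and then enters the matching buffer state, whereupon the inherited machinery executes the even tail.

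Concretely, I would take the states, rules, and final states of the automaton from Theorem~\ref{theorem:ie2gfa_even_in_e-free_ie2sfa_even} (keeping $\Sigma' = \Sigma$), rename its start state away, and adjoin $s'$. For every rule of $M$ whose left-hand side contains $s$ --- i.e.\ every opening move $sx \to p$ or $xs \to p$ with $x \in \Sigma^*$ --- I would add launch rules out of $s'$ feeding into the buffer states for $p$. The degenerate openings already conform to clause~(i): an $\epsilon$-opening $s \to p$ yields an $\epsilon$-rule out of $s'$, legal because its left-hand side is $s'$, while a one-symbol opening yields a simple rule $s'a \to t$ or $as' \to t$ for a suitable state $t$. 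Clause~(i) is then secured globally by never putting $s'$ on a right-hand side and by admitting $\epsilon$-rules only at $s'$; consequently $s'$ is entered once, at the very start, and, exactly as in the closing argument of Theorem~\ref{theorem:ie2gfa_even_in_e-free_ie2sfa_even}, every other state has all its incoming moves in one direction and all its outgoing moves in the other. This discipline forces the entire post-opening computation to alternate and, since each of its moves erases a single symbol, to be even; hence $L(M') = L(M')_\ieven{}$, the reverse inclusion being immediate from the definition.

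For $L(M')_\ieven{} = L(M)_\ieven{}$ I would argue by induction on the number of moves, mirroring equivalences~\eqref{claim:ie2gfa_even_in_efree_ie2sfa_even_A}--\eqref{claim:ie2gfa_even_in_efree_ie2sfa_even_E}. The even tail is covered verbatim by equivalences~\eqref{claim:ie2gfa_even_in_efree_ie2sfa_even_A}, \eqref{claim:ie2gfa_even_in_efree_ie2sfa_even_B}, and~\eqref{claim:ie2gfa_even_in_efree_ie2sfa_even_E}; the genuinely new ingredient is a launch equivalence asserting that, starting from $s'$, a single opening move followed by an even computation reaching a final state of $M'$ is possible on $uv$ if and only if $usv \Ra^*_\ieven{} f$ for some $f \in F$ in $M$ --- that is, the $s'$-rules reproduce exactly $M$'s opening move. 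Specializing as in Theorem~\ref{theorem:ie2gfa_even_in_e-free_ie2sfa_even} and invoking the direction argument above then yields $L(M') = L(M')_\ieven{} = L(M)_\ieven{}$.

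The hard part is the opening move when $x$ is long. Clause~(i) spends $s'$ on a single symbol, yet $M$ erases the whole block $x$ atomically and from one side, whereas every later move of $M'$ must alternate direction and pair up by equal length. Since the opening contributes $\strlen{x}$ symbols of one-sided surplus while the even tail is internally balanced, $M'$ cannot read the input along $M$'s split; it must instead read from a near-central split, interleaving the leftover symbols of $x$ with the opposite-direction reads of the ensuing even computation and carrying them in buffer states --- the reordering device of Theorem~\ref{theorem:ie2gfa_even_in_e-free_ie2sfa_even}, now applied with a one-symbol offset absorbed by the lone $s'$-move. Making these buffered launch rules dovetail with the inherited even-simulation rules, so that the remainder of $x$ is released exactly as the tail supplies matching reads on the opposite side, is the crux and the step I expect to demand the most care.
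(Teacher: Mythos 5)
Your overall route is the paper's: keep the even-computation simulator of Theorem~\ref{theorem:ie2gfa_even_in_e-free_ie2sfa_even}, discard its start state, adjoin a fresh $s'$ whose rules reproduce $M$'s opening move and land in the states $\state{q\stl{}}$, $\state{q\str{}}$, and secure clause~(i) exactly as you say (no rule targets $s'$, $\epsilon$-rules only at $s'$, and the incoming-one-direction/outgoing-the-other discipline forces $L(M') \subseteq L(M')_\ieven{}$). The degenerate openings and the proof architecture via equivalences \eqref{claim:ie2gfa_even_in_efree_ie2sfa_even_A}, \eqref{claim:ie2gfa_even_in_efree_ie2sfa_even_B}, and \eqref{claim:ie2gfa_even_in_efree_ie2sfa_even_E} plus a new launch equivalence all match the paper's proof.

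The gap is the step you yourself flag as the crux, and your sketch of it points away from the construction that actually works. You propose to ``interleave the leftover symbols of $x$ with the opposite-direction reads of the ensuing even computation,'' i.e., to pair the surplus of the opening block against the tail's reads and make the launch rules ``dovetail'' with the inherited ones. That dovetailing is unnecessary and would be hard to control, since the tail is already internally paired. The paper's device is entirely local to the block: it places $s'$ at the \emph{midpoint of $x$} (the definition of $L(\cdot)_\ieven{}$ lets $M'$ choose its own insertion point for the start state, independent of $M$'s), absorbs the parity in the single $s'$-move --- an $\epsilon$-rule $s' \to \state{a_1\dots a_{n}qa_{n + 1}\dots a_{2n}}$ when $\strlen{x} = 2n$, a one-symbol rule $s'a_n \to \state{a_0\dots a_{n - 1}qa_{n + 1}\dots a_{2n}}$ when $\strlen{x} = 2n + 1$ --- and then consumes the remaining $2n$ symbols by a self-contained, middle-out, alternating chain of single-symbol moves through buffer states $\state{x'qy'}$, terminating in $\state{q\stl{}}$ or $\state{q\str{}}$ \emph{before} the inherited even-simulation begins. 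The block's surplus thus pairs with itself, not with the tail, and no interaction between launch rules and inherited rules has to be engineered. Until this chain is written down (and one checks that its first buffer state, the only state with outgoing moves in both directions, is reachable only as the target of the very first move, so that $L(M') \subseteq L(M')_\ieven{}$ survives), the construction is not complete.
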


\begin{proof}
    Let $M = (Q, \Sigma, R, s, F)$ be an \ietwgfa{}. Next, we construct an \ietwsfa{} $M' = (Q', \Sigma, R', s', F')$ satisfying the properties of Theorem \ref{theorem:ieven_ietwgfa_to_ietwsfa}. Let $\hat{M} = (\hat{Q}, \Sigma, \hat{R}, \hat{s}, \hat{F})$ be the $\epsilon$-free \ietwsfa{} constructed from $M$ by the technique described in the proof of Theorem \ref{theorem:ie2gfa_even_in_e-free_ie2sfa_even}. Recall that $L(\hat{M}) = L(\hat{M})_\even{} = L(M)_\even{}$ and $\state{q\stl{}}, \state{q\str{}} \in \hat{Q}$ for all $q \in Q$. Introduce a new symbol $s'$---the start state of $M'$. Set $\bar{Q} = \{\state{xqy} \mid q \in Q, x, y \in \Sigma^*, 1 \le \strlen{x} + \strlen{y} \le k - 1, \abs(\strlen{x} - \strlen{y}) \le 1\}$, where $k = \max\{\strlen{\lhs(r)} \mid r \in R\}$. Assume that $\hat{Q} \cap \bar{Q} \cap \{s'\}$. Set $Q' = (\hat{Q} \setminus \{\hat{s}\}) \cup \bar{Q} \cup \{s'\}$ and $F' = \hat{F} \setminus \{\hat{s}\}$. Initially, set $R' = \hat{R} \setminus \{a\hat{s} \to q, \hat{s}a \to q \mid a \in \Sigma\}$. Then, extend $R'$ in the following way:
    \begin{enumerate}
        \item for all $as \to q \in R$ and all $sa \to q \in R$, $a \in \Sigma \cup \{\epsilon\}$, $q \in Q$, add $s'a \to \state{q\stl{}}$ and $s'a \to \state{q\str{}}$ to $R'$;\label{theoremproof:ieven_ietwgfa_to_ietwsfa_step_1}
        \item for all $a_1\dots a_na_{n + 1}\dots a_{2n}s \to q \in R$ and all $sa_1\dots a_na_{n + 1}\dots a_{2n} \to q \in R$, $a_i \in \Sigma$, $1 \le i \le 2n$, $n \ge 1$, $q \in Q$, add
        \begin{align*}
            s' &\to \state{a_1\dots a_{n}qa_{n + 1}\dots a_{2n}}, \\
            a_n\state{a_1\dots a_{n}qa_{n + 1}\dots a_{2n}} &\to \state{a_1\dots a_{n - 1}qa_{n + 1}\dots a_{2n}}, \\
            \state{a_1\dots a_{n - 1}qa_{n + 1}\dots a_{2n}}a_{n + 1} &\to \state{a_1\dots a_{n - 1}qa_{n + 2}\dots a_{2n}}, \\
            a_{n - 1}\state{a_1\dots a_{n - 1}qa_{n + 2}\dots a_{2n}} &\to \state{a_1\dots a_{n - 2}qa_{n + 2}\dots a_{2n}}, \\
            &\vdotswithin{\to} \\
            \state{qa_{2n}}a_{2n} &\to \state{q\stl{}}, \\
            \state{a_1\dots a_{n}qa_{n + 1}\dots a_{2n}}a_{n + 1} &\to \state{a_1\dots a_{n}qa_{n + 2}\dots a_{2n}}, \\
            a_{n}\state{a_1\dots a_{n}qa_{n + 2}\dots a_{2n}} &\to \state{a_1\dots a_{n - 1}qa_{n + 2}\dots a_{2n}}, \\
            \state{a_1\dots a_{n - 1}qa_{n + 2}\dots a_{2n}}a_{n + 2} &\to \state{a_1\dots a_{n - 1}qa_{n + 3}\dots a_{2n}}, \\
            &\vdotswithin{\to} \\
            a_1\state{a_1q} &\to \state{q\str{}}
        \end{align*}
        to $R'$;\label{theoremproof:ieven_ietwgfa_to_ietwsfa_step_2}
        \item for all $a_0\dots a_na_{n + 1}\dots a_{2n}s \to q \in R$ and all $sa_0\dots a_na_{n + 1}\dots a_{2n} \to q \in R$, $a_i \in \Sigma$, $0 \le i \le 2n$, $n \ge 1$, $q \in Q$, add
        \begin{align*}
            s'a_n &\to \state{a_0\dots a_{n - 1}qa_{n + 1}\dots a_{2n}}, \\
            a_{n - 1}\state{a_0\dots a_{n - 1}qa_{n + 1}\dots a_{2n}} &\to \state{a_0\dots a_{n - 2}qa_{n + 1}\dots a_{2n}}, \\
            \state{a_0\dots a_{n - 2}qa_{n + 1}\dots a_{2n}}a_{n + 1} &\to \state{a_0\dots a_{n - 2}qa_{n + 2}\dots a_{2n}}, \\
            &\vdotswithin{\to} \\
            \state{qa_{2n}}a_{2n} &\to \state{q\stl{}}, \\
            \state{a_0\dots a_{n - 1}qa_{n + 1}\dots a_{2n}}a_{n + 1} &\to \state{a_0\dots a_{n - 1}qa_{n + 2}\dots a_{2n}}, \\
            a_{n - 1}\state{a_0\dots a_{n - 1}qa_{n + 2}\dots a_{2n}} &\to \state{a_0\dots a_{n - 2}qa_{n + 2}\dots a_{2n}}, \\
            &\vdotswithin{\to} \\
            a_0\state{a_0q} &\to \state{q\str{}}
        \end{align*}
        to $R'$.\label{theoremproof:ieven_ietwgfa_to_ietwsfa_step_3}
    \end{enumerate}

    \emph{Basic Idea.} $M'$ simulates any initialized even computation in $M$ by a sequence of moves, the first of which reads at most one symbol, while all the remaining moves read exactly one symbol at a~time and can, in fact, always be made in such a way that they form an even computation. To explain step \iref{theoremproof:ieven_ietwgfa_to_ietwsfa_step_1}, simply assume that $M$ performs the first move of an initialized even computation according to a rule of the form $as \to q$ or $sa \to q$. Then, this step introduces $s'a \to \state{q\stl{}}$ and $s'a \to \state{q\str{}}$ into $R'$. Clearly, by applying one of these rules, $M'$ simulates the first move of the initialized even computation in $M$. Notice that both of the newly introduced rules, $s'a \to \state{q\stl{}}$ and $s'a \to \state{q\str{}}$, are right because, by the definition of initialized even computation, there are no restrictions based on the direction of the first move of this computation. To explain step \iref{theoremproof:ieven_ietwgfa_to_ietwsfa_step_2}, assume that $M$ performs the first move of an initialized even computation according to a rule of the form $a_1\dots a_na_{n + 1}\dots a_{2n}s \to q$ or $sa_1\dots a_na_{n + 1}\dots a_{2n} \to q$. Consider the sequence of rules introduced into $R'$ in step \iref{theoremproof:ieven_ietwgfa_to_ietwsfa_step_2}. Observe that once $M'$ applies its first rule, it has to continue by applying the rules from this sequence until it reaches either state $\state{q\stl{}}$ or $\state{q\str{}}$. During this process, $M'$ reads the string $a_1\dots a_na_{n + 1}\dots a_{2n}$. Thus, it simulates the first move of the initialized even computation in $M$. Notice that the first rule, $s' \to \state{a_1\dots a_{n}qa_{n + 1}\dots a_{2n}}$, is an $\epsilon$-rule. This is because the sequence $a_1\dots a_na_{n + 1}\dots a_{2n}$ contains an even number of symbols, but any initialized even computation always consists of an odd number of moves. Step \iref{theoremproof:ieven_ietwgfa_to_ietwsfa_step_3} is analogous to step \iref{theoremproof:ieven_ietwgfa_to_ietwsfa_step_2}, except that the first rule of the introduced sequence is of the form $s'a_n \to \state{a_0\dots a_{n - 1}qa_{n + 1}\dots a_{2n}}$, as $a_0\dots a_na_{n + 1}\dots a_{2n}$ consists of an odd number of symbols. The rest of an initialized even computation in $M$, more precisely, its even part, is simulated by $M'$ in the same way as by $\hat{M}$ (for details see the proof of Theorem \ref{theorem:ie2gfa_even_in_e-free_ie2sfa_even}).

    Now, we establish $L(M')_\ieven{} = L(M)_\ieven{}$ formally. From the proof of Theorem~\ref{theorem:ie2gfa_even_in_e-free_ie2sfa_even}, it follows that for all $p, q \in Q$ and $u, v \in \Sigma^*$, $u\state{q\stl{}}v \Ra^*_\even{} \state{p\stl{}}$ or $u\state{q\str{}}v \Ra^*_\even{} \state{p\str{}}$ in $M'$, where $\state{p\stl{}}, \state{p\str{}} \in F'$, iff there is $f \in F$ such that $uqv \Ra^*_\even{} p \Ra^*_\even{} f$ in $M$. Then, according to steps \iref{theoremproof:ieven_ietwgfa_to_ietwsfa_step_1}, \iref{theoremproof:ieven_ietwgfa_to_ietwsfa_step_2}, and \iref{theoremproof:ieven_ietwgfa_to_ietwsfa_step_3} of the construction technique of $M'$, the following holds:
    \begin{enumerate}[label=(\roman*)]
        \item
        $us'av \Ra u\state{q\stl{}}v \Ra^*_\even{} \state{p\stl{}}$ or $us'av \Ra u\state{q\str{}}v \Ra^*_\even{} \state{p\str{}}$ in $M'$, where $\state{p\stl{}}, \state{p\str{}} \in F'$, iff there is $f \in F$ such that $uasv \Ra uqv \Ra^*_\even{} p \Ra^*_\even{} f$ or $usav \Ra uqv \Ra^*_\even{} p \Ra^*_\even{} f$ in $M$;
        
        \item
        for all $n \ge 1$,
        \begin{align*}
            ua_1\dots a_ns'a_{n + 1}\dots a_{2n}v &\Ra ua_1\dots a_n\state{a_1\dots a_{n}qa_{n + 1}\dots a_{2n}}a_{n + 1}\dots a_{2n}v \\
            &\Ra ua_1\dots a_{n - 1}\state{a_1\dots a_{n - 1}qa_{n + 1}\dots a_{2n}}a_{n + 1}\dots a_{2n}v \\
            &\Ra ua_1\dots a_{n - 1}\state{a_1\dots a_{n - 1}qa_{n + 2}\dots a_{2n}}a_{n + 2}\dots a_{2n}v \\
            &\Ra ua_1\dots a_{n - 2}\state{a_1\dots a_{n - 2}qa_{n + 2}\dots a_{2n}}a_{n + 2}\dots a_{2n}v \\
            &\vdotswithin{\to} \\
            &\Ra u\state{qa_{2n}}a_{2n}v \Ra u\state{q\stl{}}v \Ra^*_\even{} \state{p\stl{}}
        \end{align*}
        or
        \begin{align*}
            ua_1\dots a_ns'a_{n + 1}\dots a_{2n}v &\Ra ua_1\dots a_n\state{a_1\dots a_{n}qa_{n + 1}\dots a_{2n}}a_{n + 1}\dots a_{2n}v \\
            &\Ra ua_1\dots a_n\state{a_1\dots a_{n}qa_{n + 2}\dots a_{2n}}a_{n + 2}\dots a_{2n}v \\
            &\Ra ua_1\dots a_{n - 1}\state{a_1\dots a_{n - 1}qa_{n + 2}\dots a_{2n}}a_{n + 2}\dots a_{2n}v \\
            &\Ra ua_1\dots a_{n - 1}\state{a_1\dots a_{n - 1}qa_{n + 3}\dots a_{2n}}a_{n + 3}\dots a_{2n}v \\
            &\vdotswithin{\to} \\
            &\Ra ua_1\state{a_1q}v \Ra u\state{q\str{}}v \Ra^*_\even{} \state{p\str{}}
        \end{align*}
        in $M'$, where $\state{p\stl{}}, \state{p\str{}} \in F'$, iff there is $f \in F$ such that $$ua_1\dots a_{2n}sv \Ra uqv \Ra^*_\even{} p \Ra^*_\even{} f \text{ or } usa_1\dots a_{2n}v \Ra uqv \Ra^*_\even{} p \Ra^*_\even{} f$$ in $M$;
        
        \item
        for all $n \ge 1$,
        \begin{align*}
            ua_0\dots s'a_na_{n + 1}\dots a_{2n}v &\Ra ua_0\dots a_{n - 1}\state{a_0\dots a_{n - 1}qa_{n + 1}\dots a_{2n}}a_{n + 1}\dots a_{2n}v \\
            &\Ra ua_0\dots a_{n - 2}\state{a_0\dots a_{n - 2}qa_{n + 1}\dots a_{2n}}a_{n + 1}\dots a_{2n}v \\
            &\Ra ua_0\dots a_{n - 2}\state{a_0\dots a_{n - 2}qa_{n + 2}\dots a_{2n}}a_{n + 2}\dots a_{2n}v \\
            &\vdotswithin{\to} \\
            &\Ra u\state{qa_{2n}}a_{2n}v \Ra u\state{q\stl{}}v \Ra^*_\even{} \state{p\stl{}}
        \end{align*}
        or
        \begin{align*}
            ua_0\dots s'a_na_{n + 1}\dots a_{2n}v &\Ra ua_1\dots a_{n - 1}\state{a_0\dots a_{n - 1}qa_{n + 1}\dots a_{2n}}a_{n + 1}\dots a_{2n}v \\
            &\Ra ua_1\dots a_{n - 1}\state{a_0\dots a_{n - 1}qa_{n + 2}\dots a_{2n}}a_{n + 2}\dots a_{2n}v \\
            &\Ra ua_1\dots a_{n - 2}\state{a_0\dots a_{n - 2}qa_{n + 2}\dots a_{2n}}a_{n + 2}\dots a_{2n}v \\
            &\vdotswithin{\to} \\
            &\Ra ua_0\state{a_0q}v \Ra u\state{q\str{}}v \Ra^*_\even{} \state{p\str{}}
        \end{align*}
        in $M'$, where $\state{p\stl{}}, \state{p\str{}} \in F'$, iff there is $f \in F$ such that $$ua_0\dots a_{2n}sv \Ra uqv \Ra^*_\even{} p \Ra^*_\even{} f \text{ or } usa_0\dots a_{2n}v \Ra uqv \Ra^*_\even{} p \Ra^*_\even{} f$$ in $M$.
    \end{enumerate}
    Clearly, from states of the form $\state{q\stl{}}$ and $\state{q\str{}}$, where $q \in Q$, $M'$ can only make left and right moves, respectively. Thus, by the definition of initialized even computation, we can express the previous equivalences as follows:
    \begin{enumerate}[label=(\roman*)]
        \item
        $us'av \Ra^*_\ieven{} \state{p\stl{}}$ or $us'av \Ra^*_\ieven{} \state{p\str{}}$ in $M'$, where $\state{p\stl{}}, \state{p\str{}} \in F'$, iff there is $f \in F$ such that $uasv \Ra^*_\ieven{} p \Ra^*_\even{} f$ or $usav \Ra^*_\ieven{} p \Ra^*_\even{} f$ in $M$;
        
        \item
        for all $n \ge 1$, $$ua_1\dots a_ns'a_{n + 1}\dots a_{2n}v \Ra^*_\ieven{} \state{p\stl{}} \text{ or } ua_1\dots a_ns'a_{n + 1}\dots a_{2n}v \Ra^*_\ieven{} \state{p\str{}}$$ in $M'$, where $\state{p\stl{}}, \state{p\str{}} \in F'$, iff there is $f \in F$ such that $$ua_1\dots a_{2n}sv \Ra^*_\ieven{} p \Ra^*_\even{} f \text{ or } usa_1\dots a_{2n}v \Ra^*_\ieven{} p \Ra^*_\even{} f$$ in $M$;
        
        \item
        for all $n \ge 1$, $$ua_0\dots s'a_na_{n + 1}\dots a_{2n}v \Ra^*_\ieven{} \state{p\stl{}} \text{ or } ua_0\dots s'a_na_{n + 1}\dots a_{2n}v \Ra^*_\ieven{} \state{p\str{}}$$ in $M'$, where $\state{p\stl{}}, \state{p\str{}} \in F'$, iff there is $f \in F$ such that $$ua_0\dots a_{2n}sv \Ra^*_\ieven{} p \Ra^*_\even{} f \text{ or } usa_0\dots a_{2n}v \Ra^*_\ieven{} p \Ra^*_\even{} f$$ in $M$.
    \end{enumerate}
    Based on the above information, we can safely conclude that for all $w_1, w_2 \in \Sigma^*$ and $p \in Q$, $w_1s'w_2 \Ra^*_\ieven{} \state{p\stl{}}$ or $w_1s'w_2 \Ra^*_\ieven{} \state{p\str{}}$, where $\state{p\stl{}}, \state{p\str{}} \in F'$, iff there is $f \in F$ such that $w_1sw_2 \Ra^*_\ieven{} p \Ra^*_\even{} f$ in $M$. Hence, $L(M')_\ieven{} = L(M)_\ieven{}$.

    Obviously, $L(M')_\ieven{} \subseteq L(M')$. Observe, however, that by the construction of $M'$, there is no $w \in \Sigma^*$ such that $w \in L(M') \setminus L(M')_\ieven{}$, so $L(M') \subseteq L(M')_\ieven{}$. In addition, notice that $s'$ never occurs on the right side of any rule, all $\epsilon$-rules have $s'$ on their left-hand side. Therefore, Theorem \ref{theorem:ieven_ietwgfa_to_ietwsfa} holds.
\end{proof}

\begin{lemma}\label{lemma:ieven_ietwgfa_in_elg}
    For every \ietwgfa{} $M$, there is an \eling{} $G$ such that $L(G) = L(M)_\ieven{}$.
\end{lemma}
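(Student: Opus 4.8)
The plan is to reduce the problem to the specially structured automaton furnished by Theorem~\ref{theorem:ieven_ietwgfa_to_ietwsfa} and then to simulate its computations in reverse by an \eling{}. First I would apply Theorem~\ref{theorem:ieven_ietwgfa_to_ietwsfa} to $M$, obtaining an \ietwsfa{} $M' = (Q', \Sigma, R', s', F')$ with $L(M') = L(M')_\ieven{} = L(M)_\ieven{}$ in which $s'$ never occurs on a right-hand side and every $\epsilon$-rule has $s'$ on its left-hand side. The feature that makes this automaton convenient, inherited from the construction in the proof of Theorem~\ref{theorem:ie2gfa_even_in_e-free_ie2sfa_even}, is that each state other than $s'$ has all of its incoming moves reading a symbol in one direction and all of its outgoing moves reading a symbol in the opposite direction. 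Hence, apart from its single initial move, which reads at most one symbol, $M'$ performs only alternating single-symbol moves, so any two consecutive of them constitute one even pair that erases exactly one symbol immediately to the left and one immediately to the right of the current state---precisely the effect that an even linear rule reverses.

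Then I would construct $G = (N, T, P, S)$ with $N = (Q' \setminus \{s'\}) \cup \{S\}$, $T = \Sigma$, and $S$ a new start symbol, letting $P$ contain: (a) $S \to f$ for every $f \in F'$; (b) a pair rule $r \to a q b$ for all $a, b \in \Sigma$ and $p, q, r \in Q' \setminus \{s'\}$ such that either $aq \to p, pb \to r \in R'$ or $qb \to p, ap \to r \in R'$, each of these two-rule patterns being a two-move segment that carries state $q$ to state $r$ while erasing the symbol $a$ to its left and the symbol $b$ to its right; and (c) a terminal rule $q \to a$ for every $s'a \to q \in R'$ or $as' \to q \in R'$ with $a \in \Sigma$, together with $q \to \epsilon$ for every $\epsilon$-rule $s' \to q \in R'$, which reverse the initial move of $M'$. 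Every rule of $G$ whose right-hand side contains a nonterminal is either $r \to aqb$ with $\strlen{a} = \strlen{b} = 1$ or $S \to f$ with empty flanks, so $G$ is an \eling{} by definition.

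The correctness proof is a reverse-simulation induction in the style of Lemma~\ref{lemma:ietwgfa_in_lg}. The central claim, to be proved by induction on the number of derivation steps in one direction and on the number of moves in the other, is that for all $u, v \in \Sigma^*$ and $q, r \in Q' \setminus \{s'\}$,
\begin{equation*}
    r \Ra^* uqv \text{ in } G \iff uqv \Ra^*_\even{} r \text{ in } M'.
\end{equation*}
Here one pair rule of $G$ is matched with one even (two-move) segment of $M'$ and conversely, each induction step peeling off the outermost pair rule against the two moves that erase the extreme symbols of $u$ and $v$, exactly as pairs were removed in the proof of Theorem~\ref{theorem:ie2gfa_even_in_e-free_ie2sfa_even}; the locality of the move relation (adjacent symbols only) lets me strip the unaffected flanking symbols when invoking the induction hypothesis. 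Having this equivalence, I would instantiate it at the final states and reverse the initial move: combining $S \to f$ with a derivation $f \Ra^* w_1 q_0 w_2$ and a terminal rule applied to $q_0$ yields $S \Ra^* w_1 z w_2$, where $z \in \{\epsilon\} \cup \Sigma$ is exactly the string the initial $s'$-move reads and $w_1 z w_2$ is the accepted input. This gives $S \Ra^* w$ in $G$ iff $M'$ admits an initialized even computation from $s'$ accepting $w$, whence $L(G) = L(M')_\ieven{} = L(M)_\ieven{}$.

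The step I expect to be the main obstacle is showing that chaining pair rules in $G$ always corresponds to a genuinely alternating, and therefore even, computation of $M'$, and that conversely no such computation is overlooked. The resolution rests on the incoming/outgoing direction property of the states of $M'$ noted above: at every nonterminal $q$ occurring inside a derivation, the move that enters $q$ and the move that leaves $q$ are forced to read in opposite directions, so concatenating two consecutive-move segments can never break alternation, while the two-move granularity of the pair rules secures the even-length requirement automatically. The only remaining delicate point is the bookkeeping for the initial move, whose direction is unconstrained and which reads zero or one symbol; this is absorbed uniformly by the terminal rules, which place the read symbol (or $\epsilon$) at precisely the position occupied by the nonterminal to which the even part has reduced the computation.
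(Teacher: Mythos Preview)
Your overall plan is sound and the grammar you define is indeed an \eling{} that accepts $L(M)_\ieven{}$, but the key justification you give for the central equivalence is incorrect as stated. You assert that in the $M'$ produced by Theorem~\ref{theorem:ieven_ietwgfa_to_ietwsfa} every state other than $s'$ has all incoming moves in one direction and all outgoing moves in the opposite one, and you say this is ``inherited'' from Theorem~\ref{theorem:ie2gfa_even_in_e-free_ie2sfa_even}. That property holds for the automaton of Theorem~\ref{theorem:ie2gfa_even_in_e-free_ie2sfa_even}, but Theorem~\ref{theorem:ieven_ietwgfa_to_ietwsfa} enlarges that automaton by the states in $\bar{Q}$ used to simulate the initial move, and those states do \emph{not} have the property. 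For instance, in step~(2) of that construction the state $\state{a_1\dots a_{n-1}qa_{n+2}\dots a_{2n}}$ is entered by a right move in the first rule block and by a left move in the second, and it has outgoing moves in both directions as well. Consequently two consecutive pair rules of your grammar may correspond to move sequences like left--right followed by right--left, whose concatenation in $M'$ is \emph{not} alternating; so the only-if direction of your central claim, that $r \Ra^* uqv$ in $G$ forces $uqv \Ra^*_\even{} r$ in $M'$, does not follow from the argument you sketch.

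The repair is easy: drop $\even$ from the central claim and prove instead $r \Ra^* uqv$ in $G$ iff $uqv \Ra^* r$ in $M'$, which needs only that each pair rule corresponds to \emph{some} two moves of $M'$ and conversely. Then for $L(M') \subseteq L(G)$ you invoke $L(M') = L(M')_\ieven{}$ to pick an initialized even accepting computation, whose even part pairs up into your rules; for $L(G) \subseteq L(M')$ the plain $\Ra^*$ claim already suffices. By contrast, the paper avoids the detour through Theorem~\ref{theorem:ieven_ietwgfa_to_ietwsfa} altogether: it builds the \eling{} directly from the original \ietwgfa{} $M$, using direction-tagged nonterminals $\state{q\stl}$ and $\state{q\str}$ so that a single rule $\state{o\stl} \to x\state{q\stl}y$ (resp.\ $\state{o\str} \to x\state{q\str}y$) encodes a left--right (resp.\ right--left) pair of moves of $M$, and the tag enforces alternation across successive rules by construction. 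Your route trades that tagging trick for a preprocessing step; it works, but only once you stop relying on a direction property that $M'$ does not actually possess.
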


\begin{proof}
    Let $M = (Q, \Sigma, R, s, F)$ be an \ietwgfa{}. From $M$, we next construct an \eling{} $G = (N, T, P, S)$ such that $L(G) = L(M)_\ieven{}$. Introduce a new symbol $S$---the start nonterminal of $G$. Set $N' = \{\state{qd} \mid q \in Q, d \in \{\stl{}, \str{}\}\}$. Assume that $S \notin N'$. Set $N = N' \cup \{S\}$ and $T = \Sigma$. $P$ is then constructed as follows:
    \begin{enumerate}
        \item for all $f \in F$, add $S \to \state{f\stl{}}$ and $S \to \state{f\str{}}$ to $P$;\label{lemma:ieven_ietwgfa_in_elg_step_1}
        \item if $xs \to q \in R$ or $sx \to q \in R$, where $q \in Q$ and $x \in \Sigma^*$, add $\state{q\stl{}} \to x$ and $\state{q\str{}} \to x$ to $P$;\label{lemma:ieven_ietwgfa_in_elg_step_2}
        \item if $xq \to p, py \to o \in R$, where $o, p, q \in Q$, $x, y \in \Sigma^*$, and $\strlen{x} = \strlen{y}$, add $\state{o\stl{}} \to x\state{q\stl{}}y$ to $P$;\label{lemma:ieven_ietwgfa_in_elg_step_3}
        \item if $qy \to p, xp \to o \in R$, where $o, p, q \in Q$, $x, y \in \Sigma^*$, and $\strlen{x} = \strlen{y}$, add $\state{o\str{}} \to x\state{q\str{}}y$ to $P$;\label{lemma:ieven_ietwgfa_in_elg_step_4}
    \end{enumerate}

    \emph{Basic Idea.} $G$ simulates any initialized even computation of $M$ in reverse. It starts by generating a nonterminal of the form $\state{f\stl{}}$ or $\state{f\str{}}$, which corresponds to a~final state $f \in F$ (see step \iref{lemma:ieven_ietwgfa_in_elg_step_1}). After this initial derivation step, $G$ simulates every two-move even computation made by $M$ according to two consecutive rules of the form $xq \to p$ and $py \to o$, where $o, p, q \in Q$ and $x, y \in \Sigma^*$, by using a rule of the form $\state{o\stl{}} \to x\state{q\stl{}}y$ (see step \iref{lemma:ieven_ietwgfa_in_elg_step_3}). Notice that the first rule, $xq \to p$, is a left rule. Step \iref{lemma:ieven_ietwgfa_in_elg_step_4} is analogous to step \iref{lemma:ieven_ietwgfa_in_elg_step_3}, except that the first of the two consecutive rules is a right rule. As can be seen, the even part of an initialized even computation of $M$ is simulated in $G$ by a derivation over nonterminals of the form $\state{q\stl{}}$, where $q \in Q$, if it starts with a~left rule; otherwise, it is simulated by a derivation over nonterminals of the form $\state{q\str{}}$. The simulation process is completed by applying a rule of the form $\state{q\stl{}} \to x$ or $\state{q\str{}} \to x$, where $q \in Q$ and $x \in \Sigma^*$. Thus, the symbol sequence read by the first move of an initialized even computation of $M$ is generated, and a string of terminals in $G$ is obtained (see step~\iref{lemma:ieven_ietwgfa_in_elg_step_2}).

    Let us now establish $L(G) = L(M)_\ieven{}$ formally. We start by proving the following two equivalences.
    
    For all $u, v \in \Sigma^*$ and $p, q \in Q$, 
    \begin{equation}\label{claim:lemma:ieven_ietwgfa_in_elg_A}
        \state{q\stl{}} \Ra^* u\state{p\stl{}}v \text{ in } G \iff upv \Ra^*_\even{} q \text{ in } M,
    \end{equation}
    where $upv \Ra^*_\even{} q$ starts with a left move (unless it is an empty sequence of moves).

    First, we establish the \emph{only-if} part of equivalence \eqref{claim:lemma:ieven_ietwgfa_in_elg_A}. By induction on the number of derivation steps $i \ge 0$, we prove that $\state{q\stl{}} \Ra^i u\state{p\stl{}}v \text{ in } G$ implies $upv \Ra^*_\even{} q \text{ in } M$, where $upv \Ra^*_\even{} q$ starts with a left move (or consists of no moves). Let $i = 0$, so $\state{q\stl{}} \Ra^0 u\state{p\stl{}}v$ in $G$. Then, $q = p$ and $uv = \epsilon$. Since $q \Ra^0_\even{} q$ in $M$, the basis holds true. Assume that the implication holds for all derivations consisting of no more than $j$ steps, for some $j \in \mathbb{N}_0$. Consider any derivation of the form $\state{q\stl{}} \Ra^{j + 1} u\state{p\stl{}}v$ in $G$. Let this derivation start with the application of a rule of the form $$\state{q\stl{}} \to x\state{o\stl{}}y$$ from $P$, where $o \in Q$, $x, y \in \Sigma^*$, and $\strlen{x} = \strlen{y}$. Thus, we can express $\state{q\stl{}} \Ra^{j + 1} u\state{p\stl{}}v$ as $$\state{q\stl{}} \Ra x\state{o\stl{}}y \Ra^j xu'\state{p\stl{}}v'y$$ in $G$, where $xu' = u$ and $v'y = v$. By the induction hypothesis, $u'pv' \Ra^*_\even{} o$ in $M$, and this computation starts with a left move (or consists of no moves at all). Step~\iref{lemma:ieven_ietwgfa_in_elg_step_3} constructs $\state{q\stl{}} \to x\state{o\stl{}}y \in P$ from two consecutive rules $xo \to t, ty \to q \in R$, for some $t \in Q$, so $$xu'pv'y \Ra^*_\even{} xoy \Ra ty \Ra q$$ in $M$. Since $xu' = u$, $v'y = v$, and $\strlen{x} = \strlen{y}$, taking into account the properties of $u'pv' \Ra^*_\even{} o$, it follows that $upv \Ra^*_\even{} q$ in $M$. As we can see, $upv \Ra^*_\even{} q$ starts with a~left move. Thus, the induction step is completed.

    Next, we establish the \emph{if} part of equivalence \eqref{claim:lemma:ieven_ietwgfa_in_elg_A}. By induction on the number of moves $i \ge 0$, we prove that $upv \Ra^i_\even{} q$, which starts with a left move (or consists of no moves), in $M$ implies $\state{q\stl{}} \Ra^* u\state{p\stl{}}v \text{ in } G$. For $i = 0$, $upv \Ra^0_\even{} q$ occurs in $M$ only for $p = q$ and $uv = \epsilon$. In $G$, clearly, $\state{q\stl{}} \Ra^0 \state{q\stl{}}$. For $i = 1$, $upv \Ra^1_\even{} q$ never occurs in $M$ since, by Definition \ref{def:computational_restrictions}, every even computation is supposed to have an even number of moves; however, $upv \Ra^1_\even{} q$ has one move only. Thus, the basis holds true. Assume that the implication holds for all computations consisting of no more than $j$ moves, for some $j \in \mathbb{N}_0$. Let $upv \Ra^{j + 2}_\even{} q$ in $M$, and let this computation end with the application of two consecutive rules of the form $$xo \to t \text{ and } ty \to q$$ from $R$, where $o, t \in Q$, $x, y \in \Sigma^*$, and $\strlen{x} = \strlen{y}$. Express $upv \Ra^{j + 2}_\even{} q$ as $$xu'pv'y \Ra^j_\even{} xoy \Ra ty \Ra q$$ in $M$, where $xu' = u$ and $v'y = v$. Observe that $u'pv' \Ra^j_\even{} o$ starts with a left move (or consists of no moves at all). Thus, by the induction hypothesis, $\state{o\stl{}} \Ra^* u'\state{p\stl{}}v'$ in $G$. From $xo \to t, ty \to q \in R$, step \iref{lemma:ieven_ietwgfa_in_elg_step_3} constructs $\state{q\stl{}} \to x\state{o\stl{}}y \in P$, so $G$ can make $$\state{q\stl{}} \Ra x\state{o\stl{}}y \Ra^* xu'\state{p\stl{}}v'y.$$ Because $xu' = u$ and $v'y = v$, it follows that $\state{q\stl{}} \Ra^* u\state{p\stl{}}v$ in $G$. Thus, the induction step is completed, and equivalence \eqref{claim:lemma:ieven_ietwgfa_in_elg_A} holds.

    For all $u, v \in \Sigma^*$ and $p, q \in Q$,
    \begin{equation}\label{claim:lemma:ieven_ietwgfa_in_elg_B}
        \state{q\str{}} \Ra^* u\state{p\str{}}v \text{ in } G \iff upv \Ra^*_\even{} q \text{ in } M,
    \end{equation}
    where $upv \Ra^*_\even{} q$ starts with a right move (unless it is an empty sequence of moves).

    Equivalence \eqref{claim:lemma:ieven_ietwgfa_in_elg_B} can be proved analogously with the proof of equivalence \eqref{claim:lemma:ieven_ietwgfa_in_elg_A}.

    As a consequence of equivalences \eqref{claim:lemma:ieven_ietwgfa_in_elg_A} and \eqref{claim:lemma:ieven_ietwgfa_in_elg_B}, we obtain that for all $u, v \in \Sigma^*$ and $p, q \in Q$, $\state{q\stl{}} \Ra^* u\state{p\stl{}}v \text{ or } \state{q\str{}} \Ra^* u\state{p\str{}}v \text{ in } G \iff upv \Ra^*_\even{} q \text{ in } M$. Next, consider this equivalence for $p \in \{o \mid xs \to o \in R, o \in Q, x \in \Sigma^*\} \cup \{o \mid sx \to o \in R, o \in Q, x \in \Sigma^*\}$. As follows from the construction technique, $G$ starts every derivation by applying a~rule of the form $S \to \state{fd}$, where $f \in F$ and $d \in \{\stl{}, \str{}\}$, and ends it by applying a~rule of the form $\state{pd} \to x$, where $x \in \Sigma^*$, constructed from $xs \to p \in R$ or $sx \to p \in R$ by step \iref{lemma:ieven_ietwgfa_in_elg_step_2}. Consequently, $S \Ra \state{f\stl{}} \Ra^* u\state{p\stl{}}v \Ra uxv$ or $S \Ra \state{f\str{}} \Ra^* u\state{p\str{}}v \Ra uxv$ in $G$ iff $uxsv \Ra upv \Ra^*_\even{} f$ or $usxv \Ra upv \Ra^*_\even{} f$ in $M$. Hence, by the definition of initialized even computation, $S \Ra \state{f\stl{}} \Ra^* u\state{p\stl{}}v \Ra uxv$ or $S \Ra \state{f\str{}} \Ra^* u\state{p\str{}}v \Ra uxv$ in $G$ iff $uxsv \Ra^*_\ieven{} f$ or $usxv \Ra^*_\ieven{} f$ in $M$. As a result, $L(G) = L(M)_\ieven{}$, so Lemma \ref{lemma:ieven_ietwgfa_in_elg} holds.
\end{proof}

\begin{lemma}\label{lemma:elg_in_ieven_ietwgfa}
    For every \eling{} $G$, there is an \ietwgfa{} $M$ such that $L(M)_\ieven = L(G)$.
\end{lemma}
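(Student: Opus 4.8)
The plan is to invert the construction from the proof of Lemma~\ref{lemma:ieven_ietwgfa_in_elg}. Let $G = (N, T, P, S)$ be an \eling{}. From $G$ I would build an \ietwgfa{} $M = (Q, \Sigma, R, s, F)$ with $\Sigma = T$, a fresh start state $s \notin N$, and $F = \{S\}$. For every nonterminal rule $A \to xBy \in P$ (so $A, B \in N$, $x, y \in T^*$, $\strlen{x} = \strlen{y}$) I introduce a fresh intermediate state $\state{A \to xBy}$, exactly as in the proof of Lemma~\ref{lemma:lg_in_ietwgfa}, and I set $Q = N \cup \{s\} \cup \{\state{A \to xBy} \mid A \to xBy \in P\}$, with $N$, $\{s\}$, and the intermediate states assumed pairwise disjoint. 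The rule set $R$ is obtained as follows: for every terminal rule $A \to z \in P$ with $z \in T^*$, add $sz \to A$; and for every nonterminal rule $A \to xBy \in P$, add the left rule $xB \to \state{A \to xBy}$ together with the right rule $\state{A \to xBy}y \to A$.

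\emph{Basic idea.} $M$ simulates a derivation of $G$ in reverse by means of an initialized even computation. The single unrestricted first move uses a rule $sz \to A$ to read the terminal block $z$ produced by the concluding terminal rule of the derivation, thereby entering the state $A$ positioned exactly between the left and right contexts generated so far. After this, from any state of $N$ the only applicable moves are the left rules $xB \to \state{A \to xBy}$, and from any intermediate state the only applicable move is the matching right rule $\state{A \to xBy}y \to A$; hence every computation is forced into the pattern left, right, left, right, $\dots$, and each left-right pair reads first $x$ and then $y$ with $\strlen{x} = \strlen{y}$ because $G$ is even linear. Thus the portion of the computation following the first move is automatically an even computation, and one left-right pair undoes exactly one application of a rule $A \to xBy$. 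Erasing the tape and reaching the final state $S$ corresponds to returning to the start symbol of $G$.

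To argue correctness formally, I would establish, by inductions paralleling those of the three preceding lemmas, the equivalence that for all $u, v \in T^*$ and $A, B \in N$,
\[
    A \Ra^* uBv \text{ in } G \iff uBv \Ra^*_\even{} A \text{ in } M,
\]
where the even computation on the right, when nonempty, necessarily starts with a left move (and $\strlen{u} = \strlen{v}$, as $G$ is even linear). The only-if part is an induction on the number of derivation steps: a first rule $A \to xCy$ contributes the two rules $xC \to \state{A \to xCy}$ and $\state{A \to xCy}y \to A$, which extend the even computation given by the induction hypothesis for $C \Ra^* u'Bv'$ by one further left-right pair. The if part is an induction on the number of moves, advancing by two at a time since an even computation has even length and peeling off the leading left-right pair; as this pair shares the same intermediate state, it stems from a single rule of $P$. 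Specializing the equivalence to $A = S$ and appending a terminal rule $B \to z$ (which yields $sz \to B \in R$) gives $uszv \Ra uBv \Ra^*_\even{} S$, that is, $uszv \Ra^*_\ieven{} S$ in $M$, exactly when $S \Ra^* uBv \Ra uzv$ in $G$; since the word accepted along this computation is $uzv$, it follows that $L(M)_\ieven{} = L(G)$.

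The main obstacle is the if direction, namely ruling out any accepting initialized even computation of $M$ other than the intended reverse simulation. This reduces to two structural invariants that the construction is designed to guarantee: $s$ occurs only on left-hand sides and only in the initializing rules $sz \to A$, so the first move is forced to consume some $z$ to the right and enter a state of $N$; and each intermediate state $\state{A \to xBy}$ has a unique incoming left rule and a unique outgoing right rule, so the grouping of the remaining moves into grammar-rule applications is forced and the even condition $\strlen{x} = \strlen{y}$ can never be violated. With these observations in hand, the inductions become routine, mirroring those already carried out for Lemmas~\ref{lemma:ietwgfa_in_lg}, \ref{lemma:lg_in_ietwgfa}, and~\ref{lemma:ieven_ietwgfa_in_elg}.
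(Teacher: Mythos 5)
Your proposal is correct and follows essentially the same route as the paper: reuse the construction of Lemma~\ref{lemma:lg_in_ietwgfa} verbatim and observe that, because each intermediate state $\state{A \to xBy}$ has a unique incoming left rule and a unique outgoing right rule with $\strlen{x} = \strlen{y}$, every accepting computation is forced to be an initialized even computation. The only presentational difference is that the paper simply cites the already-established equivalence \eqref{claim:lg_in_ietwgfa} and upgrades $\Ra^*$ to $\Ra^*_\even$ by this structural observation, whereas you re-run the induction with the evenness built in; both are sound.
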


\begin{proof}
    Let $G = (N, T, P, S)$ be an \eling{} and $M = (Q, \Sigma, R, s, F)$ be an \ietwgfa{} constructed from $G$ using the technique described in the proof of Lemma \ref{lemma:lg_in_ietwgfa}. As follows from the proof of Lemma \ref{lemma:lg_in_ietwgfa}, $uszv \Ra uCv \Ra^* S \text{ in } M \iff S \Ra^* uCv \Ra uzv \text{ in } G$ for all $C \in N$ and $u, v, z \in T^*$. According to the technique used for the construction of $M$, for each $A \to xBy \in P$, where $A, B \in N$ and $x, y \in T^*$, there are two consecutive rules $xA \to \state{A \to xBy}, \state{A \to xBy}y \to B \in R$, which are always applied one immediately after the other in the given order. Thus, $uCv \Ra^* S$ consists of an even number of moves and is alternating, so $uCv \Ra^*_\alt{} S$. Furthermore, since $G$ is even, $\strlen{x} = \strlen{y}$ always holds; hence, $uCv \Ra^*_\alt{} S$ is an even computation, so $uCv \Ra^*_\even{} S$. Consequently, $uszv \Ra uCv \Ra^*_\even{} S \text{ in } M \iff S \Ra^* uCv \Ra uzv \text{ in } G$. Hence, by the definition of initialized even computation, $uszv \Ra^*_\ieven{} S \text{ in } M \iff S \Ra^* uCv \Ra uzv \text{ in } G$, so $L(M)_\ieven{} = L(G)$. Therefore, Lemma \ref{lemma:elg_in_ieven_ietwgfa} holds.
\end{proof}

\begin{theorem}
    $\lfam{\epsilon}{\ietwgfa{}}_\ieven = \lfam{\epsilon}{\eling{}}$.
\end{theorem}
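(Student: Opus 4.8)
The plan is to derive this identity directly from the two lemmas immediately preceding it, mirroring the way the earlier theorem $\lfam{\epsilon}{\ietwgfa{}} = \lfam{\epsilon}{\ling{}}$ was obtained from Lemmas \ref{lemma:ietwgfa_in_lg} and \ref{lemma:lg_in_ietwgfa}. Since the statement asserts an equality of language families, I would establish the two inclusions separately and then combine them.

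For the inclusion $\lfam{\epsilon}{\ietwgfa{}}_\ieven \subseteq \lfam{\epsilon}{\eling{}}$, I would take an arbitrary $L \in \lfam{\epsilon}{\ietwgfa{}}_\ieven$, so that $L = L(M)_\ieven$ for some \ietwgfa{} $M$; Lemma \ref{lemma:ieven_ietwgfa_in_elg} then immediately supplies an \eling{} $G$ with $L(G) = L(M)_\ieven = L$, whence $L \in \lfam{\epsilon}{\eling{}}$. For the reverse inclusion $\lfam{\epsilon}{\eling{}} \subseteq \lfam{\epsilon}{\ietwgfa{}}_\ieven$, I would take any $L = L(G) \in \lfam{\epsilon}{\eling{}}$ and apply Lemma \ref{lemma:elg_in_ieven_ietwgfa} to obtain an \ietwgfa{} $M$ with $L(M)_\ieven = L(G) = L$, so that $L \in \lfam{\epsilon}{\ietwgfa{}}_\ieven$. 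The two inclusions together yield the desired equality.

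At this point there is no genuine obstacle left, because the two lemmas carry the entire weight of the argument. All the substantive reasoning---the reverse-simulation constructions linking the initialized even computations of an \ietwgfa{} to the derivations of an \eling{}, and in particular the length-matching condition $\strlen{x} = \strlen{y}$ imposed in steps \iref{lemma:ieven_ietwgfa_in_elg_step_3} and \iref{lemma:ieven_ietwgfa_in_elg_step_4} that guarantees the constructed grammar is even---was already discharged in the proofs of Lemmas \ref{lemma:ieven_ietwgfa_in_elg} and \ref{lemma:elg_in_ieven_ietwgfa}.
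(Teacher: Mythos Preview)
Your proposal is correct and matches the paper's own proof essentially verbatim: the paper likewise derives the two inclusions directly from Lemma~\ref{lemma:ieven_ietwgfa_in_elg} and Lemma~\ref{lemma:elg_in_ieven_ietwgfa} and then concludes the identity. There is nothing to add.
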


\begin{proof}
    As $\lfam{\epsilon}{\ietwgfa{}}_\ieven \subseteq \lfam{\epsilon}{\eling{}}$ follows from Lemma \ref{lemma:ieven_ietwgfa_in_elg} and $\lfam{\epsilon}{\eling{}} \subseteq \lfam{\epsilon}{\ietwgfa{}}_\ieven$ from Lemma \ref{lemma:elg_in_ieven_ietwgfa}, the identity $\lfam{\epsilon}{\ietwgfa{}}_\ieven = \lfam{\epsilon}{\eling{}}$ clearly holds.
\end{proof}

\begin{theorem}
    $\lfam{\epsilon}{\ietwgfa{}}_\even{} \subset \lfam{\epsilon}{\ietwgfa{}}_\ieven{}$.
\end{theorem}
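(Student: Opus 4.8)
The plan is to prove the strict inclusion in two stages: first the non-strict inclusion $\lfam{\epsilon}{\ietwgfa{}}_\even{} \subseteq \lfam{\epsilon}{\ietwgfa{}}_\ieven{}$, and then strictness by exhibiting a separating witness language.

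For the inclusion, I would start from an arbitrary \ietwgfa{} $M = (Q, \Sigma, R, s, F)$ with $L(M)_\even{} = L$ and construct $M' = (Q \cup \{s'\}, \Sigma, R \cup \{s' \to s\}, s', F)$, where $s'$ is a fresh start state and $s' \to s$ is an $\epsilon$-rule. The idea is that every even computation $usv \Ra^*_\even{} f$ of $M$ becomes an initialized even computation of $M'$ by prepending the single symbol-free move $us'v \Ra usv$: here the first move reads nothing and the following even part coincides with the original computation. Since $s'$ occurs on no right-hand side and $s' \to s$ is the only rule applicable from $s'$, the first move of any accepting initialized even computation of $M'$ is forced to be exactly this $\epsilon$-move, and $s'$ can never be re-entered afterwards; hence the even part uses only rules of $R$. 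Therefore $us'v \Ra^*_\ieven{} f$ in $M'$ holds precisely when $usv \Ra^*_\even{} f$ in $M$, giving $L(M')_\ieven{} = L(M)_\even{} = L$ and thus the inclusion. This direction is routine.

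For strictness I would exploit the parity asymmetry between the two notions. In an even computation the moves pair up and each pair reads the same number of symbols, so the total number of erased symbols is even; this is exactly the content of Theorem \ref{theorem:ie2gfa_even_in_even}, which guarantees that every language in $\lfam{\epsilon}{\ietwgfa{}}_\even{}$ consists of even-length strings only. An initialized even computation, by contrast, permits one unrestricted initial move that may erase an odd number of symbols, so odd-length strings become available. I would take the singular language $\{a\}$ as the witness. By Theorem \ref{theorem:ie2gfa_even_in_even}, $\{a\} \notin \lfam{\epsilon}{\ietwgfa{}}_\even{}$ because $a$ has odd length. On the other hand $\{a\} \in \lfam{\epsilon}{\ietwgfa{}}_\ieven{}$: the \ietwgfa{} with the single rule $sa \to f$ and $F = \{f\}$ has $sa \Ra f$ as its only accepting computation, which is a one-move initialized even computation whose even part is empty, so its $\ieven{}$-language is $\{a\}$; alternatively, this follows from the identity $\lfam{\epsilon}{\ietwgfa{}}_\ieven{} = \lfam{\epsilon}{\eling{}}$ established above together with $\{a\} \in \lfam{}{\singlfam{}} \subset \lfam{}{\eling{}}$. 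Either way $\{a\}$ separates the two families, so the inclusion is proper.

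I do not anticipate a serious obstacle. The only delicate point is the correctness of the inclusion construction, namely ensuring that $M'$ accepts nothing outside $L(M)_\even{}$; this reduces entirely to the observation that $s'$ is a fresh symbol occurring on no right-hand side, which both forces the $\epsilon$-initial move and rules out any interference with the subsequent even part. The strictness argument is essentially a parity observation already packaged for us by Theorem \ref{theorem:ie2gfa_even_in_even}.
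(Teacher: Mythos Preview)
Your proposal is correct and follows essentially the same route as the paper: the inclusion is proved via the identical construction $M' = (Q \cup \{s'\}, \Sigma, R \cup \{s' \to s\}, s', F)$ with a fresh start state, and strictness is witnessed by the same language $\{a\}$ using Theorem~\ref{theorem:ie2gfa_even_in_even}. Your write-up is somewhat more detailed in justifying why $L(M')_\ieven{} = L(M)_\even{}$, but the argument is the same.
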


\begin{proof}
    First, we prove $\lfam{\epsilon}{\ietwgfa{}}_\even{} \subseteq \lfam{\epsilon}{\ietwgfa{}}_\ieven{}$. Consider any \ietwgfa{} $M = (Q, \Sigma, R, s, F)$. From $M$, construct the \ietwgfa{} $M' = (Q \cup \{s'\}, \Sigma, R \cup \{s' \to s\}, s', F)$, where $s' \notin Q$. Clearly, $L(M')_\ieven{} = L(M)_\even{}$, so $\lfam{\epsilon}{\ietwgfa{}}_\even{} \subseteq \lfam{\epsilon}{\ietwgfa{}}_\ieven{}$.
    
    To prove that $\lfam{\epsilon}{\ietwgfa{}}_\ieven{} \setminus \lfam{\epsilon}{\ietwgfa{}}_\even{} \neq \emptyset$, consider the language $\{a\}$. Clearly, $\{a\} \in \lfam{\epsilon}{\ietwgfa{}}_\ieven{}$. However, by Theorem \ref{theorem:ie2gfa_even_in_even}, $\{a\} \notin \lfam{\epsilon}{\ietwgfa{}}_\even{}$. Thus, $\lfam{\epsilon}{\ietwgfa{}}_\even{} \subset \lfam{\epsilon}{\ietwgfa{}}_\ieven{}$.
\end{proof}

\begin{theorem}\label{theorem:efree_ie2sfa_ieven_prop_in_odd_lfam}
    $\lfam{}{\ietwsfa{}}_\ieven{} \subset \lfam{}{\oddlfam{}}$.
\end{theorem}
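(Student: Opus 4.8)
The plan is to prove the two inclusions separately: first the set-theoretic inclusion $\lfam{}{\ietwsfa{}}_\ieven{} \subseteq \lfam{}{\oddlfam{}}$, and then its properness by exhibiting a language consisting of odd-length strings that lies outside $\lfam{}{\ietwsfa{}}_\ieven{}$.

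For the inclusion, let $M = (Q, \Sigma, R, s, F)$ be an $\epsilon$-free \ietwsfa{} and take any $w \in L(M)_\ieven{}$. Because $M$ is simple, every $r \in R$ satisfies $\strlen{\lhs(r)} \le 2$, and because $M$ is $\epsilon$-free, no rule satisfies $\strlen{\lhs(r)} = 1$; hence $\strlen{\lhs(r)} = 2$ for every $r \in R$, so each move of $M$ erases exactly one input symbol. By Definition~\ref{def:computational_restrictions}, an initialized even computation accepting $w$ has the form $usv \Ra \alpha \Ra^*_\even{} f$ with $f \in F$, that is, a single initial move followed by an even computation, and the latter consists of an even number of moves. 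The whole computation therefore performs an odd number of moves, each erasing one symbol; since acceptance requires $w$ to be erased in its entirety, $\strlen{w}$ is odd. Thus $L(M)_\ieven{}$ contains odd-length strings only, which establishes $\lfam{}{\ietwsfa{}}_\ieven{} \subseteq \lfam{}{\oddlfam{}}$.

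For properness, I would invoke the upper bound $\lfam{}{\ietwsfa{}}_\ieven{} \subseteq \lfam{}{\eling{}}$: every $\epsilon$-free \ietwsfa{} is a special \ietwgfa{}, so Lemma~\ref{lemma:ieven_ietwgfa_in_elg} makes each $L(M)_\ieven{}$ an even linear, hence context-free, language. It then suffices to choose a language all of whose strings have odd length but which is not context-free, such as $L = \{wcw \mid w \in \{a, b\}^*\}$. Indeed $\strlen{wcw} = 2\strlen{w} + 1$ is always odd, so $L \in \lfam{}{\oddlfam{}}$, whereas $L$ is not context-free by the standard pumping argument and therefore cannot belong to $\lfam{}{\eling{}} \supseteq \lfam{}{\ietwsfa{}}_\ieven{}$; consequently $L \notin \lfam{}{\ietwsfa{}}_\ieven{}$. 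Together with the inclusion, this yields $\lfam{}{\ietwsfa{}}_\ieven{} \subset \lfam{}{\oddlfam{}}$. I expect no genuine obstacle here: the inclusion is a parity count, and the properness reduces to Lemma~\ref{lemma:ieven_ietwgfa_in_elg} plus a non-context-free witness. The one point that must be handled carefully is the claim that in an $\epsilon$-free \ietwsfa{} every rule has $\strlen{\lhs(r)} = 2$, so that every move erases precisely one symbol, since the entire parity argument rests on it.
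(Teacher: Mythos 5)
Your proof is correct, and the core inclusion argument---an initialized even computation consists of an odd number of moves, each of which erases exactly one symbol in an $\epsilon$-free \ietwsfa{}---is exactly the paper's. The only difference is that you justify properness explicitly, via the upper bound from Lemma~\ref{lemma:ieven_ietwgfa_in_elg} and the non-context-free odd-length witness $\{wcw \mid w \in \{a,b\}^*\}$, whereas the paper leaves properness implicit (it is immediate in any case, since $\lfam{}{\oddlfam{}}$ contains arbitrary languages of odd-length strings, uncountably many of them, while only countably many can arise from \ietwsfa{}s); your more explicit route is sound and arguably more informative.
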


\begin{proof}
    According to Definition \ref{def:computational_restrictions}, each initialized even computation consists of an odd number of moves. Therefore, no $\epsilon$-free \ietwsfa{} can ever accept any even-length string in this way, since it always reads exactly one symbol per move. Consequently, each language in $\lfam{}{\ietwsfa{}}_\ieven{}$ consists of odd-length strings only, so $\lfam{}{\ietwsfa{}}_\ieven{} \subset \lfam{}{\oddlfam{}}$.
\end{proof}

\section{Input-Related Restrictions}\label{sec:input_related_restrictions}

This chapter studies input-related restrictions of \ietwgfa{}s. More specifically, it investigates the power of these automata working under the assumption that their input strings or their parts belong to some prescribed language families, such as the regular language family. Theorems \ref{theorem:inputABregular_results_in_linear} and \ref{theorem:theorem:inputsplitAregular_results_in_linear} show that regular-based input restrictions give rise to no increase in the power of \ietwgfa{}s. Theorems \ref{theorem:inputAregularBfinal_results_in_reg} and \ref{theorem:inputAregBregCreg_readingB_results_in_reg} demonstrate that regular-based input restrictions even lead to a decrease in the power to that of ordinary \fa{}s. These results are of some interest only when compared to the investigation of similar restrictions placed upon other rewriting systems, in which these restrictions give rise to a significant increase in their power. For instance, most selective grammars with regular-based selectors, which restrict the rewritten strings, are as strong as Turing machines (see Chapter 10 in \cite{REGULATED_REWRITING_DASSOW_2011} for a summary). In view of this increase in power in terms of other rewriting mechanisms, at a glance, we might hastily expect analogical results in terms of \ietwgfa{}s, but the present section demonstrates that this is not the case.

\begin{theorem}\label{theorem:inputABregular_results_in_linear}
    Let $M = (Q, \Sigma, R, s, F)$ be an $\epsilon$-free \ietwsfa{} and $A, B \subseteq \Sigma^*$ be regular. Then, there is an $\epsilon$-free \ietwsfa{} $M'$ such that $L(M') = \left\{uv \mid usv \Ra^* f \text{ in } M, f \in F, u \in A, v \in B\right\}$, so $L(M')$ is linear.
\end{theorem}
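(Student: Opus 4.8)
The plan is to obtain $M'$ as a product of $M$ with two finite-state verifiers, one checking $u\in A$ and one checking $v\in B$, by exploiting the fact that an $\epsilon$-free \ietwsfa{} scans its left and right parts in a fully predictable order. First I would record the crucial observation about how $M$ consumes its input. Starting from a configuration $usv$ with $u=u_1\cdots u_m$ and $v=v_1\cdots v_n$, every left rule $ap\to q$ erases the symbol immediately to the left of the state, so the left moves delete $u_m,u_{m-1},\dots,u_1$ in this order, i.e.\ they read $u$ reversed; symmetrically, the right moves delete $v_1,v_2,\dots,v_n$, i.e.\ they read $v$ from left to right. This holds no matter how the left and right moves are interleaved: after $k$ left moves the surviving left part is always the prefix $u_1\cdots u_{m-k}$ of $u$, and after $j$ right moves the surviving right part is the suffix $v_{j+1}\cdots v_n$ of $v$. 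When a computation accepts, the whole input is erased, so the left moves have read exactly $u^{R}$ and the right moves exactly $v$.

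Because the left moves read $u$ backwards, I would check $u\in A$ by running a complete DFA $D_1=(Q_1,\Sigma,\delta_1,s_1,F_1)$ for the reversal $A^{R}$ (regular, since regular languages are closed under reversal) on the symbols deleted by left moves; it ends in $\delta_1^{*}(s_1,u^{R})\in F_1$ iff $u^{R}\in A^{R}$ iff $u\in A$. Dually, a complete DFA $D_2=(Q_2,\Sigma,\delta_2,s_2,F_2)$ for $B$, fed the symbols deleted by right moves, accepts iff $v\in B$. I then set $Q'=Q\times Q_1\times Q_2$, $s'=\state{s,s_1,s_2}$, and $F'=\{\state{f,g_1,g_2}\mid f\in F,\ g_1\in F_1,\ g_2\in F_2\}$. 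For every right rule $pa\to q\in R$ and all $g_1\in Q_1,\ g_2\in Q_2$ I put $\state{p,g_1,g_2}a\to\state{q,g_1,\delta_2(g_2,a)}$ into $R'$, and for every left rule $ap\to q\in R$ I put $a\state{p,g_1,g_2}\to\state{q,\delta_1(g_1,a),g_2}$ into $R'$. Thus each move of $M'$ copies the corresponding move of $M$ while advancing exactly the one verifier that just consumed a symbol; since $M$ is $\epsilon$-free and simple, every rule of $M'$ reads exactly one symbol, so $M'$ is an $\epsilon$-free \ietwsfa{}.

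The correctness would be isolated in an invariant proved by induction on the number of moves, analogous to the equivalences \eqref{claim:ietwgfa_in_lg} and \eqref{claim:lemma:ieven_ietwgfa_in_elg_A} established earlier: for all $p,q\in Q$, $x,y\in\Sigma^{*}$, $g_1\in Q_1$, $g_2\in Q_2$,
\begin{equation*}
    x\state{p,g_1,g_2}y \Ra^* \state{q,h_1,h_2} \text{ in } M' \iff xpy \Ra^* q \text{ in } M,\ h_1=\delta_1^{*}(g_1,x^{R}),\ h_2=\delta_2^{*}(g_2,y).
\end{equation*}
The determinism of $D_1$ and $D_2$ makes $h_1,h_2$ functions of the data on the left, which is exactly what lets the two verifiers run independently despite the interleaving. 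Specializing to $p=s$, $g_1=s_1$, $g_2=s_2$ and to accepting configurations gives $xs'y\Ra^*\state{f,g_1,g_2}$ with $\state{f,g_1,g_2}\in F'$ iff $xsy\Ra^* f$ in $M$ for some $f\in F$, $x\in A$, and $y\in B$. Hence $L(M')=\{uv\mid usv\Ra^* f \text{ in } M,\ f\in F,\ u\in A,\ v\in B\}$, and since $M'$ is an $\epsilon$-free \ietwsfa{}, Theorem \ref{theorem:ietwgfa_=_ietwsfa_=_efree-ietwsfa} together with the identity $\lfam{\epsilon}{\ietwgfa{}}=\lfam{\epsilon}{\ling{}}$ yields $L(M')\in\lfam{}{\ietwsfa{}}=\lfam{\epsilon}{\ling{}}$, so $L(M')$ is linear.

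The step I expect to be the main obstacle is pinning down the invariant so that the reversal and the interleaving are handled at once: one must verify that in every interleaving the symbols consumed by left moves form $x^{R}$ (hence the need for $A^{R}$ rather than $A$) while those consumed by right moves form $y$, and that feeding these to $D_1$ and $D_2$ never interferes. Everything else---completing the DFAs so transitions are total, checking that $M'$ inherits $\epsilon$-freeness and simplicity, and the final appeal to the characterization of $\epsilon$-free \ietwsfa{}s by linear languages---is routine.
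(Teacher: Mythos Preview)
Your proposal is correct and follows essentially the same product-construction idea as the paper: track the state of $M$ together with the states of two finite-automaton verifiers, one for the left part $u$ and one for the right part $v$, updating the left verifier on left moves and the right verifier on right moves; the invariant you state is the exact analogue of the paper's equivalence \eqref{claim:theorem:inputABregular_results_in_linear}.

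The one technical difference is in how the left verifier is handled. The paper keeps an NFA $M_1$ for $A$ and runs it \emph{backwards}: the initial move $s'\to\state{sf_1s_2}$ nondeterministically guesses a final state $f_1\in F_1$, each left move undoes an $M_1$-transition, and acceptance requires the $M_1$-component to reach the start state $s_1$. You instead take a DFA for $A^{R}$ and run it \emph{forwards} on the reversed left part, starting deterministically from $\state{s,s_1,s_2}$. Your variant is slightly cleaner: it needs no extra start state and no initial $\epsilon$-rule, so the resulting $M'$ is manifestly $\epsilon$-free, whereas the paper's construction, as written, contains the $\epsilon$-rule $s'\to\state{sf_1s_2}$ despite the claim that $M'$ is $\epsilon$-free. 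Both routes are standard and equivalent; yours just makes the $\epsilon$-freeness immediate.
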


\begin{proof}
    Let $M = (Q, \Sigma, R, s, F)$ be an $\epsilon$-free \ietwsfa{} and $A, B \subseteq \Sigma^*$ be regular. Let $A = L(M_1)$ and $B = L(M_2)$, where $M_i = (Q_i, \Sigma, R_i, s_i, F_i)$ is an $\epsilon$-free \fa{} for all $i \in \{1, 2\}$. From $M$, $M_1$, and $M_2$, we construct an $\epsilon$-free \ietwsfa{} $M' = (Q', \Sigma, R', s', F')$ such that $L(M') = \{uv \mid usv \Ra^* f \text{ in } M, f \in F, u \in A, v \in B\}$. Introduce a new symbol $s'$---the start state of $M'$. Set $\hat{Q} = \{\state{qq_1q_2} \mid q \in Q, q_i \in Q_i, i \in \{1, 2\}\}$. Assume that $s' \notin \hat{Q}$. Set $Q' = \hat{Q} \cup \{s'\}$ and $F' = \{\state{fs_1f_2} \mid f \in F, f_2 \in F_2\}$. Initially, set $R' = \emptyset$. Then, extend $R'$ in the following way:
    \begin{enumerate}
        \item for all $f_1 \in F_1$, add $s' \to \state{sf_1s_2}$ to $R'$;\label{theoremproof:inputABregular_results_in_linear_step_1}
        \item if $ap \to q \in R$ and $q_1a \to p_1 \in R_1$, where $p, q \in Q$, $p_1, q_1 \in Q_1$, and $a \in \Sigma$, add $a\state{pp_1q_2} \to \state{qq_1q_2}$ to $R'$ for all $q_2 \in Q_2$.\label{theoremproof:inputABregular_results_in_linear_step_2}
        \item if $pa \to q \in R$ and $p_2a \to q_2 \in R_2$, where $p, q \in Q$, $p_2, q_2 \in Q_2$, and $a \in \Sigma$, add $\state{pq_1p_2}a \to \state{qq_1q_2}$ to $R'$ for all $q_1 \in Q_1$.\label{theoremproof:inputABregular_results_in_linear_step_3}
    \end{enumerate}

    \emph{Basic Idea.} $M'$, in effect, works in a two-directional way. To the right, it simulates a computation made by $M$ and, simultaneously, a computation made by $M_2$ (see step \iref{theoremproof:inputABregular_results_in_linear_step_3}). To the left, it simulates a computation made by $M$ and, simultaneously, a computation made by $M_1$ in reverse (see step \iref{theoremproof:inputABregular_results_in_linear_step_2}). Consider step \iref{theoremproof:inputABregular_results_in_linear_step_1} to see that $M'$ accepts its input if and only if all the three automata; $M$, $M_1$, and $M_2$; accept their inputs as well, so  $L(M') = \{uv \mid usv \Ra^* f \text{ in } M, f \in F, u \in A, v \in B\}$.

    Let us now establish $L(M') = \left\{uv \mid usv \Ra^* f \text{ in } M, f \in F, u \in A, v \in B\right\}$ formally. We start by proving the following equivalence.

    For all $u,v \in \Sigma^*$, $p, q \in Q$, $p_1, q_1 \in Q_1$, and $p_2, q_2 \in Q_2$,
    \begin{equation}\label{claim:theorem:inputABregular_results_in_linear}
        u\state{pp_1p_2}v \Ra^* \state{qq_1q_2} \text{ in } M' \iff upv \Ra^* q \text{ in } M, q_1u \Ra^* p_1 \text{ in } M_1,\text{and } p_2v \Ra^* q_2 \text{ in } M_2.
    \end{equation}

    First, we establish the \emph{only if} part of this equivalence. By induction on the number of moves $i \ge 0$, we prove that $u\state{pp_1p_2} v \Ra^i \state{qq_1q_2}$ in $M' \implies upv \Ra^* q$ in $M$, $q_1u \Ra^* p_1$ in $M_1$, and $p_2v \Ra^* q_2$ in $M_2$. Let $i = 0$, so $u\state{pp_1p_2}v \Ra^0 \state{qq_1q_2}$ in $M'$. Then, $p = q$, $p_1 = q_1$, $p_2 = q_2$, and $uv = \epsilon$. Clearly, $p \Ra^0 p$ in $M$, $p_1 \Ra^0 p_1$ in $M_1$, and $p_2 \Ra^0 p_2$ in $M_2$, so the basis holds true. Assume that the implication holds for all computations consisting of no more than $j$ moves in $M'$, for some $j \in \mathbb{N}_0$. Consider any computation of the form $u\state{pp_1p_2}v \Ra^{j + 1} \state{qq_1q_2}$ in $M'$. Let this computation start with the application of a rule of the form $$a\state{pp_1p_2} \to \state{oo_1p_2}$$ from $R'$, where $o \in Q$, $o_1 \in Q_1$, and $a \in \Sigma$. Thus, we can express $u\state{pp_1p_2}v \Ra^{j + 1} \state{qq_1q_2}$ as $$u'a\state{pp_1p_2}v \Ra u'\state{oo_1p_2}v \Ra^j \state{qq_1q_2}$$ in $M'$, where $u'a = u$. By the induction hypothesis, $u'ov \Ra^* q$ in $M$, $q_1u' \Ra^* o_1$ in $M_1$, and $p_2v \Ra^* q_2$ in $M_2$. Since step \iref{theoremproof:inputABregular_results_in_linear_step_2} constructs $a\state{pp_1p_2} \to \state{oo_1p_2} \in R'$ from $o_1a \to p_1 \in R_1$ and $ap \to o \in R$, $$u'apv \Ra u'ov \Ra^* q$$ in $M$ and $$q_1u'a \Ra^* o_1a \Ra p_1$$ in $M_1$. Because $u'a = u$, $upv \Ra^* q$ in $M$ and $q_1u \Ra^* p_1$ in $M_1$. In the case when the derivation $u\state{pp_1p_2}v \Ra^{j + 1} \state{qq_1q_2}$ in $M'$ starts with the application of a rule of the form $\state{pp_1p_2}a \to \state{op_1o_2}$ from $R'$, where $o \in Q$, $o_2 \in Q_2$, and $a \in \Sigma$, we can proceed analogously. Thus, the induction step is completed.

    Next, we establish the \emph{if} part of equivalence \eqref{claim:theorem:inputABregular_results_in_linear}, so we show that $upv \Ra^i q$ in $M$, $q_1u \Ra^j p_1$ in $M_1$, and $p_2v \Ra^k q_2$ in $M_2$, where $i = j + k$, implies $u\state{pp_1p_2} v \Ra^* \state{qq_1q_2}$ in $M'$ by induction on the number of moves $i \ge 0$. Let $i = 0$, so $j = 0$, $k = 0$, $upv \Ra^0 q$ in $M$, $q_1u \Ra^0 p_1$ in $M_1$, and $p_2v \Ra^0 q_2$ in $M_2$. Then, $p = q$, $p_1 = q_1$, $p_2 = q_2$, and $uv = \epsilon$. Since $\state{pp_1p_2} \Ra^0 \state{pp_1p_2}$ in $M'$, the basis holds true. Assume that the implication holds for all computations consisting of no more than $l$ moves in $M$, for some $l \in \mathbb{N}_0$. Consider any $upv \Ra^{l + 1} q$ in $M$, $q_1u \Ra^{m + 1} p_1$ in $M_1$, and $p_2v \Ra^{n} q_2$ in $M_2$, where $m + n = l$. Let $upv \Ra^{l + 1} q$ in $M$ start with the application of a rule of the form $$ap \to o$$ from $R$ and $q_1u \Ra^{m + 1} p_1$ in $M_1$ end with the application of a rule of the form $$o_1a \to p_1$$ from $R_1$, where $o \in Q$, $o_1 \in Q_1$, and $a \in \Sigma$. Now, express $upv \Ra^{l + 1} q$ as $$u'apv \Ra u'ov \Ra^l q$$ in $M$ and $q_1u \Ra^{m + 1} p_1$ as $$q_1u'a \Ra^m o_1a \Ra p_1$$ in $M_1$, where $u'a = u$. By the induction hypothesis, we have $u'\state{oo_1p_2}v \Ra^* \state{qq_1q_2}$ in $M'$. From $ap \to o \in R$ and $o_1a \to p_1 \in R_1$, step \iref{theoremproof:inputABregular_results_in_linear_step_2} constructs $a\state{pp_1p_2} \to \state{oo_1p_2} \in R'$. Thus, $M'$ makes $$u'a\state{pp_1p_2}v \Ra u'\state{oo_1p_2}v \Ra^* \state{qq_1q_2}.$$ Since $u'a = u$, $u\state{pp_1p_2}v \Ra^* \state{qq_1q_2}$ in $M'$. Next, consider any $upv \Ra^{l + 1} q$ in $M$, $q_1u \Ra^{m} p_1$ in $M_1$, and $p_2v \Ra^{n + 1} q_2$ in $M_2$, where $m + n = l$. Let $upv \Ra^{l + 1} q$ in $M$ start with the application of a rule of the form $pa \to o$ from $R$ and $p_2v \Ra^{n + 1} q_2$ in $M_2$ start with the application of a rule of the form $p_2a \to o_2$ from $R_2$, where $o \in Q$, $o_2 \in Q_2$, and $a \in \Sigma$. Then, proceed by analogy with the previous~case. Thus, the induction step is completed, and equivalence \eqref{claim:theorem:inputABregular_results_in_linear} holds.

    Consider equivalence \eqref{claim:theorem:inputABregular_results_in_linear} for $p = s$, $q_1 = s_1$, and $p_2 = s_2$. At this point, for all $u, v \in \Sigma^*$, $q \in Q$, $p_1 \in Q_1$, and $q_2 \in Q_2$, $u\state{sp_1s_2}v \Ra^* \state{qs_1q_2}$ in $M' \iff usv \Ra^* q$ in $M$, $s_1u \Ra^* p_1$ in $M_1$, and $s_2v \Ra^* q_2$ in $M_2$. As follows from the construction of $R'$, $M'$ starts every computation by applying a rule of the form $s' \to \state{sf_1s_2}$ with $f_1 \in F_1$. Consequently, $us'v \Ra u\state{sf_1s_2}v \Ra^* \state{qs_1q_2}$ in $M' \iff usv \Ra^* q$ in $M$, $s_1u \Ra^* f_1$ in $M_1$, and $s_2v \Ra^* q_2$ in $M_2$. Considering this equivalence for $q = f$ and $q_2 = f_2$, where $f \in F$ and $f_2 \in F_2$, we obtain that $us'v \Ra u\state{sf_1s_2}v \Ra^* \state{fs_1f_2}$ in $M' \iff usv \Ra^* f$ in $M$, $s_1u \Ra^* f_1$ in $M_1$, and $s_2v \Ra^* f_2$ in $M_2$. Recall that $F' = \{\state{fs_1f_2} \mid f \in F, f_2 \in F_2\}$. Therefore, $L(M') = \{uv \mid usv \Ra^* f \text{ in } M, f \in F, u \in L(M_1), v \in L(M_2)\}$. Since $L(M_1) = A$ and $L(M_2) = B$, $L(M') = \{uv \mid usv \Ra^* f \text{ in } M, f \in F, u \in A, v \in B\}$, so Theorem \ref{theorem:inputABregular_results_in_linear} holds.
\end{proof}

\begin{theorem}\label{theorem:theorem:inputsplitAregular_results_in_linear}
    Let $M = (Q, \Sigma, R, s, F)$ be an $\epsilon$-free \ietwsfa{} and $A \subseteq \Sigma^*$ be regular. Then, there is an \ietwsfa{} $M'$ satisfying $L(M') = \{uv \mid usv \Ra^* f  \text{ in } M, f \in F, uv \in A\}$, so $L(M')$ is linear.
\end{theorem}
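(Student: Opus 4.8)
The plan is to reuse the construction from the proof of Theorem~\ref{theorem:inputABregular_results_in_linear} almost verbatim, the one new ingredient being that the left and right regular checks must now be performed by a \emph{single} automaton whose two runs are glued together at a guessed boundary state. First I would fix an $\epsilon$-free \fa{} $M_A = (Q_A, \Sigma, R_A, s_A, F_A)$ with $L(M_A) = A$. In a successful computation $usv \Ra^* f$ the head of $M$ starts between $u$ and $v$, erasing $u$ to the left (thus scanning it right-to-left) and $v$ to the right (scanning it left-to-right). To test $uv \in A$, the automaton $M'$ guesses the state $q_A$ that $M_A$ would occupy at the $u{\mid}v$ boundary, reverse-simulates $M_A$ on $u$ from $q_A$ back towards $s_A$, and forward-simulates $M_A$ on $v$ from $q_A$ towards $F_A$; acceptance then certifies $s_A u \Ra^* q_A$ and $q_A v \Ra^* f_A$, i.e.\ $uv \in A$.

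Concretely, I would put $Q' = \{\state{q p_1 p_2} \mid q \in Q,\ p_1, p_2 \in Q_A\} \cup \{s'\}$ with a fresh start state $s'$, set $F' = \{\state{f s_A f_A} \mid f \in F,\ f_A \in F_A\}$, and build $R'$ by three rule schemes. For every $q_A \in Q_A$ add the initialising rule $s' \to \state{s q_A q_A}$, which simultaneously guesses the boundary state and installs it in both $M_A$-components. For every left rule $ap \to o \in R$ and every $p_1 a \to c \in R_A$ add $a\state{p c p_2} \to \state{o p_1 p_2}$ for all $p_2 \in Q_A$; for every right rule $pa \to o \in R$ and every $p_2 a \to c \in R_A$ add $\state{p p_1 p_2}a \to \state{o p_1 c}$ for all $p_1 \in Q_A$. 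As in Theorem~\ref{theorem:inputABregular_results_in_linear}, the first component tracks $M$, the second drives the reverse run of $M_A$ on the left part, and the third drives its forward run on the right part. Note that $M'$ is an \ietwsfa{} whose only $\epsilon$-rules are the initialising ones, which is why the statement does not claim $\epsilon$-freeness.

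The technical core is the equivalence, for all $u, v \in \Sigma^*$, $p, q \in Q$, and $p_1, p_2, q_1, q_2 \in Q_A$,
\begin{equation*}
    u\state{p p_1 p_2}v \Ra^* \state{q q_1 q_2} \text{ in } M' \iff upv \Ra^* q \text{ in } M,\ q_1 u \Ra^* p_1 \text{ in } M_A,\ \text{and } p_2 v \Ra^* q_2 \text{ in } M_A,
\end{equation*}
which is formally identical to equivalence~\eqref{claim:theorem:inputABregular_results_in_linear} after setting $M_1 = M_2 = M_A$; I would prove it by the same two inductions on the number of moves, and the arguments carry over unchanged. Specialising to the initialising rule ($p = s$, $p_1 = p_2 = q_A$) and to an accepting state ($q = f \in F$, $q_1 = s_A$, $q_2 = f_A \in F_A$) shows that $us'v \Ra^* \state{f s_A f_A}$ in $M'$ for some $q_A$ iff $usv \Ra^* f$ in $M$, $s_A u \Ra^* q_A$ in $M_A$, and $q_A v \Ra^* f_A$ in $M_A$. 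The last two conjuncts hold for some $q_A$ exactly when $s_A uv \Ra^* f_A$ in $M_A$, that is, when $uv \in A$. Hence $L(M') = \{uv \mid usv \Ra^* f \text{ in } M, f \in F, uv \in A\}$, and since every \ietwsfa{} is an \ietwgfa{}, Lemma~\ref{lemma:ietwgfa_in_lg} yields that $L(M')$ is linear.

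The step needing the most care is precisely this gluing: unlike Theorem~\ref{theorem:inputABregular_results_in_linear}, where $u$ and $v$ are tested against two independent regular languages, here the reverse run on $u$ and the forward run on $v$ must meet at the same state of one automaton. This is enforced by the initialising rule $s' \to \state{s q_A q_A}$ together with the choice of $F'$, which pins the left $M_A$-component to $s_A$ and the right one to $F_A$ at the end; checking that this coupling forces $M_A$ to accept the whole word $uv$ through the common boundary state $q_A$, rather than merely placing $u$ and $v$ in some regular sets separately, is the only genuinely new point relative to the previous proof.
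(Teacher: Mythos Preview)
Your proposal is correct and matches the paper's own proof essentially verbatim: the paper also fixes a single $\epsilon$-free \fa{} $\hat{M}$ for $A$, uses states $\state{q\hat{p}\hat{q}}$ with a fresh start $s'$, introduces the initialising $\epsilon$-rules $s' \to \state{s\hat{q}\hat{q}}$ for each $\hat{q}\in\hat{Q}$ to guess the boundary state, defines the left/right rules and the final-state set exactly as you do, and then defers the inductive equivalence to the proof of Theorem~\ref{theorem:inputABregular_results_in_linear}. Your identification of the gluing via the common boundary state as the only genuinely new point is precisely the paper's ``Basic Idea'' paragraph.
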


\begin{proof}
    Let $M = (Q, \Sigma, R, s, F)$ be an $\epsilon$-free \ietwsfa{}, and let $A \subseteq \Sigma^*$ be regular. Let $A = L(\hat{M})$, where $\hat{M} = (\hat{Q}, \Sigma, \hat{R}, \hat{s}, \hat{F})$ is an $\epsilon$-free \fa{}. From $M$ and $\hat{M}$, we next construct an \ietwsfa{} $M' = (Q', \Sigma, R', s', F')$ such that $L(M') = \{uv \mid usv \Ra^* f \text{ in } M, f \in F, uv \in A\}$. Introduce a new symbol $s'$---the start state of $M'$. Set $\bar{Q} = \{\state{q\hat{p}\hat{q}} \mid q \in Q, \hat{p}, \hat{q} \in \hat{Q}\}$. Assume that $s' \notin \bar{Q}$. Set $Q' = \bar{Q} \cup \{s'\}$ and $F' = \{\langle f\hat{s}\hat{f}\rangle \mid f \in F, \hat{f} \in \hat{F}\}$. Initially, set $R' = \emptyset$. Then, extend $R'$ as follows:
    \begin{enumerate}
        \item for all $\hat{q} \in \hat{Q}$, add $s' \to \state{s\hat{q}\hat{q}}$ to $R'$;\label{theoremproof:theorem:inputsplitAregular_results_in_linear_step_1}
        \item if $ap \to q \in R$ and $\hat{q}a \to \hat{p} \in \hat{R}$, where $p, q \in Q$, $\hat{p}, \hat{q} \in \hat{Q}$, and $a \in \Sigma$, add $a\state{p\hat{p}\hat{o}} \to \state{q\hat{q}\hat{o}}$ to $R'$ for all $\hat{o} \in \hat{Q}$;\label{theoremproof:theorem:inputsplitAregular_results_in_linear_step_2}
        \item if $pa \to q \in R$ and $\hat{p}a \to \hat{q} \in \hat{R}$, where $p, q \in Q$, $\hat{p}, \hat{q} \in \hat{Q}$, and $a \in \Sigma$, add $\state{p\hat{o}\hat{p}}a \to \state{q\hat{o}\hat{q}}$ to $R'$ for all $\hat{o} \in \hat{Q}$.\label{theoremproof:theorem:inputsplitAregular_results_in_linear_step_3}
    \end{enumerate}

    \emph{Basic Idea.} As can be seen, $M'$ works in a two-directional way. To the right, it simulates a~computation made by $M$ and, simultaneously, a computation made by $\hat{M}$ (see step \iref{theoremproof:theorem:inputsplitAregular_results_in_linear_step_3}). To the left, it simulates a computation made by $M$ and, simultaneously, a computation made by $\hat{M}$ in reverse (see step \iref{theoremproof:theorem:inputsplitAregular_results_in_linear_step_2}). Considering step \iref{theoremproof:theorem:inputsplitAregular_results_in_linear_step_1}, observe that $M'$ accepts its input if and only if both $M$ and $\hat{M}$ accept their inputs, too, so $L(M') = \{uv \mid usv \Ra^* f  \text{ in } M, f \in F, uv \in A\}$.

    Complete this proof by analogy with the proof of Theorem \ref{theorem:inputABregular_results_in_linear}.
\end{proof}

\begin{theorem}\label{theorem:inputAregularBfinal_results_in_reg}
    Let $M = (Q, \Sigma, R, s, F)$ be an $\epsilon$-free \ietwsfa{}, $A \subseteq \Sigma^*$ be finite, and $B \subseteq \Sigma^*$ be regular. Then, there exists an \fa{} $M'$ such that $$L(M') = \left\{uv \mid usv \Ra^* f \text{ in } M, f \in F, u \in A, v \in B \right\},$$ so $L(M')$ is regular.
\end{theorem}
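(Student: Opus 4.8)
The plan is to build a one-way \fa{} $M'$ directly, exploiting that $A$ is \emph{finite}. Write $B = L(\hat{M})$ for an $\epsilon$-free \fa{} $\hat{M} = (\hat{Q}, \Sigma, \hat{R}, \hat{s}, \hat{F})$ and put $N = \max\{\strlen{u} \mid u \in A\}$, which exists since $A$ is finite. The whole argument rests on one structural observation about $M$. As $M$ is an $\epsilon$-free \ietwsfa{}, every rule reads exactly one symbol, and in a configuration $xpy$ a left rule erases the last symbol of $x$ while a right rule erases the first symbol of $y$. Hence, along any computation $usv \Ra^* f$, the left part (initially $u$) and the right part (initially $v$) each only ever shrink, and both must be emptied to reach the bare final state $f$. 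Consequently such a computation makes exactly $\strlen{u} \le N$ left moves, which erase $u$ from right to left, and exactly $\strlen{v}$ right moves, which erase $v$ from left to right.

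This is precisely what lets a one-way device suffice. The right moves already scan $v$ from left to right, the way a \fa{} reads its tape, so they can be simulated on-line while the head advances through $v$; at the same time I would run $\hat{M}$ on these symbols to test $v \in B$. The at most $N$ left moves consume $u$ from the right, and since $u$ is bounded and is read in full before $v$, I would keep the still-unerased prefix of $u$ in the finite control and realize each left move as an $\epsilon$-move stripping its rightmost stored symbol. Concretely, $M'$ first guesses some $u \in A$ and verifies it by reading exactly $\strlen{u}$ symbols --- this is the one place where finiteness of $A$, rather than mere regularity, is indispensable, as otherwise the stored prefix would be unbounded --- and then works over states $\state{w\,p\,\hat{p}}$, where $w$ is the remaining prefix of $u$, $p \in Q$ the simulated state of $M$, and $\hat{p} \in \hat{Q}$ the simulated state of $\hat{M}$. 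A right rule $pa \to p' \in R$ together with $\hat{p}a \to \hat{p}' \in \hat{R}$ gives an $a$-transition $\state{w\,p\,\hat{p}} \to \state{w\,p'\,\hat{p}'}$, while a left rule $aq \to p' \in R$ gives an $\epsilon$-transition $\state{w'a\,q\,\hat{p}} \to \state{w'\,p'\,\hat{p}}$. The final states are $\state{\epsilon\,f\,\hat{f}}$ with $f \in F$ and $\hat{f} \in \hat{F}$: an empty stored prefix records that all of $u$ has been erased, an exhausted input tape records that all of $v$ has been erased, $f$ witnesses acceptance in $M$, and $\hat{f}$ certifies $v \in B$.

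Correctness would follow from a routine equivalence, proved by induction on the number of moves, between computations $usv \Ra^* f$ in $M$ and accepting runs of $M'$ on $uv$, along the lines of the proof of Theorem \ref{theorem:inputABregular_results_in_linear}; the monotonicity observation guarantees that the interleaving of left and right steps performed by $M$ is faithfully matched by $M'$ reading $u$ in full and then scanning $v$, with the left steps turned into $\epsilon$-moves over the stored copy of $u$. The genuine crux of the argument --- and the sole use of $A$ finite instead of regular --- is exactly this boundedness: the monotonicity of the two tape parts caps the number of left moves by $N$ and thereby licenses keeping the unerased prefix of $u$ in finite memory. Everything else, namely guessing the split $uv$, checking $u \in A$ and $v \in B$, and translating the rules, is standard; equivalently, since $A$ is finite, the target language is the finite union of the sets $\{u\}\{v \in B \mid usv \Ra^* f \text{ in } M,\ f \in F\}$ over $u \in A$, each regular, hence regular. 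As $M'$ is an ordinary \fa{}, $L(M')$ is regular, which proves the theorem.
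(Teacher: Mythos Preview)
Your proposal is correct and follows essentially the same construction as the paper: store the bounded prefix $u$ in the finite control, simulate left moves of $M$ as $\epsilon$-moves that strip the rightmost stored symbol, and simulate right moves by reading the tape while advancing $\hat{M}$ on $v$, with final states $\state{\epsilon f \hat{f}}$. Your explicit monotonicity observation and the closing finite-union remark are nice additions, but the core argument is the same as the paper's.
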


\begin{proof}
    Let $M = (Q, \Sigma, R, s, F)$ be an $\epsilon$-free \ietwsfa{}, $A \subseteq \Sigma^*$ be finite, and $B \subseteq \Sigma^*$ be regular. Next, we construct an \fa{} $M' = (Q', \Sigma, R', s', F')$ such that $L(M') = \{uv \mid usv \Ra^* f \text{ in } M, f \in F, u \in A, v \in B\}$. Let $n = \max\{\strlen{x} \mid x \in A\}$. Let $\hat{M} = (\hat{Q}, \Sigma, \hat{R}, \hat{s}, \hat{F})$ be an $\epsilon$-free \fa{} such that $L(\hat{M}) = B$. Set $Q' = \{\state{x}, \state{xq\hat{q}} \mid x \in \Sigma^*, 0 \le \strlen{x} \le n, q \in Q, \hat{q} \in \hat{Q}\}$, $s' = \state{\epsilon}$, and $F' = \{\langle f\hat{f} \rangle \mid f \in F, \hat{f} \in \hat{F}\}$. $R'$ is constructed in the following way:
    \begin{enumerate}
        \item for all $\state{x}, \state{xa} \in Q'$, where $x \in \Sigma^*$ and $a \in \Sigma$, add $\state{x}a \to \state{xa}$ to $R'$;\label{theoremproof:inputAregularBfinal_results_in_reg_step_1}
        \item for all $x \in A$, add $\state{x} \to \state{xs\hat{s}}$ to $R'$;\label{theoremproof:inputAregularBfinal_results_in_reg_step_2}
        \item if $ap \to q \in R$ and $\state{xap\hat{q}}, \state{xq\hat{q}} \in Q'$, where $p, q \in Q$, $\hat{q} \in \hat{Q}$, $a \in \Sigma$ and $x \in \Sigma^*$, add $\state{xap\hat{q}} \to \state{xq\hat{q}}$ to $R'$;\label{theoremproof:inputAregularBfinal_results_in_reg_step_3}
        \item if $pa \to q \in R$, $\hat{p}a \to \hat{q} \in \hat{R}$, and $\state{xp\hat{p}}, \state{xq\hat{q}} \in Q'$, where $q, p \in Q$, $\hat{q}, \hat{p} \in \hat{Q}$, and $x \in \Sigma^*$, add $\state{xp\hat{p}}a \to \state{xq\hat{q}}$ to $R'$.\label{theoremproof:inputAregularBfinal_results_in_reg_step_4}
    \end{enumerate}

    \emph{Basic Idea.} $M'$ starts by reading symbols from its input tape until they form a string from $B$ and records them into its current state (see step \iref{theoremproof:inputAregularBfinal_results_in_reg_step_1}). After this initial phase, $M'$ begins to simulate computations made by $M$ and $\hat{M}$ (see step \iref{theoremproof:inputAregularBfinal_results_in_reg_step_2}). Any left moves made by $M$ are simulated by $M'$ exclusively within its states by successively erasing symbols from the recorded string (see step \iref{theoremproof:inputAregularBfinal_results_in_reg_step_3}). Any right moves made by $M$ and, simultaneously, a~computation made by $\hat{M}$ are simulated by $M'$ by simply reading the remaining part of its input tape (see step \iref{theoremproof:inputAregularBfinal_results_in_reg_step_4}).

    Next, we demonstrate $L(M') = \{uv \mid usv \Ra^* f \text{ in } M, f \in F, u \in A, v \in B\}$ rigorously. We start by proving the following equivalence.

    For all $u, v \in \Sigma^*$, $p, q \in Q$, and $\hat{p}, \hat{q} \in \hat{Q}$ such that $0 \le \strlen{u} \le n$,
    \begin{equation}\label{claim:theorem:inputAregularBfinal_results_in_reg}
        \state{up\hat{p}}v \Ra^* \state{q\hat{q}} \text{ in } M' \iff upv \Ra^* q \text{ in } M \text{ and } \hat{p}v \Ra^* \hat{q} \text{ in } \hat{M}.
    \end{equation}

    First, we establish the \emph{only if} part of this equivalence. By induction on the number of moves $i \ge 0$, we show that $\state{up\hat{p}}v \Ra^i \state{q\hat{q}} \text{ in } M' \implies upv \Ra^* q \text{ in } M \text{ and } \hat{p}v \Ra^* \hat{q} \text{ in } \hat{M}$. Let $i = 0$, so $\state{up\hat{p}}v \Ra^0 \state{q\hat{q}} \text{ in } M'$. Then, $p = q$, $\hat{p} = \hat{q}$, and $uv = \epsilon$. Clearly, $p \Ra^0 p$ in $M$ and $\hat{p} \Ra^0 \hat{p}$ in $\hat{M}$, so the basis holds true. Assume that the implication holds for all computations consisting of no more than $j$ moves in $M'$, for some $j \in \mathbb{N}_0$. Consider any computation of the form $\state{up\hat{p}}v \Ra^{j + 1} \state{q\hat{q}} \text{ in } M'$. Let this computation start with the application of a rule of the form $$\state{u'ap\hat{p}} \to \state{u'o\hat{p}}$$ from $R'$, where $o \in Q$, $u'a = u$, and $a \in \Sigma$. Thus, we can express $\state{up\hat{p}}v \Ra^{j + 1} \state{q\hat{q}}$ as $$\state{u'ap\hat{p}}v \Ra \state{u'o\hat{p}}v \Ra^j \state{q\hat{q}}$$ in $M'$. Since $\state{u'o\hat{p}}v \Ra^j \state{q\hat{q}}$ in $M'$, by the induction hypothesis, $u'ov \Ra^* q$ in $M$ and $\hat{p}v \Ra^* \hat{q}$ in $\hat{M}$. Step \iref{theoremproof:inputAregularBfinal_results_in_reg_step_3} constructs $\state{u'ap\hat{p}} \to \state{u'o\hat{p}} \in R'$ from $ap \to o \in R$, so $$u'apv \Ra u'ov \Ra^* q$$ in $M$. Since $u'a = u$, we have $upv \Ra^* q$ in $M$. Next, suppose that the computation $\state{up\hat{p}}v \Ra^{j + 1} \state{q\hat{q}} \text{ in } M'$ starts with the application of a rule of the form $$\state{up\hat{p}}a \to \state{uo\hat{o}}$$ from $R'$, where $o \in Q$, $\hat{o} \in \hat{Q}$, and $a \in \Sigma$. Express $\state{up\hat{p}}v \Ra^{j + 1} \state{q\hat{q}}$ as $$\state{up\hat{p}}av' \Ra \state{uo\hat{o}}v' \Ra^j \state{q\hat{q}}$$ in $M'$, where $v'a = v$. By the induction hypothesis, $uov' \Ra^* q$ in $M$ and $\hat{o}v' \Ra^* \hat{q}$ in $\hat{M}$. Since step \iref{theoremproof:inputAregularBfinal_results_in_reg_step_4} constructs $\state{up\hat{p}}a \to \state{uo\hat{o}} \in R'$ from $pa \to o \in R$ and $\hat{p}a \to \hat{o} \in \hat{R}$, it follows that $$upav' \Ra uov' \Ra^* q$$ in $M$ and $$\hat{p}av' \Ra \hat{o}v' \Ra^* \hat{q}$$ in $\hat{M}$. Because $av' = v$, $upv \Ra^* q$ in $M$ and $\hat{p}v \Ra^* \hat{q}$ in $\hat{M}$.
    Thus, the induction step is completed.

    Now, we establish the \emph{if} part of equivalence \eqref{claim:theorem:inputAregularBfinal_results_in_reg}, so we show that $upv \Ra^i q \text{ in } M \text{ and } \hat{p}v \Ra^j \hat{q} \text{ in } \hat{M}$, where $j \le i$, implies $\state{up\hat{p}}v \Ra^* \state{q\hat{q}} \text{ in } M'$ by induction on the number of moves $i \ge 0$. Let $i = 0$, so $j = 0$, $pv \Ra^0 \text{ in } M$, and $\hat{p}v \Ra^0 \hat{q} \text{ in } \hat{M}$. Then, $p = q$, $\hat{p} = \hat{q}$, and $uv = \epsilon$. Since $\state{p\hat{p}} \Ra^0 \state{p\hat{p}}$ in $M'$, the basis holds true. Assume that the implication holds for all computations consisting of no more than $k$ moves in $M$, for some $k \in \mathbb{N}_0$. Consider any $upv \Ra^{k + 1} q \text{ in } M \text{ and } \hat{p}v \Ra^l \hat{q} \text{ in } \hat{M}$, where $l \le k$. Let $upv \Ra^{k + 1} q \text{ in } M$ start with the application of a rule of the form $$ap \to o$$ from $R$, where $o \in Q$ and $a \in \Sigma$. Then, express $upv \Ra^{k + 1} q$ as $$u'apv \Ra u'ov \Ra^k q$$ in $M$, where $u = u'a$. Since $u'ov \Ra^k q$ in $M$ and $\hat{p}v \Ra^l \hat{q} \text{ in } \hat{M}$, by the induction hypothesis, $\state{u'o\hat{p}}v \Ra^* \state{q\hat{q}}$ in $M'$. From $ap \to o \in R$, step \iref{theoremproof:inputAregularBfinal_results_in_reg_step_3} constructs $\state{u'ap\hat{p}} \to \state{u'o\hat{p}} \in R'$, so $$\state{u'ap\hat{p}}v \Ra \state{u'o\hat{p}}v \Ra^* \state{q\hat{q}}$$ in $M'$. Because $u'a = u$, $\state{up\hat{p}}v \Ra^* \state{q\hat{q}}$ in $M'$. Next, consider any $upv \Ra^{k + 1} q \text{ in } M \text{ and } \hat{p}v \Ra^{l + 1} \hat{q} \text{ in } \hat{M}$ with $l \le k$. Let $upv \Ra^{k + 1} q \text{ in } M$ start with the application of a rule of the form $$pa \to o$$ from $R$ and $\hat{p}v \Ra^{l + 1} \hat{q} \text{ in } \hat{M}$ start with the application of a rule of the form $$\hat{p}a \to \hat{o}$$ from $\hat{R}$, where $o \in Q$, $\hat{o} \in \hat{Q}$, and $a \in \Sigma$. Express $upv \Ra^{k + 1} q$ as $$upav' \Ra uov' \Ra^k q$$ in $M$ and $\hat{p}v \Ra^{l + 1} \hat{q}$ as $$\hat{p}av' \Ra \hat{o}v' \Ra^l \hat{q}$$ in $\hat{M}$, where $av' = v$. By the induction hypothesis, $\state{uo\hat{o}}v' \Ra^* \state{q\hat{q}}$ in $M'$. From $pa \to o \in R$ and $\hat{p}a \to \hat{o} \in \hat{R}$, step \iref{theoremproof:inputAregularBfinal_results_in_reg_step_4} constructs $\state{up\hat{p}}a \to \state{uo\hat{o}} \in R'$, so $$\state{up\hat{p}}av' \Ra \state{uo\hat{o}}v' \Ra^* \state{q\hat{q}}$$ in $M'$. Since $av' = v$, $\state{up\hat{p}}v \Ra^* \state{q\hat{q}}$ in $M'$. Thus, the induction step is completed, and equivalence \eqref{claim:theorem:inputAregularBfinal_results_in_reg} holds.

    Considering equivalence \eqref{claim:theorem:inputAregularBfinal_results_in_reg} for $p = s$ and $p' = s'$, we see that for all $u, v \in \Sigma^*$, $q \in Q$, and $\hat{q} \in \hat{Q}$ such that $0 \le \strlen{u} \le n$, $\state{us\hat{s}}v \Ra^* \state{q\hat{q}} \text{ in } M' \iff usv \Ra^* q \text{ in } M \text{ and } \hat{s}v \Ra^* \hat{q} \text{ in } \hat{M}$. As follows from the construction of $R'$, $M'$ starts every accepting computation by a sequence of moves of the form $\state{\epsilon} \Ra^* \state{x} \Ra \state{xs\hat{s}}$, where $x \in A$. Consequently, $\state{\epsilon}v \Ra^* \state{u}v \Ra \state{us\hat{s}}v \Ra^* \state{q\hat{q}} \text{ in } M' \iff usv \Ra^* q \text{ in } M \text{, } \hat{s}v \Ra^* \hat{q} \text{ in } \hat{M}$, and $u \in A$. Now, consider this equivalence for $q = f$ and $\hat{q} = \hat{f}$, where $f \in F$ and $\hat{f} \in \hat{F}$. That is, $\state{\epsilon}v \Ra^* \state{u}v \Ra \state{us\hat{s}}v \Ra^* \langle f\hat{f} \rangle \text{ in } M' \iff usv \Ra^* f \text{ in } M \text{, } \hat{s}v \Ra^* \hat{f} \text{ in } \hat{M}$, and $u \in A$. Since $F' = \{\langle f\hat{f} \rangle \mid f \in F, \hat{f} \in \hat{F}\}$, it follows that $L(M') = \{uv \mid usv \Ra^* f \text{ in } M, f \in F, u \in A, v \in L(\hat{M})\}$. Given that $L(\hat{M}) = B$, we have $L(M') = \{uv \mid usv \Ra^* f \text{ in } M, f \in F, u \in A, v \in B\}$. Therefore, Theorem \ref{theorem:inputAregularBfinal_results_in_reg} holds.
\end{proof}

\begin{theorem}\label{theorem:inputAregBregCreg_readingB_results_in_reg}
    Let $M = (Q, \Sigma, R, s, F)$ be an $\epsilon$-free \ietwsfa{}, and let $A, B, C \subseteq \Sigma^*$ be regular. Then, there exists an \fa{} $M'$ such that $L(M') = \{v \mid usvw \Ra^* f  \text{ in } M, f \in F, u \in A, v \in B, w \in C\}$, so $L(M')$ is regular.
\end{theorem}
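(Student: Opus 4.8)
The plan is to build $M'$ as a one-way \fa{} that keeps the middle part $v$ on its tape while \emph{guessing} the prefix $u$ and the suffix $w$ in its finite control, checking $u \in A$, $v \in B$, and $w \in C$ by running finite automata for these three regular languages alongside a state-by-state simulation of $M$. Fix $\epsilon$-free \fa{}s $M_A$, $M_B$, $M_C$ with $L(M_A) = A$, $L(M_B) = B$, $L(M_C) = C$. The observation that drives everything is that, because $M$ is an \ietwsfa{}, during any computation $usvw \Ra^* f$ the erased symbols always form a contiguous block whose left edge sweeps $u$ from right to left and whose right edge sweeps first $v$ and then $w$, both from left to right. Hence the symbols $M$ reads by its left moves spell $u$ in reverse, the symbols read by its right moves spell $v$ followed by $w$, and every right move erasing a symbol of $v$ precedes every right move erasing a symbol of $w$.

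This structure dictates a two-phase simulation. In phase~$1$, $M'$ simulates $M$ while its right moves erase symbols of $v$: each such right move consumes one symbol of $M'$'s input tape and advances $M_B$, whereas each left move of $M$ becomes an $\epsilon$-move of $M'$ that (exactly as in step \iref{theoremproof:inputABregular_results_in_linear_step_2} of the proof of Theorem \ref{theorem:inputABregular_results_in_linear}) consumes a guessed symbol of $u$ and advances $M_A$ \emph{in reverse}; accordingly $M'$ begins by initializing $M$ to $s$, $M_B$ to $s_B$, and guessing a final state of $M_A$, the aim being to drive $M_A$ back to $s_A$. When $v$ has been exhausted, $M'$ makes an $\epsilon$ phase-switch move, permitted only from states whose $M_B$-component is final (certifying $v \in B$) and resetting the $M_C$-component to $s_C$. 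In phase~$2$ all moves are $\epsilon$-moves: left moves keep consuming guessed symbols of $u$ (advancing $M_A$ in reverse), while right moves now consume guessed symbols of $w$ and advance $M_C$ forward. The final states of $M'$ are those in phase~$2$ carrying a final state $f$ of $M$, the $M_A$-component equal to $s_A$, and a final $M_C$-state.

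For correctness I would prove, by induction on the number of moves in the manner of equivalences \eqref{claim:theorem:inputABregular_results_in_linear} and \eqref{claim:theorem:inputAregularBfinal_results_in_reg}, two equivalences---one per phase---tying a computation segment of $M'$ to a computation segment of $M$ together with the corresponding (reverse) runs of $M_A$, $M_B$, $M_C$. Composing the phase-$1$ equivalence with the phase switch and the phase-$2$ equivalence then yields that $M'$ accepts $v$ iff there exist $u \in A$ and $w \in C$ with $usvw \Ra^* f$ for some $f \in F$. Two points make the design sound. First, since every phase-$2$ move is an $\epsilon$-move, an accepting run must already have consumed the entire input $v$ in phase~$1$, so the phase switch can legitimately occur only after all of $v$ is read; this is precisely what lets the $M_B$-final condition certify $v \in B$ and prevents any right move into $w$ from being simulated before $v$ is gone. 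Second, the final-state condition $M_A = s_A$ together with $M$ reaching $f$ and $M_C$ final guarantees that $u$, $v$, and $w$ are all completely erased, i.e.\ that $M$'s configuration is the bare state $f$.

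The main obstacle I expect is the bookkeeping of the three simultaneous runs across the phase boundary: arguing that the freely interleaved left excursions into the guessed prefix $u$ compose consistently with the reading of $v$ and, afterward, with the guessed suffix $w$, so that the reverse run of $M_A$ reaching $s_A$ genuinely certifies that the symbols erased by all left moves, read left to right, form a word of $A$. Once the per-phase inductive equivalences are stated with the correct reverse-direction convention for $M_A$ (the same one already used for the left-move simulation in Theorem \ref{theorem:inputABregular_results_in_linear}), the rest is routine, and the regularity of $L(M')$ is immediate because $M'$ is an \fa{}.
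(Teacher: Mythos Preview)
Your proposal is correct and coincides with the paper's own proof: the paper likewise builds a two-phase \fa{} whose states carry a tuple $\state{q\,q_1\,q_2\,q_3\,\mathit{phase}}$, initializes with $s$, a guessed final state of the $A$-automaton, and the start states of the $B$- and $C$-automata, simulates left moves of $M$ by $\epsilon$-moves that drive the $A$-automaton in reverse, simulates right moves in phase~$1$ as input-consuming moves advancing the $B$-automaton, performs an $\epsilon$ phase switch only from $B$-final states, and in phase~$2$ simulates right moves by $\epsilon$-moves advancing the $C$-automaton, accepting in states with $M$ final, the $A$-component back at its start, and the $C$-component final. The two per-phase inductive equivalences you outline are exactly the paper's equivalences \eqref{claim:theorem:inputAregBregCreg_readingB_results_in_reg_A} and \eqref{claim:theorem:inputAregBregCreg_readingB_results_in_reg_B}.
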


\begin{proof}
    Let $M = (Q, \Sigma, R, s, F)$ be an $\epsilon$-free \ietwsfa{} and $A, B, C \subseteq \Sigma^*$ be regular. Let $A = L(M_1)$, $B = L(M_2)$, and $C = L(M_3)$, where $M_i = (Q_i, \Sigma_i, R_i, s_i, F_i)$ is an $\epsilon$-free \fa{} for all $i \in \{1, 2, 3\}$. From $M$, $M_1$, $M_2$, and $M_3$, we construct an $\epsilon$-free \fa{} $M' = (Q', \Sigma, R', s', F')$ satisfying $L(M') = \left\{v \mid usvw \Ra^* f \text{ in } M, f \in F, u \in A, v \in B, w \in C\right\}$. Introduce a new symbol $s'$---the start state of $M'$. Set $\hat{Q} = \{\state{qq_1q_2s_31}, \state{qq_1f_2q_32} \mid q \in Q, q_i \in Q_i, i \in \{1, 2, 3\}, f_2 \in F_2\}$. Without any loss of generality, assume that $s' \notin \hat{Q}$. Set $Q' = \hat{Q} \cup \{s'\}$ and $F' = \{\state{fs_1f_2f_32} \mid f \in F, f_i \in F_i, i \in \{2, 3\}\}$. $R'$ is constructed as follows.
    \begin{enumerate}
        \item for all $f_1 \in F_1$, add $s' \to \state{sf_1s_2s_31}$ to $R'$;\label{theoremproof:inputAregBregCreg_readingB_results_in_reg_step_1}
        \item for all $q \in Q$, $q_1 \in Q_1$, $f_2 \in F_2$, add $\state{qq_1f_2s_31} \to \state{qq_1f_2s_32}$ to $R'$;\label{theoremproof:inputAregBregCreg_readingB_results_in_reg_step_2}
        \item if $ap \to q \in R$ and $q_1a \to p_1 \in R_1$, where $p, q \in Q$, $p_1, q_1 \in Q_1$, and $a \in \Sigma$, then add $\state{pp_1q_2s_31} \to \state{qq_1q_2s_31}$ and $\state{pp_1f_2q_32} \to \state{qq_1f_2q_32}$ to $R'$ for all $q_2 \in Q_2$, $q_3 \in Q_3$, and $f_2 \in F_2$;\label{theoremproof:inputAregBregCreg_readingB_results_in_reg_step_3}
        \item if $pa \to q \in R$ and $p_2a \to q_2 \in R_2$, where $p, q \in Q$, $p_2, q_2 \in Q_2$, and $a \in \Sigma$, then add $\state{pq_1p_2s_31}a \to \state{qq_1q_2s_31}$ to $R'$ for all $q_1 \in Q_1$;\label{theoremproof:inputAregBregCreg_readingB_results_in_reg_step_4}
        \item for each $pa \to q \in R$ and $p_3a \to q_3 \in R_3$, where $p, q \in Q$, $p_3, q_3 \in Q_3$, and $a \in \Sigma$, add $\state{pq_1f_2p_32} \to \state{qq_1f_2q_32}$ to $R'$ for all $q_1 \in Q_1$ and $f_2 \in F_2$.\label{theoremproof:inputAregBregCreg_readingB_results_in_reg_step_5}
    \end{enumerate}

    To establish $L(M') = \left\{v \mid usvw \Ra^* f  \text{ in } M, f \in F, u \in A, v \in B, w \in C\right\}$ formally, we first prove equivalences \eqref{claim:theorem:inputAregBregCreg_readingB_results_in_reg_A} and \eqref{claim:theorem:inputAregBregCreg_readingB_results_in_reg_B}, given next.

    For all $v \in \Sigma^*$, $p, q \in Q$, $p_1, q_1 \in Q_1$, and $p_2, q_2 \in Q_2$
    \begin{equation}\label{claim:theorem:inputAregBregCreg_readingB_results_in_reg_A}
        \state{pp_1p_2s_31}v \Ra^* \state{qq_1q_2s_31} \text{ in } M' \iff \text{there is $x \in \Sigma^*$ such that } \begin{cases}
            xpv \Ra^* q \text{ in } M, \\
            q_1x \Ra^* p_1 \text{ in } M_1, \text{ and} \\
            p_2v \Ra^* q_2 \text{ in } M_2.
        \end{cases}
    \end{equation}

    First, we establish the \emph{only if} part of equivalence \eqref{claim:theorem:inputAregBregCreg_readingB_results_in_reg_A}. By induction on the number of moves $i \ge 0$, we show that $\state{pp_1p_2s_31}v \Ra^* \state{qq_1q_2s_31} \text{ in } M'$ implies that there is $x \in \Sigma^*$ such that $xpv \Ra^* q \text{ in } M$, $q_1x \Ra^* p_1 \text{ in } M_1$, and $p_2v \Ra^* q_2 \text{ in } M_2$. Let $i = 0$, so $\state{pp_1p_2s_31}v \Ra^0 \state{qq_1q_2s_31} \text{ in } M'$. Then, $p = q$, $p_1 = q_1$, $p_2 = q_2$, and $v = \epsilon$. Clearly, $p \Ra^0 p$ in $M$, $p_1 \Ra^0 p_1$ in $M_1$, and $p_2 \Ra^0 p_2$ in $M_2$, so the basis holds true. Assume that the implication holds for all computations consisting of no more than $j$ moves in $M'$, for some $j \in \mathbb{N}_0$. Consider any computation of the form $\state{pp_1p_2s_31}v \Ra^{j + 1} \state{qq_1q_2s_31} \text{ in } M'$. Let this computation start with the application of a rule of the form $$\state{pp_1p_2s_31} \to \state{oo_1p_2s_31}$$ from $R'$, where $o \in Q$ and $o_1 \in Q_1$. Now, express $\state{pp_1p_2s_31}v \Ra^{j + 1} \state{qq_1q_2s_31}$ as $$\state{pp_1p_2s_31}v \Ra \state{oo_1p_2s_31}v \Ra^j \state{qq_1q_2s_31}$$ in $M'$. Since $\state{oo_1p_2s_31}v \Ra^j \state{qq_1q_2s_31}$ in $M'$, by the induction hypothesis, $x'ov \Ra^* q$ in $M$, $q_1x' \Ra^* o_1$ in $M_1$, and $p_2v \Ra^* q_2$ in $M_2$, for some $x' \in \Sigma^*$. Step \iref{theoremproof:inputAregBregCreg_readingB_results_in_reg_step_3} constructs $\state{pp_1p_2s_31} \to \state{oo_1p_2s_31} \in R'$ from $ap \to o \in R$ and $o_1a \to p_1 \in R_1$, for some $a \in \Sigma$, so $$x'apv \Ra x'ov \Ra^* q$$ in $M$ and $$q_1x'a \Ra^* o_1a \Ra p_1$$ in $M_1$. Hence, assuming that $x = x'a$, we have $xpv \Ra^* q$ in $M$ and $q_1x \Ra^* p_1$ in $M_1$. If the computation $\state{pp_1p_2s_31}v \Ra^{j + 1} \state{qq_1q_2s_31} \text{ in } M'$ starts with the application of a~rule of the form $\state{pp_1p_2s_31}a \to \state{op_1o_2s_31}$ from $R'$, where $o \in Q$, $o_2 \in Q_2$, and $a \in \Sigma$, proceed analogously. Thus, the induction step is completed.

    Next, we establish the \emph{if} part of the equivalence \eqref{claim:theorem:inputAregBregCreg_readingB_results_in_reg_A}. By induction on the number of moves $i \ge 0$, we prove that $xpv \Ra^i q \text{ in } M$, $q_1x \Ra^j p_1 \text{ in } M_1$, and $p_2v \Ra^k q_2 \text{ in } M_2$, where $j + k = i$, implies $\state{pp_1p_2s_31}v \Ra^* \state{qq_1q_2s_31} \text{ in } M'$. Let $i = 0$, so $j = 0$, $k = 0$, $xpv \Ra^0 q \text{ in } M$, $q_1x \Ra^0 p_1 \text{ in } M_1$, and $p_2v \Ra^0 q_2 \text{ in } M_2$. Then, $p = q$, $p_1 = q_1$, $p_2 = q_2$, and $xv = \epsilon$. Since $\state{pp_1p_2s_31}v \Ra^0 \state{pp_1p_2s_31} \text{ in } M'$, the basis holds true. Assume that the implication holds for all computations consisting of no more than $l$ moves in $M$, for some $l \in \mathbb{N}_0$. Consider any $xpv \Ra^{l + 1} q \text{ in } M$, $q_1x \Ra^{m + 1} p_1 \text{ in } M_1$, and $p_2v \Ra^{n} q_2 \text{ in } M_2$, where $m + n = l$. Let $xpv \Ra^{l + 1} q \text{ in } M$ start with the application of the form $$ap \to o$$ from $R$ and $q_1x \Ra^{m + 1} p_1 \text{ in } M_1$ end with the application of a rule of the form $$o_1a \to p_1$$ from $R_1$, where $o \in Q$, $o_1 \in Q_1$, and $a \in \Sigma$. Express $xpv \Ra^{l + 1} q$ as $$x'apv \Ra x'ov \Ra^l q$$ in $M$ and $q_1x \Ra^{m + 1} p_1$ as $$q_1x'a \Ra^m o_1a \Ra p_1$$ in $M_1$, where $x'a = x$. By the induction hypothesis, $\state{oo_1p_2s_31}v \Ra^* \state{qq_1q_2s_31}$ in $M'$. From $ap \to o \in R$ and $o_1a \to p_1 \in R_1$, step \iref{theoremproof:inputAregBregCreg_readingB_results_in_reg_step_3} constructs $\state{pp_1p_2s_31} \to \state{oo_1p_2s_31} \in R'$, so $$\state{pp_1p_2s_31}v \Ra \state{oo_1p_2s_31}v \Ra^* \state{qq_1q_2s_31}$$ in $M'$. Therefore, $\state{pp_1p_2s_31}v \Ra^* \state{qq_1q_2s_31}$ in $M'$. Next, consider any $xpv \Ra^{l + 1} q \text{ in } M$, $q_1x \Ra^{m} p_1 \text{ in } M_1$, and $p_2v \Ra^{n + 1} q_2 \text{ in } M_2$, where $m + n = l$. Let $xpv \Ra^{l + 1} q \text{ in } M$ start with the application of a rule of the form $pa \to o$ from $R$ and $p_2v \Ra^{n + 1} q_2 \text{ in } M_2$ start $p_2a \to o_2$ from $R_2$, where $o \in Q$, $o_2 \in Q_2$, and $a \in \Sigma$. Then, proceed by analogy with the previous case. Thus, the induction step is completed, and equivalence \eqref{claim:theorem:inputAregBregCreg_readingB_results_in_reg_A} holds.

    For all $q, o \in Q$, $q_1, o_1 \in Q_1$, $f_2 \in F_2$, and $q_3, o_3 \in Q_3$
    \begin{equation}\label{claim:theorem:inputAregBregCreg_readingB_results_in_reg_B}
        \state{qq_1f_2q_32} \Ra^* \state{oo_1f_2o_32} \text{ in } M' \iff \text{there is $y, w \in \Sigma^*$ such that } \begin{cases}
            yqw \Ra^* o \text{ in } M, \\
            o_1y \Ra^* q_1 \text{ in } M_1, \text{ and} \\
            q_3w \Ra^* o_3 \text{ in } M_3.
        \end{cases}
    \end{equation}

    Prove equivalence \eqref{claim:theorem:inputAregBregCreg_readingB_results_in_reg_B} by analogy with the proof of equivalence \eqref{claim:theorem:inputAregBregCreg_readingB_results_in_reg_A}.

    Observe that $M'$ starts every accepting computation by using a rule of the form $s' \to \state{sf_1s_2s_31}$, where $f_1 \in F_1$, and also applies a rule of the form $\state{qq_1f_2s_31} \to \state{qq_1f_2s_32}$, where $q \in Q$, $q_1 \in Q_1$, and $f_2 \in F_2$, at some point during each such computation (see steps~\iref{theoremproof:inputAregBregCreg_readingB_results_in_reg_step_1} and~\iref{theoremproof:inputAregBregCreg_readingB_results_in_reg_step_2}). From these observations and equivalences \eqref{claim:theorem:inputAregBregCreg_readingB_results_in_reg_A} and \eqref{claim:theorem:inputAregBregCreg_readingB_results_in_reg_B}, it follows that for all $q, o \in Q$, $q_1, o_1 \in Q_1$, $o_3 \in Q_3$, $f_1 \in F_1$, $f_2 \in F_2$, and $v \in \Sigma^*$, $s'v \Ra \state{sf_1s_2s_31}v \Ra^* \state{qq_1f_2s_31} \Ra \state{qq_1f_2s_32} \Ra^* \state{os_1f_2o_32}$ in $M'$ iff there is $u, w \in \Sigma^*$ such that $usvw \Ra^* o$ in $M$, $s_1u \Ra^* f_1$ in $M_1$, $s_2v \Ra^* f_2$ in $M_2$, and $s_3w \Ra^* o_3$ in $M_3$. Considering this equivalence for $o = f$ and $o_3 = f_3$, where $f \in F$ and $f_3 \in F_3$, we obtain that $s'v \Ra \state{sf_1s_2s_31}v \Ra^* \state{qq_1f_2s_31} \Ra \state{qq_1f_2s_32} \Ra^* \state{fs_1f_2f_32}$ in $M'$ iff there is $u, w \in \Sigma^*$ such that $usvw \Ra^* f$ in $M$, $s_1u \Ra^* f_1$ in $M_1$, $s_2v \Ra^* f_2$ in $M_2$, and $s_3w \Ra^* f_3$ in $M_3$. Recall that $F' = \{\state{fs_1f_2f_32} \mid f \in F, f_i \in F_i, i \in \{2, 3\}\}$. Hence, $L(M') = \{v \mid usvw \Ra^* f  \text{ in } M, f \in F, u \in L(M_1), v \in L(M_2), w \in L(M_3)\}$. As $L(M_1) = A$, $L(M_2) = B$, and $L(M_3) = C$, we have $L(M') = \{v \mid usvw \Ra^* f \text{ in } M, f \in F, u \in A, v \in B, w \in C\}$. Therefore, Theorem \ref{theorem:inputAregBregCreg_readingB_results_in_reg} holds.
\end{proof}

\section{Conclusion}\label{sec:conclusion}

The present paper has proposed new versions of two-way finite automata referred to as input-erasing two-way finite automata (see Section \ref{sec:definitions}). In essence, they perform their computation just like the classical versions of these automata (see \cite{FINITE_AUTOMATA_DECISION_PROBLEMS_RABIN_1959, TWO_WAY_FINITE_AUTOMATA_SHEPHERDSON_1959}) except that 
\begin{enumerate*}
    \item they erase the input symbols just like one-way finite automata do, and
    \item they start their computation at any position on the input tape.
\end{enumerate*}
Although this paper has established several fundamental results concerning these new automata and their restricted versions (see Sections \ref{sec:main_result} through \ref{sec:input_related_restrictions}), there still remain several open problem areas to study, including
\begin{enumerate}[label=(\roman*)]
    \item an investigation of more classical topics of automata theory, such as determinism and minimization;
    \item a further investigation of input-related restrictions, for instance, in terms of subregular language families;
    \item a conceptualization and investigation of input-erasing two-way finite automata in an alternative way by analogy with other modern concepts of automata, such as regulated and jumping versions (see \cite{REGULATED_GRAMMARS_AND_AUTOMATA_MEDUNA_2014, JUMPING_COMPUTATION_MEDUNA_2024}); and
    \item an introduction and investigation of other types of automata, such as pushdown automata (see \cite{INTRODUCTION_TO_AUTOMATA_THEORY_HOPCROFT_1979}), conceptualized by analogy with the input-erasing two-way finite automata given in the present paper.
\end{enumerate}

\subsection*{Acknowledgements}

This work was supported by the BUT grant FIT-S-23-8209.

\bibliographystyle{fundam}
\bibliography{references}

\end{document}